\newlength{\commentWidth}
\DeclareFontShape{OT1}{cmr}{bx}{sc}{<-> cmbcsc10}{}
\newcommand{\ones}{\mathds{1}}
\DeclareMathOperator{\dkshop}{DKSH}
\DeclareMathOperator{\mcdshop}{MCD}
\newcommand{\dksh}[1]{\dkshop^*_{#1}}
\newcommand{\mcdsh}[1]{\mcdshop^*_{#1}}
\newcommand{\MCDSH}[1]{\textsc{Multicld-Edge Densest-Sub-\(#1\)-hypergraph}\xspace }
\newcommand{\DKSH}[1]{\textsc{Densest-\(k\)-Sub-\(#1\)-hypergraph}\xspace }
\DeclareMathOperator{\lrds}{LRDS^*}
\DeclareMathOperator{\dks}{DKS^*}
\DeclareMathOperator{\E}{E}
\DeclareMathOperator{\NoSM}{NoSM_{\MU, \NU}}
\DeclareMathOperator{\NoSMs}{NoSM}
\newcommand{\LRDS}{\textsc{LR-Densest-Subgraph}\xspace }
\newcommand{\DKS}{\textsc{Densest-\(k\)-Subgraph}\xspace }
\newcommand{\BALf}[2]{\textsc{\(#1\)-\(#2\)-Balance}\xspace }
\newcommand{\BAL}{\BALf{\MU}{\NU}\xspace}
\newcommand{\MU}{\mu}
\newcommand{\NU}{\nu}
\newcommand{\BETA}{\beta}
\newcommand{\III}{\mathcal{I}}
\newcommand{\SSS}{\mathcal{S}}
\newcommand{\TTT}{\mathcal{T}}
\newcommand{\SSSet}{\Sigma}
\newcommand{\PPP}{\mathcal{P}}
\newcommand{\RRR}{\mathcal{R}}
\newcommand{\XXX}{\mathcal{X}}
\newcommand{\AAA}{\mathcal{A}}
\let\phi\varphi
\let\eps\epsilon
\newcommand{\argmax}{\mathtt{arg}\max}
\newtheorem{theorem}{Theorem}
\newtheorem{lemma}[theorem]{Lemma}
\newtheorem{corollary}[theorem]{Corollary}
\newtheorem{lemma-rstbl}[theorem]{Lemma}
\renewcommand\subparagraph{\@startsection{subparagraph}{5}{\z@}%
                                      {1.8ex \@plus1ex \@minus .2ex}%
                                      {-1em}%
                                      {\normalsize\bfseries}}
\title{Balancing spreads of influence in a social network}
\date{Gran Sasso Science Institute (GSSI), L'Aquila, Italy}
\author{Ruben Becker \and Federico Cor\`o \and Gianlorenzo D'Angelo \and Hugo Gilbert}
\begin{document}

\maketitle

\begin{abstract}
    The personalization of our news consumption on social media has a tendency to reinforce our pre-existing beliefs instead of balancing our opinions. This finding is a concern for the health of our democracies which rely on an access to information providing diverse viewpoints.
    To tackle this issue from a computational perspective, Garimella et al.~(NIPS'17) modeled the spread of these viewpoints, also called campaigns, using the well-known independent cascade model introduced by Kempe et al.~(KDD'03) and studied an optimization problem that aims at balancing information exposure in a social network when two opposing campaigns propagate in the network. The objective in their $\mathit{NP}$-hard optimization problem is to maximize the number of people that are exposed to either both or none of the viewpoints. For two different settings, one corresponding to a model where campaigns spread in a correlated manner, and a second one, where the two campaigns spread in a heterogeneous manner, Garimella et al.\ provide constant ratio approximation algorithms.
    In this paper, we investigate a more general formulation of this problem. 
    That is, we assume that $\MU$ different campaigns propagate in a social network and we aim to maximize the number of people that are exposed to either $\NU$ or none of the campaigns, where $\MU\ge\NU\ge2$. We provide dedicated approximation algorithms for both the correlated and heterogeneous settings. Interestingly, while the problem can still be approximated within a constant factor in the correlated setting for any constant values of $\MU$ and $\NU$, for the heterogeneous setting with $\NU\ge 3$, we give reductions leading to several approximation hardness results. Maybe most importantly, we obtain that the problem cannot be approximated within a factor of $n^{-g(n)}$ for any $g(n)=o(1)$ assuming the Gap-ETH hypothesis, denoting with $n$ the number of nodes in the social network.
    For \(\NU \ge 4\), we furthermore show a stronger hardness of approximation bound under a different condition, that is, if a certain class of one-way functions exists, then there is no $n^{-\epsilon}$-approximation algorithm  where $\epsilon > 0$ is a given constant which depends on \(\NU\).
    This complements our finding of an approximation algorithm for the heterogeneous case that for arbitrary $\MU$ and $\NU=3$ leads to an approximation ratio of order $n^{-1/2}$.
\end{abstract}

\section{Introduction}
%
%
%
One of the promises of a highly connected world is that of an impartial spread of opinions driven by free and unbiased sources of information. As a consequence, any opinion could have been equitably exposed to the wide public. On the contrary, the social network platforms that are currently governing news diffusion, while offering many seemingly-desirable features like searching, personalization, and recommendation, are reinforcing the centralization of information spreading and the creation of what is often termed {\em echo chambers} and {\em filter bubbles}~\cite{GarimellaDGM18}. 
Stated differently, algorithmic personalization of news diffusion are likely to create homogeneous polarized clusters where users get less exposure to conflicting viewpoints. A good illustration of this issue was given by Conover et al.\ \cite{ConoverRFGMF11} who studied the Twitter network during the 2010 US congressional midterm elections. The authors demonstrated that the retweet network had a highly segregated partisan structure with extremely limited connectivity between left-wing and right-wing users.
Consequently, instead of giving users a diverse perspective and balancing users opinions by exposing them to challenging ideas, social media platforms are likely to make users more extreme by only exposing them to views that reinforce their pre-existing beliefs \cite{ConoverRFGMF11,DelVicario554}. 

To address this issue from a computational perspective, Garimella et al.\ \cite{DBLP:conf/nips/GarimellaGPT17} introduced the problem of {\em balancing information exposure} in a social network. 
Following the {\em influence maximization paradigm} going back to the seminal work of Kempe et al.\ \cite{Kempe2003maximizing,DBLP:journals/toc/KempeKT15}, their problem involves two opposing viewpoints or campaigns that propagate in a social network following the {\em independent cascade model}. 
Given initial seed sets for both campaigns, a centralized agent is then responsible for selecting a small number of additional seed users for each campaign in order to maximize the number of users that are reached by either both or none of the campaigns. The authors study this problem in two different settings, namely the \emph{heterogeneous} and \emph{correlated} settings. 
The heterogeneous setting corresponds to the general case in which there is no restriction on the probabilities with which the campaigns propagate. Contrarily, in the correlated setting, the probability distributions for different campaigns are identical and completely correlated.
After proving that the optimization problem of balancing information exposure is $\mathit{NP}$-hard, the authors designed efficient approximation algorithms with an approximation ratio of $(1-1/e -\epsilon)/2$ for both settings.
 
\subparagraph*{Our Contribution.} 
We address the main open problem in~\cite{DBLP:conf/nips/GarimellaGPT17}, that is we generalize their optimization problem to a setting with possibly more than two campaigns. More precisely, let $\MU$ and $\NU$ be fixed constants such that $2\le \NU \le \MU$. In our general problem, there are $\MU$ opposing campaigns and the task is to maximize the number of nodes in the network that are reached by at least $\NU$ campaigns or remain oblivious to all of them. We term this problem the $\BALf{\MU}{\NU}$ problem. Interestingly, we obtain results that surprisingly differ from the ones of Garimella et al.\ \cite{DBLP:conf/nips/GarimellaGPT17}. Indeed, while we show that any \BALf{\MU}{\NU} problem can be approximated within a constant factor in the correlated setting (Section~\ref{section: correlated}), we obtain strong approximation hardness results in the heterogeneous setting. In particular, when $\NU\ge 3$, we show that under the Gap Exponential Time Hypothesis \cite{manurangsi2017almost}, there is no $n^{-g(n)}$-approximation algorithm with $g(n)=o(1)$ for the \BALf{\MU}{\NU} problem where $n$ is the number of nodes. Moreover, when \(\NU\ge 4\), we show that if a certain class of one-way functions exists \cite{applebaum2013pseudorandom}, there is no $n^{-\epsilon}$-approximation algorithm for the \BALf{\MU}{\NU} problem where $\epsilon > 0$ is a constant which depends on \(\NU\) (Section~\ref{section: hardness}). We mitigate these hardness results by designing an algorithm with an approximation factor of $\Omega(n^{-1/2})$ for the case where $\NU=3$ and $\MU$ is an arbitrary constant (Section~\ref{section: heterogeneous}).

\subparagraph*{Related work.}
There is a large literature on influence maximization, we refer the interested reader to~\cite{Borgs2014maximizing,DBLP:journals/toc/KempeKT15} and references therein. Here we focus on the literature about multiple campaigns running simultaneously on the same network.
Budak et al.~\cite{budak2011limiting} studied the problem of limiting as much as possible the spread of a ``bad'' campaign by starting the spreading of another ``good'' campaign that blocks the first one. The two campaigns \emph{compete} on the nodes that they reach: once a node becomes active in one campaign it cannot change campaign. They prove that the objective function is monotone and submodular and hence they obtain a constant approximation ratio.
Similar concepts of \emph{competing cascades} in which a node can only participate in one campaign have been studied in several works~\cite{apt2011diffusion,bharathi2007competitive,carnes2007maximizing,Dubey2006competing,kostka2008word,lu2015competition,myers2012clash}. Game theoretic aspects like the existence of Nash equilibria have been also investigated in this case~\cite{alon2010note,goyal2014competitive,tzoumas2012game}.
Borodin et al.~\cite{borodin2017strategyproof} consider the problem of controlling the spread of multiple campaigns by a centralized authority. Each campaign has its own objective function to maximize associated with its spread and the aim of a central authority is to maximize the social welfare defined as the sum of the selfish objective function of each campaign. They propose a truthful mechanism to achieve theoretical guarantees on the social welfare.

Two other works closely related to ours are the ones of 
Aslay et al.~\cite{aslay2018maximizing} and Matakos et al.~\cite{matakos2018tell}. The former work tackles an item-aware information propagation problem in which a centralized agent must recommend some articles to a small set of seed users such that the spread of these articles maximizes the expected diversity of exposure of the agents. The diversity exposure is measured by a sum of agent-dependent functions that takes into account user leanings. The authors show that the $\mathit{NP}$-hard problem they define  amounts to optimizing a monotone and submodular function under a matroid constraint and design a constant factor approximation algorithm. The latter paper models the problem of maximizing the diversity of exposure in a social network as a quadratic knapsack problem. Here also the problem amounts to recommending a set of articles to some users in order to maximize a diversity index taking into account users' leanings and the strength of their connections in the social network. The authors show that the resulting diversity maximization problem is inapproximable and design a polynomial algorithm without an approximation guarantee.

\section{Preliminaries}\label{section: preliminaries}
\subsection{Independent Cascade model}
We introduce the well-known \emph{Independent Cascade model}. We mostly follow the terminology and notation from Kempe et al.~\cite{DBLP:journals/toc/KempeKT15}.
Given a directed graph \(G=(V, E)\), probabilities \(p:E\rightarrow [0,1]\) and an initial node set \(A\subseteq V\) called a set of \emph{seed nodes}. Define \(A_0=A\). For \(t\ge 0\), we call a node \(v\in A_t\) \emph{active} at time \(t\). If a node \(v\) is active at time \(t\ge 0\) but was not active at time \(t-1\), i.e., \(v\in A_t\setminus A_{t-1}\) (formally let \(A_{-1}=\emptyset\)), it tries to activate each neighbor $w$, independently, with a probability of success equal to \(p_{vw}\). In case of success \(w\) becomes active at step \(t+1\), i.e., \(w\in A_{t+1}\).
If at some time \(t^*\ge 0\), we have that \(A_{t^*}=A_{t^*+1}\) we say that the process has \emph{quiesced} and call \(t^*\) the time of quiescence. For an initial set \(A\), we denote with \(\sigma(A) = \E[|A_{t^*}|]\) the expected number of nodes activated at the time of quiescence when running the process with seed nodes \(A\).
Kempe et al.\ showed that this process is equivalent to what is referred to as the \emph{Triggering Model}, see~\cite[Proof of Theorem 4.5]{DBLP:journals/toc/KempeKT15}.
For a node \(v\in V\), let \(N_v\) denote all in-neighbors of \(v\). Here, every node independently picks a \emph{triggering set} \(T_v\subseteq N_v\) according to a distribution over subsets of its in-neighbors, namely \(T_v=S\) with probability \(\prod_{u\in S}p_{uv}\cdot\prod_{u\in N_v\setminus S}(1-p_{uv})\).
For a possible outcome \(X=(T_v)_{v\in V}\) of triggering sets for the nodes \(V\), let \(\rho_X(A)\) be the set of nodes reachable from \(A\) in the outcome \(X\). Note that after sampling \(X\), the quantity \(\rho_X(A)\) is deterministic.
According to Kempe et al.\ \cite{DBLP:journals/toc/KempeKT15}, this model is equivalent to the Independent Cascade model and it holds that \(\sigma(A)=\E_X[|\rho_X(A)|]\), where the expectation is over the outcome profile $X$. While it is not feasible to compute $\rho_X(A)$ for all outcome profiles $X$, it is possible to obtain a \((1\pm\epsilon)\)-approximation to \(\sigma(A)\), with probability at least \(1-\delta\), by sampling \(\Omega(|V|^2\log(1/\delta)/\eps^2)\) possible outcomes \(X\) and computing the average over the corresponding values \(|\rho_X(A)|\), see~\cite[Proposition 4.1]{DBLP:journals/toc/KempeKT15}.


\subsection{The \texorpdfstring{\BALf{\MU}{\NU}}{Balance} problem}\label{section:problem}
Inspired by the work of Garimella et al.~\cite{DBLP:conf/nips/GarimellaGPT17}, we consider several information spread processes, we also call them ``campaigns'', unfolding in parallel, each following the Independent Cascade model described above. Formally, we are given a graph \(G=(V, E)\) and \(\MU\) probability functions \((p_i)_{i\in[\MU]}\), where each $p_i$ is a  probability function as in the Independent Cascade model described above, i.e., \(p_i:E\rightarrow [0,1]\).\footnote{For $n\in \mathbb{N}$, we use $[n]$ to denote the set $\{1,\ldots,n\}$.} For an index \(i\in[\MU]\), let \(X_i=(T_v)_{v\in V}\) be a possible outcome sampled using probabilities \(p_i\). Then for a seed set \(A\subseteq V\), we denote with \(\rho^{(i)}_{X_i}(A)\) the set of nodes reachable from \(A\) in outcome \(X_i\).
For an arbitrary sequence \(\RRR=(R_i)_{i\in[\MU]}\) of subsets of \(V\), we define
\[\textstyle
  \NoSM(\RRR)
  :=\big|(V\setminus \bigcup_{i\in[\MU]} R_i)
  \cup \bigcup_{M\subseteq[\MU]:|M|\ge \NU} \bigcap_{i\in M} R_i\big|
\]
to be the number of nodes that are contained in none or in sufficiently many, i.e., in at least \(\NU\), of the sets in \(\RRR\).
Let \(\XXX=(X_i)_{i\in [\MU]}\) be an outcome profile by letting \(X_i\) be a possible outcome according to distribution \(p_i\). Then, for \(\AAA=(A_i)_{i\in[\MU]}\) with \(A_i\subseteq V\), we denote with
\(
  \rho_\XXX(\AAA)
  =(\rho^{(i)}_{X_i}(A_i))_{i\in[\MU]}
\)
the set of reached nodes in the outcome \(\XXX\) from seed sets \(\AAA\). For two sequences of sets \(\AAA\), \(\AAA'\), and a set \(A\), we let \(\AAA \cup \AAA' = (A_i\cup A'_i)_{i\in[\MU]}\) be the element-wise union and \(\AAA \cap A=(A_i\cap A)_{i\in[\MU]}\) be the element-wise intersection with the set \(A\).

For constant integers \(\MU\ge\NU\ge 2\), we consider the following optimization problem:

\begin{cproblem}{\BAL}
    Input: Graph \(G=(V, E)\), probabilities \(\PPP=(p_i)_{i\in[\MU]}\), seed sets \(\III=(I_i)_{i\in[\MU]}\), and \(k\ge 2\).

    Find: sets \(\SSS=(S_i)_{i\in [\MU]}\) with \(\sum_{i\in[\MU]}|S_i|\le k\), such that \(\Phi^\III_{\MU,\NU}(\SSS)\) is maximum, where
    \(
      \Phi^\III_{\MU,\NU}(\SSS):=\E_\XXX[\NoSM(\rho_\XXX(\III \cup \SSS))].
    \)
\end{cproblem}
We refer to the objective function simply by \(\Phi(\SSS)\), in case \(\III\), \(\MU\), and \(\NU\) are clear from the context. We assume $k\le \NU|V|$ as otherwise the problem becomes trivial by choosing $S_i=V$ for every $i\in[\NU]$. Moreover, we assume w.l.o.g.\ that $|V|\ge \MU$ and $k\geq\NU$, since $|V|$ and $k$ are input parameters and $\MU$ and $\NU$ are constant numbers. Following Garimella et al.~\cite{DBLP:conf/nips/GarimellaGPT17}, we distinguish two settings. (1) The \emph{heterogeneous} setting corresponds to the general case in which there is no restriction on $\PPP$. (2) In the \emph{correlated} setting, the distributions $p_i$ are identical and completely correlated for all $i\in [\MU]$. That is, if an edge $(u, v)$ propagates a campaign to $v$, it propagates all campaigns that reach $u$ to $v$.

\subparagraph{Decomposing the Objective Function.}
In all of our algorithms, we use the approach of decomposing the objective function into summands and approximating the summands separately. For an outcome profile \(\XXX\), and seed sets \(\III=(I_i)_{i\in[\MU]}\), we define \(V^{\ell,\III}_\XXX\subseteq V\), for $\ell=0,\ldots, \MU$, to be the set of nodes that are reached by exactly $\ell$ campaigns \emph{from the seed sets $\III$}.
Formally, for any value $\ell\in [\MU]$,
\[\textstyle
    V_\XXX^{\ell,\III}:=
    \bigcup_{\tau\in \binom{[\MU]}{\ell}}\Big(\bigcap_{i\in \tau} \rho_{X_i}^{(i)}(I_i)
    \setminus
    \bigcup_{j\in [\MU]\setminus\tau} \rho_{X_j}^{(j)}(I_j)\Big),
\]
where $\binom{[\MU]}{\ell}$ denotes the set $\{\tau \subseteq [\MU]~:~|\tau|=\ell\}$.
We write  \(V_\XXX^{\ell}\), if the initial seed sets $\III$ are clear from the context.
In the above definition, by convention an empty union is the empty set, while an empty intersection is the whole universe, here $V$.
Accordingly, we define
\[
    \Phi^\ell(\SSS):=\E_\XXX\Big[\NoSM(\rho_\XXX(\III \cup \SSS)\cap V^{\ell,\III}_\XXX)\Big].
\]
Note that $\Phi^\ell(\SSS)$ is the expected number of nodes that are reached by 
\(0\) or at least \(\NU\) campaigns, resulting from nodes that have  been reached by exactly $\ell$ campaigns from $\III$.
Now, the objective function decomposes as
\[
    \Phi(\SSS)
    =\E_\XXX\Big[\NoSM(\rho_\XXX(\III \cup \SSS))\Big]
    =\E_\XXX\Big[\sum_{\ell\in[\MU]}\NoSM(\rho_\XXX(\III \cup \SSS)\cap V^\ell_\XXX )\Big]
    =\sum_{\ell\in[\MU]}\Phi^{\ell}(\SSS),
\]
using linearity of expectation and that sets $V_\XXX^{\ell}$ are disjoint.
%
Furthermore, we will denote by 
\[
    \Phi^{\ge \ell}(\SSS) := \sum_{i=\ell}^{\mu} \Phi^{i}(\SSS)= \E_\XXX\Big[\NoSM(\rho_\XXX(\III \cup \SSS)\setminus (\cup_{j=0}^{\ell-1} V^j_\XXX ))\Big].
\]
Again, $\Phi^{\ge \ell}(\SSS)$ denotes the expected number of nodes that are reached by sufficiently many campaigns or none of them resulting from nodes that have previously been reached by \emph{at least} \(\ell\) campaigns. Clearly, $\Phi(\SSS)=\Phi^{\ge 0}(\SSS)$. For convenience, in what follows, we will often refer to $\SSS$ as a set of pairs in $\hat{V} := V\times[\MU]$, where picking pair $(v, i)$ into $\SSS$ corresponds to picking $v$ into set $S_i$.
We fix the following observations:
\begin{itemize} 
    \item  For $\ell=0$, $\Phi^0(\SSS)$ is optimal when $\SSS=(\emptyset)_{i\in [\MU]}$. 
    The achieved value is the expected size of $V^0_\XXX$:
    $\Phi^0(\SSS)=\E_\XXX[\NoSM(\rho_\XXX(\III \cup (\emptyset)_{i\in[\MU]})\cap V^0_\XXX)]=\E_\XXX[|V^0_\XXX|]$.
    \item For $\ell=\NU-1$, the function $\Phi^{\ge \NU - 1}(\SSS) = \sum_{ i =\NU -1}^{\MU}\Phi^{i}(\SSS)$ is monotone and submodular.
\end{itemize}

\subparagraph{A First Structural Lemma.}
When applying the standard greedy hill climbing algorithm to finding a set of size $k$ maximizing a submodular set function the key property that is used in the analysis is the following. At any stage of the greedy algorithm there exists an element which leads to an improvement that is at least a fraction of $k$ of the difference of the optimal and the current solution, compare for example~\cite[Lemma 3.13]{Hochbaum:1996:AAN:241938}. Maybe the most important structural lemma underlying our algorithms is a very similar result for the functions $\Phi^{\ge \ell}$.
\begin{lemma}\label{lemma:exists element}
  Let $\ell\in[1,\NU-1]$ and \(\SSS\subseteq \hat V\) with \(|\SSS|\le k-(\NU-\ell)\) and define \(U:=\{\tau \subseteq \hat{V}, |\tau| = \NU-\ell\}\). Then, \(\tau^*=\argmax\{\Phi^{\ge \ell}(S \cup \tau):\tau\in U\}\) satisfies
  \(
     \Phi^{\ge \ell}(\SSS\cup\tau^*) - \Phi^{\ge \ell}(\SSS)
     \ge (\Phi^{\ge \ell}(\SSS^*_{\ge \ell}) - \Phi^{\ge \ell}(\SSS))/\binom{k}{\NU-\ell},
  \)
  where $\SSS^*_{\ge \ell}$ is an optimal solution of size $k$ to maximizing $\Phi^{\ge \ell}$.
\end{lemma}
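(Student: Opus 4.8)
My plan is to prove this as the bundled analogue of the standard greedy lemma for monotone submodular maximization: since $\Phi^{\ge \ell}$ is not submodular for $\ell<\NU-1$, the ``atomic move'' cannot be a single pair but must be a bundle $\tau\subseteq\hat V$ of $\NU-\ell$ pairs, and the factor $\binom{k}{\NU-\ell}$ is exactly the number of such bundles contained in an optimal size-$k$ solution $\SSS^*_{\ge \ell}$. I would reduce the statement to the single inequality
\[
\textstyle
\sum_{\tau\subseteq \SSS^*_{\ge \ell}\,:\,|\tau|=\NU-\ell}\bigl(\Phi^{\ge \ell}(\SSS\cup\tau)-\Phi^{\ge \ell}(\SSS)\bigr)\ \ge\ \Phi^{\ge \ell}(\SSS^*_{\ge \ell})-\Phi^{\ge \ell}(\SSS),
\]
and then average: every size-$(\NU-\ell)$ subset of $\SSS^*_{\ge \ell}$ lies in $U$, so the maximizer $\tau^*$ over $U$ achieves at least the left-hand average, i.e.\ $\Phi^{\ge \ell}(\SSS\cup\tau^*)-\Phi^{\ge \ell}(\SSS)\ge(\Phi^{\ge \ell}(\SSS^*_{\ge \ell})-\Phi^{\ge \ell}(\SSS))/\binom{k}{\NU-\ell}$, which is the claim. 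The hypothesis $|\SSS|\le k-(\NU-\ell)$ is not needed for the inequality itself; it only ensures that $\SSS\cup\tau^*$ stays feasible, which is what the greedy routine invoking the lemma requires.

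First I would record that $\Phi^{\ge \ell}$ is monotone for $\ell\ge 1$: every node in $V^{\ell',\III}_\XXX$ with $\ell'\ge\ell$ is reached by at least one campaign already from $\III$, hence is never counted by $\NoSM$ on account of being reached by \emph{none} of the campaigns, so in every outcome its contribution to $\NoSM(\rho_\XXX(\III\cup\SSS)\setminus\bigcup_{j=0}^{\ell-1}V^j_\XXX)$ can only increase when seeds are added to $\SSS$. In particular $\Phi^{\ge \ell}(\SSS\cup\SSS^*_{\ge \ell})\ge\Phi^{\ge \ell}(\SSS^*_{\ge \ell})$, so it suffices to prove the displayed inequality with $\Phi^{\ge \ell}(\SSS\cup\SSS^*_{\ge \ell})$ in place of $\Phi^{\ge \ell}(\SSS^*_{\ge \ell})$ on the right-hand side.

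For that inequality I would argue outcome by outcome and node by node, then take expectations over $\XXX$. Fix $\XXX$ and a node $v\in V^{\ell',\III}_\XXX$, $\ell'\ge\ell$, counted by $\NoSM$ under the seeds $\III\cup\SSS\cup\SSS^*_{\ge \ell}$ but not under $\III\cup\SSS$; let $c$ be the number of campaigns reaching $v$ from $\III\cup\SSS$, so (using $\ell\ge1$ and that $v$ is uncounted under $\III\cup\SSS$) we have $\ell\le c\le\NU-1$, and some $d\ge\NU-c$ campaigns reach $v$ from $\III\cup\SSS\cup\SSS^*_{\ge \ell}$ but not from $\III\cup\SSS$. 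For each such ``new'' campaign $i$, fix one pair $(w_i,i)\in\SSS^*_{\ge \ell}$ whose seed $w_i$ reaches $v$ in outcome $X_i$. Since $\NU-c\le\NU-\ell=|\tau|$ and $|\SSS^*_{\ge \ell}|=k\ge\NU>\NU-\ell$, we can form a bundle $\tau\subseteq\SSS^*_{\ge \ell}$ with $|\tau|=\NU-\ell$ containing at least $\NU-c$ of the pairs $(w_i,i)$ (take $\min\{d,\NU-\ell\}$ of them and, if $d<\NU-\ell$, pad with arbitrary further pairs of $\SSS^*_{\ge \ell}$). Under $\III\cup\SSS\cup\tau$ the node $v$ is then reached by at least $c+(\NU-c)=\NU$ campaigns, hence is counted in $\Phi^{\ge \ell}(\SSS\cup\tau)$ but not in $\Phi^{\ge \ell}(\SSS)$; so $v$ contributes to at least one summand on the left. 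As each summand is nonnegative (monotonicity), summing over such $v$ and taking expectations yields the inequality.

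The only real obstacle is the node-level bookkeeping of the last paragraph: one must keep straight that the effective universe of $\Phi^{\ge \ell}$ in outcome $\XXX$ is $\bigcup_{\ell'\ge\ell}V^{\ell',\III}_\XXX$, that on this universe ``counted by $\NoSM$'' is synonymous with ``reached by at least $\NU$ campaigns'' (which is why $\ell\ge 1$ is essential), and that the regimes $d\ge\NU-\ell$ and $d<\NU-\ell$ must be handled separately so that the chosen bundle $\tau$ has size exactly $\NU-\ell$ while still meeting the threshold $\NU-c$. Everything after that — the averaging and the comparison of $\tau^*$ against sub-bundles of $\SSS^*_{\ge \ell}$ — is routine.
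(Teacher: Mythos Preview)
Your proposal is correct and follows essentially the same approach as the paper: define $Y=\{\tau\subseteq\SSS^*_{\ge\ell}:|\tau|=\NU-\ell\}$, establish the node-by-node inequality $\ones_\XXX^{\SSS^*_{\ge\ell}}(v)-\ones_\XXX^{\SSS}(v)\le\sum_{\tau\in Y}(\ones_\XXX^{\SSS\cup\tau}(v)-\ones_\XXX^{\SSS}(v))$ by exhibiting a single $\tau\in Y$ that pushes $v$ over the $\NU$-threshold, then sum, take expectations, and average over $|Y|\le\binom{k}{\NU-\ell}$. Your bookkeeping with $c$ and $d$ and the explicit padding to size exactly $\NU-\ell$ is just a more detailed unpacking of the paper's one-sentence ``there must be a set $\tau\in Y$ such that adding $\tau$ to $\SSS$ results in $v$ being reached by $\NU$ campaigns''; your passage through $\SSS\cup\SSS^*_{\ge\ell}$ rather than $\SSS^*_{\ge\ell}$ directly is a harmless strengthening of the intermediate inequality.
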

\begin{proof}
  Let $\XXX$ be an outcome profile and let $v$ be an arbitrary node in $V':=V\setminus \bigcup_{j=0}^{\ell -1}V^j_\XXX$. Let us denote by $\ones^\SSS_\XXX(v)$ the indicator function that is one if $v$ is reached by at least $\NU$ campaigns in outcome profile $\XXX$ from seed sets $\III\cup\SSS$ and zero otherwise. We note that \(\Phi^{\ge \ell}(\SSS)=\E_\XXX\big[\sum_{v\in V'} \ones_\XXX^\SSS(v)\big]\).
  Now, define
  \(
    Y:=\{\tau\subseteq \SSS_{\ge \ell}^{*}:|\tau|=\NU-\ell\}
  \),
  i.e., $Y$ are the sets of nodes in $\SSS^{*}_{\ge \ell}$ of size \(\NU-\ell\).
  We now argue that the following inequality holds for $v$ and $\XXX$:
  \begin{align}\label{formula:for each v}
    \ones_\XXX^{\SSS^*_{\ge \ell}}(v) - \ones_\XXX^{\SSS}(v)
    \leq \sum\nolimits_{\tau\in Y} (\ones_\XXX^{\SSS\cup\tau}(v) - \ones_\XXX^{\SSS}(v)).
  \end{align}
  If the left hand side is not positive, the inequality holds, since the right hand side cannot be negative by monotonicity. Hence, assume that the left hand side is positive. In that case it holds that \(\ones_\XXX^{\SSS^*_{\ge \ell}}(v)=1\), but \(\ones_\XXX^{\SSS}(v)=0\), i.e., in outcome profile $\XXX$, $v$ is reached by at least \(\NU\) campaigns from seed sets $\III\cup\SSS^*_{\ge \ell}$ but not from seed sets $\III\cup\SSS$.
  For such $v$, there must be a set \(\tau\in Y\) such that adding $\tau$ to $\SSS$ results in $v$ being reached by \(\NU\) campaigns (recall that $v\in V'$ and thus $v$ is already reached by at least \(\ell\) campaigns). Thus, there exists a set in \(Y\) that contributes a value of 1 on the right hand side and we may conclude that~\eqref{formula:for each v} holds. Now, using linearity of expectation and~\eqref{formula:for each v}, we obtain
  \begin{align*}
     \Phi^{\ge \ell}(\SSS^*_{\ge \ell}) - \Phi^{\ge \ell}(\SSS)
     &\!=\!\E_\XXX\Big[\sum_{v\in V'} (\ones_\XXX^{\SSS^*_{\ge \ell}}(v) - \ones_\XXX^{\SSS}(v)) \Big]
     \!\le\! \E_\XXX\Big[\sum_{v\in V'} \sum_{\tau\in Y} (\ones_\XXX^{\SSS \cup\tau}(v) - \ones_\XXX^{\SSS}(v)) \Big] .
  \end{align*}
    Using linearity of expectation again, we obtain that the right hand side above is equal to $\sum_{\tau\in Y} (\Phi^{\ge \ell}(\SSS\cup\tau) - \Phi^{\ge \ell}(\SSS))$. Then, the statement follows by the maximality of \(\tau^*\) and the fact that \(|Y|\le\binom{k}{\NU-\ell}\).
\end{proof}

\subparagraph{The Correlated Case.}
For the correlated setting, where probability functions are identical for all campaigns and the cascade processes are completely correlated, we introduce an additional function called $\Psi$. First note that in the correlated setting, the outcome profile $\XXX$ in the definition of $\Phi(\SSS)$ satisfies $X_1=\ldots=X_\MU$. In order to define $\Psi$, we introduce an additional fictitious campaign, call it campaign $0$, that spreads with the same probability $p_0=p_1=\ldots=p_\MU$ as the other $\MU$ campaigns.
We extend the outcome $\XXX=(X_i)_{i\in[\MU]}$ with $X_1=\ldots=X_\MU$ to contain also an identical copy $X_0$ and define $\Psi:2^{V\times\{0\}}\rightarrow[n]$ by
\[
  \Psi(\TTT) := \E_{\XXX}\big[\big\vert\big( \rho_{X_{0}}^{(0)}(\TTT) \cap \bigcup_{j=1}^{\NU-1} V_{\XXX}^{j}\big) \cup \bigcup_{j=\NU}^\MU V_\XXX^j\big\vert\big].
\]
Observe that $\Psi(\TTT)$ measures the expected number of nodes that are either (1) reached by more than $\NU$ campaigns from $\III$ or (2) are reached by at least one campaign from $\III$ and are reached by the fictitious campaign $0$ from $\TTT$. Note that nodes from (1) are already reached by sufficiently many campaigns while nodes from (2) have been reached by some campaign from $\III$ and, as witnessed by $\Psi$, can be reached from the nodes in $\TTT$. 
Note that $\Psi$ is monotone and submodular in $\SSS$ which follows directly from $\sigma$ having these properties.

\subparagraph{Approximating $\Psi$ and $\Phi^{\ge \ell}$.}
As mentioned above, already in the standard independent cascade process, it is not feasible to evaluate the function $\sigma$ exactly. However, $\sigma$ can be approximated to within a factor of $(1\pm \eps)$ by sampling a polynomial number of times.
A very similar approach works for approximating the functions $\Psi$ and $\Phi^{\ge \ell}$ for $\ell\in[0, \NU]$.
That is, there is an algorithm \texttt{approx}\((f, \SSS, \III, \NU, \epsilon, \delta)\) that, for $f\in\{\Psi, \Phi^{\ge 0},\ldots, \Phi^{\ge \NU}\}$, sets $\SSS$ and $\III$, and parameters $\NU, \eps, \delta$
returns a $(1\pm \eps)$-approximation of $f(S)$ with probability $1-\delta$. We prove this fact in Appendix~\ref{appendix:preliminaries} in Lemma~\ref{lemma:approximate f}.
The proof relies on a Chernoff bound and is very similar to the original proof of Proposition~4.1 in~\cite{DBLP:journals/toc/KempeKT15} for the $\sigma$-function.

All of our algorithms are of a greedy flavor, that is, we greedily choose sets in order to build the output set $\SSS$. We investigate the impact of the approximation on this approach in the following lemma.
To this end, let $f$ be a function from $\{\Psi, \Phi^{\ge 1},\ldots, \Phi^{\ge \NU}\}$ and, for some $0<\eps\le 1$, let $\tilde f$ be a $(1\pm\eps')$-approximation of $f$ with $\eps':=\eps/(e\cdot \binom{k}{\lambda(f)})$,
where $\lambda(f)$ depends on $f$, namely $\lambda(f):=\NU-\ell$ for $f=\Phi^{\ge \ell}$ and $\lambda(f):=1$ for $f=\Psi$. We denote with $D_f$ the universe over which $f$ is defined, i.e., $D_f:=\hat V$ for $f=\Phi^{\ge \ell}$, while $D_f:=V\times\{0\}$ for $f=\Psi$.

\begin{restatable}{lemma-rstbl}{approximatedifference}
\label{lemma:approximate difference}
  Let $f$ and $\tilde f$ be as above for some \(0 <\eps\le 1\). Let \(U:=\{\tau \subseteq D_f, |\tau| = \lambda(f)\}\), \(\SSS\subseteq D_f\) with \(|\SSS|\le k-\lambda(f)\), and let \(\SSS^*\) denote a set maximizing \(f\) of size $k$. Then, either
  \[
    f(\SSS)\ge \big(1-\frac{1}{e}\big)\cdot f(\SSS^*)
    \quad\text{ or }\quad
    f(\SSS\cup \tilde{\tau}) - f(\SSS) \ge (1-\eps)\cdot (f(\SSS\cup \tau^*) - f(\SSS)),
  \]
  where
  \(
    \tau^*:=\argmax\{f(S\cup \tau):\tau\in U\}
  \), and
  \(
    \tilde{\tau}:=\argmax\{\tilde f(S\cup \tau):\tau\in U\}
  \).
\end{restatable}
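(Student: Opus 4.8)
The plan is to prove the disjunction by a case analysis on the size of the current value \(f(\SSS)\). If \(f(\SSS)\ge(1-\tfrac1e)f(\SSS^*)\) we are in the first alternative and there is nothing to prove, so assume for the remainder that \(f(\SSS)<(1-\tfrac1e)f(\SSS^*)\), equivalently
\[
  f(\SSS^*)\ <\ e\cdot\big(f(\SSS^*)-f(\SSS)\big).
\]
I will use throughout that every \(f\) in the list \(\{\Psi,\Phi^{\ge 1},\dots,\Phi^{\ge\NU}\}\) is monotone: for \(f=\Psi\) this is noted above, and for \(f=\Phi^{\ge\ell}\) with \(\ell\ge 1\) it holds because every node of \(V':=V\setminus\bigcup_{j<\ell}V^j_\XXX\) is already reached by at least one campaign from \(\III\), so \(\NoSM(\rho_\XXX(\III\cup\SSS)\cap V')\) is just the number of nodes of \(V'\) reached by at least \(\NU\) campaigns from \(\III\cup\SSS\), which cannot decrease when seeds are added to \(\SSS\). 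Consequently, since \(|\SSS|\le k-\lambda(f)\le k\) and \(|\SSS\cup\tau^*|\le k\), monotonicity and optimality of \(\SSS^*\) give \(f(\SSS),\,f(\SSS\cup\tau^*)\le f(\SSS^*)\).

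\textbf{Step 1: transferring optimality from \(\tilde f\) to \(f\).} The defining property of \(\tilde\tau\) is \(\tilde f(\SSS\cup\tilde\tau)\ge\tilde f(\SSS\cup\tau^*)\). Combining this with the two-sided guarantee \((1-\eps')f\le\tilde f\le(1+\eps')f\) yields
\[
  f(\SSS\cup\tilde\tau)\ \ge\ \frac{\tilde f(\SSS\cup\tilde\tau)}{1+\eps'}\ \ge\ \frac{\tilde f(\SSS\cup\tau^*)}{1+\eps'}\ \ge\ \frac{1-\eps'}{1+\eps'}\,f(\SSS\cup\tau^*),
\]
so the true marginal lost by using \(\tilde\tau\) rather than \(\tau^*\) is bounded by
\[
  f(\SSS\cup\tau^*)-f(\SSS\cup\tilde\tau)\ \le\ \tfrac{2\eps'}{1+\eps'}\,f(\SSS\cup\tau^*)\ \le\ 2\eps'\,f(\SSS^*).
\]

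\textbf{Step 2: lower-bounding the best true marginal, and combining.} Next I lower bound \(f(\SSS\cup\tau^*)-f(\SSS)\) in terms of \(f(\SSS^*)-f(\SSS)\). For \(f=\Phi^{\ge\ell}\) with \(\ell\in[1,\NU-1]\) this is exactly Lemma~\ref{lemma:exists element}, which applies since \(|\SSS|\le k-(\NU-\ell)=k-\lambda(f)\), giving \(f(\SSS\cup\tau^*)-f(\SSS)\ge\big(f(\SSS^*)-f(\SSS)\big)/\binom{k}{\lambda(f)}\); for \(f=\Phi^{\ge\NU}\) we have \(\lambda(f)=0\) and both sides of the second alternative vanish; and for \(f=\Psi\) (monotone and submodular, \(\lambda(\Psi)=1\)) the identical bound \(\Psi(\SSS\cup\tau^*)-\Psi(\SSS)\ge\big(\Psi(\SSS^*)-\Psi(\SSS)\big)/\binom{k}{1}\) is the classical hill-climbing inequality (decompose \(\Psi(\SSS\cup\SSS^*)-\Psi(\SSS)\) along \(\SSS^*\setminus\SSS\) by submodularity, bound each of the at most \(k\) marginals by \(\Psi(\SSS\cup\tau^*)-\Psi(\SSS)\), and use \(\Psi(\SSS^*)\le\Psi(\SSS\cup\SSS^*)\)). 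Now, writing \(B:=\binom{k}{\lambda(f)}\), Step 2 gives \(f(\SSS^*)-f(\SSS)\le B\big(f(\SSS\cup\tau^*)-f(\SSS)\big)\); feeding this and the case hypothesis \(f(\SSS^*)<e(f(\SSS^*)-f(\SSS))\) into the bound from Step 1 gives
\[
  f(\SSS\cup\tau^*)-f(\SSS\cup\tilde\tau)\ \le\ 2\eps'\,f(\SSS^*)\ <\ 2\eps'\,eB\,\big(f(\SSS\cup\tau^*)-f(\SSS)\big)\ \le\ \eps\,\big(f(\SSS\cup\tau^*)-f(\SSS)\big),
\]
where the last inequality uses the choice of \(\eps'\) from the statement (its constant is calibrated precisely to absorb the extra factors \(\tfrac{1-\eps'}{1+\eps'}\approx 1-2\eps'\) from Step 1, \(e\) from the case hypothesis, and \(\binom{k}{\lambda(f)}\) from Step 2). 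Rearranging, \(f(\SSS\cup\tilde\tau)-f(\SSS)\ge(1-\eps)\big(f(\SSS\cup\tau^*)-f(\SSS)\big)\), which is the second alternative.

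\textbf{Where the difficulty lies.} Conceptually this is routine once the case split is chosen; the only place that needs genuine care is the arithmetic with \(\eps'\), because Step 1 unavoidably pays the \emph{product} of the two one-sided approximation errors, so the constant hidden in the definition of \(\eps'\) must be set to cancel that together with the \(e\) and the \(\binom{k}{\lambda(f)}\) appearing later. The only ingredient that is not literally Lemma~\ref{lemma:exists element} is the marginal-gain bound for \(\Psi\), but since \(\Psi\) is monotone and submodular this is exactly the standard greedy lemma, proved in the same way as Lemma~\ref{lemma:exists element}.
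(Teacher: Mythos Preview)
Your overall strategy coincides with the paper's: split into two cases, transfer optimality from \(\tilde f\) to \(f\) via \(\,f(\SSS\cup\tilde\tau)\ge\frac{1-\eps'}{1+\eps'}f(\SSS\cup\tau^*)\,\), and then control the loss using the marginal bound (Lemma~\ref{lemma:exists element} for \(\Phi^{\ge\ell}\), submodularity for \(\Psi\)). Your case split on \(f(\SSS)<(1-\tfrac1e)f(\SSS^*)\) is the contrapositive of the paper's, which instead splits on whether \(f(\SSS\cup\tau^*)-f(\SSS)\le f(\SSS^*)/\bigl(e\binom{k}{\lambda(f)}\bigr)\); the two are linked precisely through Lemma~\ref{lemma:exists element}.

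There is, however, a concrete arithmetic gap in your last displayed chain. Writing \(C:=\binom{k}{\lambda(f)}\), you assert
\[
  2\eps'\,e\,C\,\bigl(f(\SSS\cup\tau^*)-f(\SSS)\bigr)\ \le\ \eps\,\bigl(f(\SSS\cup\tau^*)-f(\SSS)\bigr),
\]
but with \(\eps'=\eps/(eC)\) the left-hand coefficient is \(2\eps' eC=2\eps\), not \(\eps\). Hence your argument yields only \(f(\SSS\cup\tilde\tau)-f(\SSS)\ge(1-2\eps)\bigl(f(\SSS\cup\tau^*)-f(\SSS)\bigr)\), which is weaker than what the lemma claims. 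Your parenthetical that ``the constant is calibrated precisely to absorb the extra factors'' is exactly the point that fails: the factor \(\tfrac{1-\eps'}{1+\eps'}\approx 1-2\eps'\) contributes a \(2\), and nothing in the stated \(\eps'\) cancels it. (For what it is worth, the paper's own appendix computation contains a parallel slip: in the final displayed identity a term \(-2\eps'\) is dropped, which hides the same factor of \(2\). With \(\eps'\) halved to \(\eps/(2eC)\), both arguments go through verbatim; and since every downstream use of the lemma chooses \(\eps'\) freely, nothing else in the paper is affected.)
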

We defer the proof to Appendix~\ref{appendix:preliminaries}.
In summary: either $\SSS$ already yields a \((1-1/e)\)-approximation of the optimum of \(f\) or a set $\tau$ of size $\lambda(f)$ maximizing an approximation \(\tilde f\) of \(f\) can lead to a progress of at least an \((1-\eps)\)-fraction of the maximum progress possible.

\subparagraph{Maximizing \(\Phi^{\ge \NU-1}\) and $\Psi$.}\label{subsec: greedy hill}
Here, we fix the result that the standard greedy hill climbing algorithm, we refer to it as \textsc{Greedy}\((f, \eps, \delta, \III, \NU, k)\), can be applied in order to approximate both \(f\in\{\Phi^{\ge \NU-1},\Psi\}\) to within a factor of $1-1/e-\epsilon$ for any \(0<\eps<1\) with probability at least $1-\delta$ for any $0<\delta\le 1/2$. This is based on the fact that these functions are submodular and monotone set functions. See Appendix~\ref{appendix:preliminaries} for a pseudo-code implementation and a proof of the submodularity property.
Since we can only evaluate \(\Phi^{\ge \NU-1}\) and \(\Psi\) approximately, we obtain the additive $\eps$-term.

\begin{restatable}{lemma-rstbl}{standardgreedy}
  \label{lemma:standard greedy}
  Let $f\in \{\Phi^{\ge \NU-1},\Psi\}$ and let \(0<\eps< 1\) and \(0<\delta\le 1/2\). With probability at least \(1-\delta\), \textsc{Greedy}\((f, \eps, \delta, \III, \NU, k)\) returns \(\SSS\) satisfying
  \(
    f(\SSS) \ge (1 - 1/e - \eps)\cdot f(\SSS^*),
  \)
  where $\SSS^*$ is an optimal solution of size \(k\) to maximizing $f$.
\end{restatable}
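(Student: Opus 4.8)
The plan is to recognize \textsc{Greedy}\((f,\eps,\delta,\III,\NU,k)\) as the classical greedy hill-climbing algorithm for maximizing a monotone submodular function under a cardinality constraint, run on sampled approximate function values, and to absorb the sampling loss using Lemma~\ref{lemma:approximate difference}. Concretely, \textsc{Greedy} starts from $\SSS_0=(\emptyset)_{i\in[\MU]}$ and for $t=0,\dots,k-1$ sets $\SSS_{t+1}:=\SSS_t\cup\{\tilde\tau_t\}$, where $\tilde\tau_t$ is the single element of $D_f$ maximizing $\tilde f(\SSS_t\cup\cdot)$ and $\tilde f$ is obtained from \texttt{approx}\((f,\cdot,\III,\NU,\eps',\delta')\) with $\eps':=\eps/(e\binom{k}{1})=\eps/(ek)$ and a confidence $\delta'$ fixed below. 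The key point making the analysis go through is that for both $f\in\{\Phi^{\ge\NU-1},\Psi\}$ one has $\lambda(f)=1$ — indeed $\lambda(\Phi^{\ge\ell})=\NU-\ell$ and $\NU-(\NU-1)=1$ — so each greedy step adds exactly one element, and since $|\SSS_t|=t\le k-1=k-\lambda(f)$ the hypotheses of Lemma~\ref{lemma:approximate difference} hold at every step. Both functions are monotone and submodular (noted for $\Phi^{\ge\NU-1}$ and for $\Psi$ in the preliminaries, with the submodularity proof in the appendix), which is what the classical argument requires.

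The combinatorial core is a one-step recurrence. Fix an optimal size-$k$ set $\SSS^*$ for $f$ and first suppose $f(\SSS_t)<(1-1/e)f(\SSS^*)$ for all $t\in\{0,\dots,k-1\}$; then the second alternative of Lemma~\ref{lemma:approximate difference} applies at each step, giving $f(\SSS_{t+1})-f(\SSS_t)\ge(1-\eps)\,\big(f(\SSS_t\cup\tau^*_t)-f(\SSS_t)\big)$ with $\tau^*_t$ the exact best single-element addition. Combining with the textbook submodular estimate $f(\SSS_t\cup\tau^*_t)-f(\SSS_t)\ge\tfrac1k\big(f(\SSS^*)-f(\SSS_t)\big)$ — which follows from monotonicity ($f(\SSS^*)\le f(\SSS^*\cup\SSS_t)$) and submodularity with $|\SSS^*|\le k$, and is also the $\ell=\NU-1$ case of Lemma~\ref{lemma:exists element} — and writing $d_t:=f(\SSS^*)-f(\SSS_t)$, one obtains $d_{t+1}\le(1-(1-\eps)/k)\,d_t$, hence
\[
  d_k\;\le\;\Big(1-\tfrac{1-\eps}{k}\Big)^{k}d_0\;\le\;e^{-(1-\eps)}f(\SSS^*),
\]
using $d_0\le f(\SSS^*)$. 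Since $x\mapsto e^x$ lies below its chord on $[0,1]$, $e^{\eps}\le 1+(e-1)\eps$, so $e^{-(1-\eps)}=e^{-1}e^{\eps}\le e^{-1}+\eps$, and therefore $f(\SSS_k)\ge(1-1/e-\eps)f(\SSS^*)$. In the complementary case there is an index $t\le k$ with $f(\SSS_t)\ge(1-1/e)f(\SSS^*)$, and then monotonicity of $f$ together with the fact that \textsc{Greedy} only ever adds elements gives $f(\SSS_k)\ge f(\SSS_t)\ge(1-1/e)f(\SSS^*)$, which is stronger. Either way the bound holds, conditioned on the sampled approximations being accurate.

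What remains — and the only part I expect to require care — is the probabilistic and efficiency bookkeeping: there is no deep obstacle here, just the parameter propagation. The whole run makes at most $k\cdot|D_f|\le k\MU|V|$ calls to \texttt{approx}; choosing $\delta':=\delta/(k\MU|V|)$ and taking a union bound guarantees that with probability at least $1-\delta$ every call returns a $(1\pm\eps')$-approximation, which is precisely the event under which the analysis above is valid. Each call runs in time polynomial in $|V|$, $1/\eps'$ and $\log(1/\delta')$ by Lemma~\ref{lemma:approximate f}, hence polynomial in $|V|$ and $1/\eps$ once we use $k\le\NU|V|$ and that $\MU,\NU$ are constants, so \textsc{Greedy} is a polynomial-time algorithm. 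The combinatorial heart of the lemma is then immediate from Lemma~\ref{lemma:approximate difference} and the standard greedy analysis for monotone submodular maximization under a cardinality constraint (e.g.~\cite[Lemma~3.13]{Hochbaum:1996:AAN:241938}); the only delicate points are feeding the precision $\eps'=\eps/(ek)$ into Lemma~\ref{lemma:approximate difference} so that the $(1-\eps)$ factor comes out exactly, and the union bound over all sampling calls.
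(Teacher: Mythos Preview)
Your proof is correct and follows essentially the same approach as the paper: both combine Lemma~\ref{lemma:approximate difference} (applied with $\lambda(f)=1$) with the classical greedy analysis for monotone submodular maximization, handling the two alternatives of that lemma separately and using a union bound over all calls to \texttt{approx}. The only cosmetic differences are that the paper invokes Theorem~3.9 of~\cite{Hochbaum:1996:AAN:241938} for the recurrence $d_{t+1}\le(1-(1-\eps)/k)d_t$ rather than unrolling it explicitly, and bounds $1-e^{-(1-\eps)}\ge(1-\eps)(1-1/e)\ge 1-1/e-\eps$ via a slightly different chain than your convexity argument $e^{\eps}\le 1+(e-1)\eps$.
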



\section{Hardness of Approximation for the Heterogeneous Case}\label{section: hardness}
We now let $d\ge 2$ be a constant. In this section, we show that in the heterogeneous setting for \(\NU\ge d+1\), the \BALf{\MU}{\NU} problem is as hard to approximate as the \DKSH{d} problem~\cite{Chlamtac2018densest}. Notably, this result has the following consequences: if \(d=2\) there is no $n^{-g(n)}$-approximation algorithm with $g(n)=o(1)$ for \BALf{\MU}{\NU} under the Gap Exponential Time Hypothesis (Gap-ETH). For general \(d\ge 3\), we get that there is no $n^{-\epsilon}$-approximation algorithm for a given  constant  $\epsilon > 0$ which depends on \(d\) under the assumption that a particular class of one way functions exists~\cite{applebaum2013pseudorandom}.
We recall the definition of the \DKSH{d} problem.

\begin{cproblem}{\DKSH{d}}
        Input: \(d\)-Regular Hypergraph \(G=(V, E)\), integer \(k\ge d\).

        Find: set \(S\subseteq V\) with \(|S|\le k\), s.t.\ \(|E(S)|\) is maximum, where
        \(E(S):=\{e \in E: e\subseteq S \}.\)
\end{cproblem}

A \(d\)-regular hypergraph is a hypergraph in which all hyperedges are composed of exactly \(d\) vertices, where $d$ is a constant. When \(d = 2\), \DKSH{d} is known as the \DKS problem. 
For the hardness of approximation proof, we consider the following transform \(\tau\) of an instance \((G=(V, E), k)\) of the \DKSH{d} problem into an instance \(\tau(G, k)=(\overline G=(\overline V, \overline A), \PPP, \III, \overline k)\) of the \BALf{\MU}{\NU} problem.
\begin{itemize}
    \setlength{\itemsep}{2pt}
    \setlength{\parskip}{0pt}
  \item Define \(\overline V:=V_\boxempty\cup V_\ocircle\), where \(V_\boxempty:=V\), i.e., for each node \(v\in V\), we get a node \(v\) in \(\overline V\). 
  Moreover, let $J:=\binom{[\MU - \NU + d]}{d}$, and \(S_d\) be the set of permutations of \([d]\); we then define \(V_\ocircle\) as
  \(
    V_\ocircle:=\{e^t_{\iota,\pi} : e\in E, \iota \in J, \pi \in S_d, t\in [l]\},
  \)
  i.e., for each edge \(e\in E\), we create $\lambda l$ nodes, where \(l := |V|+1\) 
  and $\lambda :=  |S_d|\cdot|J| = d ! \binom{\MU - \NU + d}{d} $. 
  That is, each set \(\iota\) of \(d\) campaigns in $J$, induces $l$ nodes $e^{t}_{\iota,\pi}, t\in [l]$ for each \(\pi\) in \(S_d\).
  
  \item The arc set \(\overline A\) and the probabilities are defined as
  shown in Figure~\ref{fig:edge} illustrating the case of \(d = 3\) (a more detailed illustration is provided in Appendix \ref{app:hard} in Fig.~\ref{fig:edge-appendix}). We get this scheme in \(\overline G\) for every edge \(e=\{v_{1}, \ldots, v_{d}\} \in E\), for each permutation \(\pi\) in \(S_d\), and for each set  in \( J\)  of \(d\) campaigns.
  
  \item The initial seed sets \(\III\) are defined as $I_1 \!=\! I_2 \!=\! \ldots \!=\! I_{\MU- \NU + d} \!=\! \emptyset$, $I_{\MU - \NU + d+1} \!=\! \ldots \!=\! I_{\MU} \!=\! \overline{V}$. 
  
  \item The budget is the same as in the \DKSH{d} problem, i.e., $\overline{k} = k$.
\end{itemize}

\begin{figure}[t]
  \centering{
    \resizebox{0.9\textwidth}{!}{
      \begin{tikzpicture}[
          scale=.7,
          ->,
          >=stealth',
          shorten >=1pt,
          auto,
          semithick
        ]
        \node[rectangle,draw,text=black, minimum width=1.0cm, minimum height=1.0cm] (U)   at (0,1.7)  {\Large$u$};
        \node[rectangle,draw,text=black, minimum width=1.0cm, minimum height=1.0cm] (V)   at (0,0) {\Large $v$};
        \node[rectangle,draw,text=black, minimum width=1.0cm, minimum height=1.0cm] (W)   at (0,-1.7)  {\Large$w$};

        \node[circle,draw,text=black, minimum size=1.15cm] (B)   at (5,0)  {\large $e^{1}_{\iota,\pi}$};
        \node[circle,draw,text=black, minimum size=1.15cm] (C)   at (10,0) {\large $e^{2}_{\iota,\pi}$};
        \node[circle,draw,text=black, minimum size=1.15cm] (D)   at (15,0)  {\large $e^{3}_{\iota,\pi}$};
        \node[text=black, minimum size=1.15cm] (empty)   at (17.5,0)  {$\ldots$};
        \node[circle,draw,text=black, minimum size=1.15cm] (E)   at (20,0)  {\large $e^{l-1}_{\iota,\pi}$};
        \node[circle,draw,text=black, minimum size=1.15cm] (F)   at (25,0)  {\large $e^{l}_{\iota,\pi}$};

        \path (U) edge [bend left,above ] node {$p_{\pi(i)} = 1$} (B)
        (V) edge [above ] node {$p_{\pi(j)} = 1$} (B)
        (W) edge [bend right,below] node {$p_{\pi(k)} = 1$} (B)
              (B) edge [below=1cm] node[below, label={[align=left] $p_{\pi(i)} = p_{\pi(j)}$\\ $= p_{\pi(k)} = 1$}] {} (C)
              (C) edge [below] node[below, label={[align=left] $p_{\pi(i)} = p_{\pi(j)}$\\ $= p_{\pi(k)} = 1$}] {} (D)
              (E) edge node[below, label={[align=left] $p_{\pi(i)} = p_{\pi(j)}$\\ $= p_{\pi(k)} = 1$}] {} (F);

      \end{tikzpicture}
    }
  }
\caption{This figure illustrates the case \(d=3\). For an hyperedge \(e=\{u,v,w\}\) in \(G\), we get \(d ! \binom{\mu-\NU+d}{d}\) schemes of the above type, one for each set \(\iota=\{i,j,k\}\in J\) and for each way of ordering them given by a permutation \(\pi \in S_d\). Probabilities that are not given are equal to 0.}
\label{fig:edge}
\end{figure}
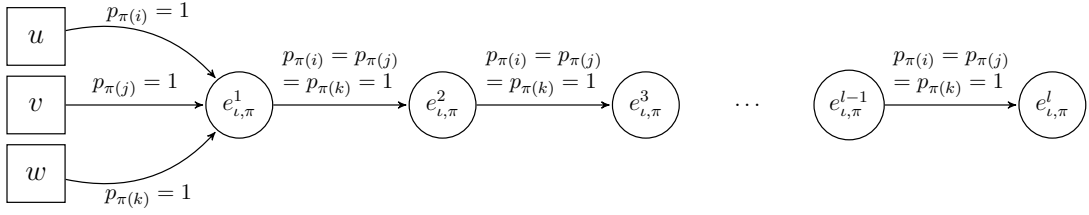

Note that each node in \(\overline G\) is already covered by $\NU-d$ campaigns and that the instance generated is deterministic, in the sense that probability values are either 0 or 1.

Let us now fix a \BALf{\MU}{\NU} instance \(P=(\overline G=(\overline V, \overline A), \PPP, \III, \overline k)\) resulting from the transform \(\tau\) as image of a \DKSH{d} instance \(Q=(G=(V, E), k)\). Clearly, $\overline{V}$ is of cardinality $|V| + \lambda l |E|$ and $\overline{A}$ is of cardinality  $\lambda (l + d-1) |E|$.
Let us denote by $\SSSet$ the set of feasible solutions for \(P\).
For each $\SSS \in \SSSet$, it holds that the objective function $\Phi(\SSS)$ can be decomposed as
\(
  \Phi(\SSS) = \Phi_{\boxempty}(\SSS) + \Phi_{\ocircle}(\SSS),
\)
where 
\[
    \Phi_{\boxempty}(\SSS):=\NoSM(\rho_\XXX(\III \cup \SSS)\cap V_{\boxempty}) \quad\text{ and }\quad \Phi_{\ocircle}(\SSS):=\NoSM(\rho_\XXX(\III \cup \SSS)\cap V_{\ocircle}),
\]
for \(\XXX\) being the only possible (deterministic) outcome profile.
Now, let $\SSS^*$, $\SSS^*_\boxempty$, and $\SSS^*_\ocircle$ denote  optimal solutions to the problem of maximizing $\Phi$, $\Phi_\boxempty$, and $\Phi_\ocircle$, respectively, over $\SSSet$. The following lemma whose proof can be found in Appendix~\ref{app:hard} collects three statements. The first statement says that an optimal solution to $\Phi$ also maximizes \(\Phi_\ocircle\). 
The second statement says that there exists a feasible solution to \(P\) which achieves at least a multiple of \(l \cdot p\) of the objective value in \DKSH{d} with \(p = d!/d^d\). 
In the third statement, we observe that from a feasible solution to \(P\), we can construct a feasible solution to \(Q\) while loosing only a factor of \(\lambda l\) in objective value.

\begin{restatable}{lemma-rstbl}{hardnessclaims}
\label{lemma: hardnessclaims}
    \begin{enumerate}[(1)]
        \item\label{claim:optimality circle} An optimal solution to \(\Phi\) also maximizes \(\Phi_\ocircle\), i.e., \(\Phi_\ocircle(\SSS_\ocircle^*)=\Phi_\ocircle(\SSS^*)\).
        \item\label{claim:relation to lrds} It holds that $\Phi_\ocircle(\SSS^*_\ocircle)\ge l\cdot p \cdot \dksh{d}$, where \(\dksh{d}\) is the optimal value of \DKSH{d} in \(Q\) and \(p = d!/d^d\).
        \item\label{claim:construct S phi} Given $\SSS\in \SSSet$, we can, in polynomial time, build a feasible solution $S$ of \(Q\) such that $|E(S)| \ge \Phi_\ocircle(\SSS)/(\lambda l)$.
    \end{enumerate}
\end{restatable}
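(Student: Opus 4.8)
The plan is to prove the three statements of Lemma~\ref{lemma: hardnessclaims} in order, exploiting heavily the fact that the instance $P = \tau(G,k)$ is deterministic (probabilities are $0$ or $1$), so there is a single outcome profile $\XXX$ and we can reason purely combinatorially about which nodes are reached by which campaigns. First I would set up the basic bookkeeping: every node of $\overline V$ is initially reached by exactly the $\NU - d$ campaigns $I_{\MU-\NU+d+1},\dots,I_\MU$ (since those seed sets equal $\overline V$), and the campaigns $1,\dots,\MU-\NU+d$ can only be added through the seed set $\SSS$. For a ``box'' node $v\in V_\boxempty$, it has no in-arcs in $\overline A$, so it is reached by campaign $i\le\MU-\NU+d$ iff $(v,i)\in\SSS$. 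For a ``circle'' node $e^t_{\iota,\pi}$, by the gadget in Figure~\ref{fig:edge}, it is reached by campaign $\pi(j)$ (for $j\in[d]$, where $\iota=\{\pi(1),\dots,\pi(d)\}$ read off appropriately) iff the corresponding box node $v_j\in e$ is reached by that campaign, i.e.\ iff $(v_j,\pi(j))\in\SSS$. Hence a circle node $e^t_{\iota,\pi}$ is reached by exactly $\NU-d+|\{j: (v_j,\text{the }j\text{-th campaign of }\iota\text{ under }\pi)\in\SSS\}|$ campaigns; it is counted in $\Phi_\ocircle$ iff this number is $0$ (impossible since $\NU-d\ge 1$ when $\NU\ge d+1$... actually $\NU - d \ge 1$) or $\ge\NU$, i.e.\ iff all $d$ of the relevant pairs lie in $\SSS$.

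For statement~(\ref{claim:optimality circle}), I would argue that modifying an optimal $\SSS^*$ for $\Phi$ to an optimal-for-$\Phi_\ocircle$ solution cannot decrease $\Phi$: the point is that $\Phi_\boxempty(\SSS)\le |V_\boxempty| = |V|$ always, whereas the ``value per unit'' of circle nodes is blown up by the factor $l = |V|+1$, so any solution that sacrifices even one ``activated edge-gadget'' worth of circle nodes to gain box nodes loses at least $l > |V|\ge\Phi_\boxempty$ — a net loss. More carefully, I would show $\Phi_\ocircle(\SSS^*) = \Phi_\ocircle(\SSS^*_\ocircle)$ by contradiction: if $\Phi_\ocircle(\SSS^*) < \Phi_\ocircle(\SSS^*_\ocircle)$, then since circle-node contributions come in blocks of $l$ (for each triple $(e,\iota,\pi)$ either all $l$ nodes $e^t_{\iota,\pi}$ are counted or none are), we get $\Phi_\ocircle(\SSS^*_\ocircle) - \Phi_\ocircle(\SSS^*)\ge l$, and then $\Phi(\SSS^*_\ocircle) \ge \Phi_\ocircle(\SSS^*_\ocircle) \ge \Phi_\ocircle(\SSS^*) + l > \Phi_\ocircle(\SSS^*) + |V| \ge \Phi_\ocircle(\SSS^*) + \Phi_\boxempty(\SSS^*) = \Phi(\SSS^*)$, contradicting optimality of $\SSS^*$.

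For statement~(\ref{claim:relation to lrds}), given an optimal set $S^{\mathrm{DKSH}}\subseteq V$ of size $\le k$ for \DKSH{d} in $Q$, I would construct a feasible $\SSS$ for $P$ of ``cost'' $\le k$ that activates, for each hyperedge $e\subseteq S^{\mathrm{DKSH}}$, at least one triple's worth of $l$ circle nodes. The natural choice: pick a single fixed set $\iota_0\in J$ of $d$ campaigns among $\{1,\dots,\MU-\NU+d\}$ (possible since $|J| = \binom{\MU-\NU+d}{d}\ge 1$), fix an arbitrary bijection $\phi$ from $S^{\mathrm{DKSH}}$-relevant slots, and put $(v,c)$ into $\SSS$ for $v\in S^{\mathrm{DKSH}}$ where $c$ ranges over $\iota_0$ in a way consistent with a single permutation — here the factor $p = d!/d^d$ enters because assigning each $v\in e$ one of the $d$ campaigns in $\iota_0$ uniformly at random, a fraction $d!/d^d$ of edges get all $d$ campaigns used exactly once (matching some $\pi\in S_d$), and a probabilistic/averaging argument (or derandomization) shows we can select $\SSS$ with $|\SSS|\le |S^{\mathrm{DKSH}}|\le k$ activating $\ge p\cdot|E(S^{\mathrm{DKSH}})|$ edge-gadgets, each worth $l$ circle nodes. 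This gives $\Phi_\ocircle(\SSS^*_\ocircle)\ge\Phi_\ocircle(\SSS)\ge l\cdot p\cdot\dksh{d}$. The main obstacle is getting this counting exactly right: one must verify that assigning campaigns from $\iota_0$ to the vertices of $S^{\mathrm{DKSH}}$ \emph{consistently across all edges} (a single global labeling) still hits a $p$-fraction of edges, and that the budget is respected since each $v$ gets at most... actually each $v\in S^{\mathrm{DKSH}}$ may need to receive $d$ different campaign-labels to serve different edges, costing $d$ per vertex, so one needs the randomized one-label-per-vertex argument to keep cost $\le k$ — this is the delicate point and where the $d!/d^d$ loss is unavoidable.

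For statement~(\ref{claim:construct S phi}), given any feasible $\SSS\in\SSSet$, I would define $S := \{v\in V : (v,i)\in\SSS \text{ for some } i\in[\MU]\}$ (or just $i\le\MU-\NU+d$), which has $|S|\le|\SSS|\le k$, hence is feasible for $Q$. Then I would show $|E(S)|\ge\Phi_\ocircle(\SSS)/(\lambda l)$: by the reachability analysis above, every circle node counted in $\Phi_\ocircle(\SSS)$ is of the form $e^t_{\iota,\pi}$ where all $d$ relevant box-nodes $v_1,\dots,v_d\in e$ satisfy $(v_j,\cdot)\in\SSS$, hence $e\subseteq S$, i.e.\ $e\in E(S)$; and for each such $e$ there are at most $\lambda l$ circle nodes in total ($\lambda = |S_d|\cdot|J|$ choices of $(\iota,\pi)$, times $l$ copies), so $\Phi_\ocircle(\SSS)\le\lambda l\cdot|E(S)|$, giving the bound. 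This statement is the easiest; the proof is a direct counting bound with no probabilistic content.
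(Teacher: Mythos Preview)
Your proposal is correct and matches the paper's argument: the same contradiction for~(1), the same direct counting bound for~(3), and for~(2) the paper packages your random-coloring existence argument via an auxiliary \textsc{Multicld-Edge Densest-Sub-$d$-hypergraph} problem together with a separate lemma showing $\mcdsh{d}\ge p\cdot\dksh{d}$, but the content is identical. One minor technicality you should make explicit: your reachability bookkeeping and the ``contributions come in blocks of $l$'' claim tacitly assume all seeds lie in $V_\boxempty$; the paper disposes of this with a one-line ``w.l.o.g.\ $\SSS^*\cap V_\ocircle=\emptyset$'' (any seed on a circle node can be moved to the head of its path or to a box node without decreasing $\Phi_\ocircle$ or increasing the budget), and you need the same remark for both~(1) and~(3).
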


We are now ready to show the following relations between the complexity of the two problems. Note that the assumption that $g$ is non-increasing is w.l.o.g.
\begin{theorem}
    Let $d\ge 2$, $\NU\ge d+1$, and \(p = d!/d^d\), then we have the following two cases:
   
          \textbf{Case \(d=2\):} Let $\alpha(n)=n^{-g(n)}$ with $g$ being non-increasing, $g(n)=o(1)$ and \(\alpha(n)\in (0,1]\) and $\beta(n) = \frac{p \cdot  n^{-6g(n)}}{2\lambda}$. 
          
          \textbf{Case \(d \ge 3\):} Let $\alpha(n)=n^{-\eps(d)}$ where \(\eps(d) > 0\) is a constant which depends on \(d\), \(\alpha(n)\in (0,1]\) and  $\beta(n) = \frac{p \cdot n^{-\eps'(d)}}{2\lambda}$, with \(\eps'(d) = (d+4)\cdot \eps(d)\).
    
    In both cases the following statement holds: If there is an $\alpha(|\overline{V}|)$-approximate algorithm for the deterministic \BALf{\MU}{\NU} problem, then there is a $\beta(|V|)$-approximate algorithm for \DKSH{d}. Here $|\overline{V}|$ and $|V|$ denote the number of vertices in the \BALf{\MU}{\NU} and the \DKSH{d} problems, respectively and $\lambda=d ! |J|$.
\end{theorem}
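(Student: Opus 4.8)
The plan is to convert a hypothetical $\alpha(|\overline V|)$-approximation for \BALf{\MU}{\NU} into a $\beta(|V|)$-approximation for \DKSH{d}: on a \DKSH{d}-instance $Q=(G=(V,E),k)$ I would form $P:=\tau(Q)$, run the assumed \BALf{\MU}{\NU} algorithm on $P$ to obtain $\SSS\in\SSSet$ with $\Phi(\SSS)\ge\alpha(|\overline V|)\,\Phi(\SSS^*)$, and then decode $\SSS$ using Lemma~\ref{lemma: hardnessclaims}. Throughout I may assume $E\neq\emptyset$ (otherwise $\dksh{d}=0$ and every set is optimal) and that $|V|$ exceeds a constant depending only on $\MU,\NU,d$ (otherwise $Q$ has constant size and can be solved optimally in constant time; since $\beta<1$ this is enough). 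In particular $\dksh{d}\ge1$.

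The first ingredient is the size blow-up of $\tau$. From $|\overline V|=|V|+\lambda l|E|$ with $l=|V|+1$, $|E|\le\binom{|V|}{d}$, and $\lambda$ constant, I get $|\overline V|=O(|V|^{d+1})$, hence $|\overline V|\le|V|^{6}$ for $d=2$ and $|\overline V|\le|V|^{d+4}$ for $d\ge3$, once $|V|$ is large enough. Since $|\overline V|\ge|V|$, $\alpha(n)\in(0,1]$, and, in the case $d=2$, $g$ is non-increasing, this gives the key estimate $\alpha(|\overline V|)\ge|V|^{-6g(|V|)}$ for $d=2$, and $\alpha(|\overline V|)=|\overline V|^{-\eps(d)}\ge|V|^{-(d+4)\eps(d)}=|V|^{-\eps'(d)}$ for $d\ge3$.

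Next, since $\SSS^*$ maximizes $\Phi$ and $\Phi\ge\Phi_\ocircle$ pointwise, and by the second statement of Lemma~\ref{lemma: hardnessclaims}, I get $\Phi(\SSS^*)\ge\Phi(\SSS^*_\ocircle)\ge\Phi_\ocircle(\SSS^*_\ocircle)\ge l\,p\,\dksh{d}$ with $p=d!/d^d$; and because $\Phi_\boxempty(\SSS)\le|V_\boxempty|=|V|=l-1<l$,
\[
  \Phi_\ocircle(\SSS)=\Phi(\SSS)-\Phi_\boxempty(\SSS)\ \ge\ \alpha(|\overline V|)\,l\,p\,\dksh{d}-l .
\]
The third statement of Lemma~\ref{lemma: hardnessclaims} then yields, in polynomial time, a feasible $S$ for $Q$ with $|E(S)|\ge\Phi_\ocircle(\SSS)/(\lambda l)\ge\bigl(\alpha(|\overline V|)\,p\,\dksh{d}-1\bigr)/\lambda$. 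I also set $S_0:=e$ for an arbitrary $e\in E$, so that $|S_0|=d\le k$ and $|E(S_0)|\ge1$, and the algorithm outputs whichever of $S$ and $S_0$ induces more hyperedges.

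It remains to check $\max\{|E(S)|,|E(S_0)|\}\ge\beta(|V|)\,\dksh{d}$, which I would do by a case split on $\dksh{d}$. If $\alpha(|\overline V|)\,p\,\dksh{d}\ge2$, then $\alpha(|\overline V|)\,p\,\dksh{d}-1\ge\tfrac12\alpha(|\overline V|)\,p\,\dksh{d}$, so by the key estimate $|E(S)|\ge\alpha(|\overline V|)\,p\,\dksh{d}/(2\lambda)\ge\beta(|V|)\,\dksh{d}$. Otherwise $\dksh{d}<2/(\alpha(|\overline V|)\,p)$, and substituting the key estimate and the definition of $\beta$ gives $\beta(|V|)\,\dksh{d}<1/\lambda\le1\le|E(S_0)|$. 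In either case the output is $\beta(|V|)$-approximate, and every step ($\tau$, the \BALf{\MU}{\NU} algorithm, and the decoding of Lemma~\ref{lemma: hardnessclaims}) runs in polynomial time, which is the claim. I expect the only delicate point to be this second case: discarding $\Phi_\boxempty$ costs an additive $\approx l$, which can swamp the main term $l\,p\,\dksh{d}$ exactly when $\dksh{d}$ is small (recall $p<1$), and the trivial one-hyperedge fallback $S_0$ is precisely what rescues that regime; everything else is routine arithmetic with the exponents, with all the substantive content ($p=d!/d^d$ and the $\lambda l$ loss) packaged inside Lemma~\ref{lemma: hardnessclaims}.
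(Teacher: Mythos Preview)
Your proof is correct and follows essentially the same route as the paper: transform via $\tau$, run the assumed approximation, decode through Lemma~\ref{lemma: hardnessclaims}(\ref{claim:construct S phi}), fall back to a single-hyperedge solution in the degenerate regime, and convert $\alpha(|\overline V|)$ to $\beta(|V|)$ via a polynomial bound on $|\overline V|$. The only differences are cosmetic---the paper handles the degenerate regime by modifying $\SSS$ upfront (ensuring $\Phi_\ocircle(\SSS)\ge l$, so that $\Phi_\ocircle(\SSS)\ge\Phi(\SSS)/2$) rather than by your post-hoc case split on $\dksh{d}$, and it passes through Lemma~\ref{lemma: hardnessclaims}(\ref{claim:optimality circle}) where you instead use the more direct chain $\Phi(\SSS^*)\ge\Phi(\SSS_\ocircle^*)\ge\Phi_\ocircle(\SSS_\ocircle^*)$.
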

\begin{proof}
    Let \(Q=(G, k)\) be an instance of the \DKSH{d} problem and let \(P:=(\overline G=(\overline V, \overline E), \PPP, \III, \overline k)=\tau(G, k)\) be the instance of the \BALf{\MU}{\NU} problem obtained by the transform \(\tau\).   
    For brevity, let \(n:=|V|\) and \(\overline n:= |\overline V|\). Moreover, let $\SSS$ be an \(\alpha(|\overline V|)\)-approximate solution to \(P\), that is \(\Phi(\SSS)\ge \alpha(|\overline V|) \Phi(\SSS^*)\). We show how to construct a \(\beta(n)\)-approximate solution \(S\) to \(Q\).
    
    Using Lemma~\ref{lemma: hardnessclaims},~(\ref{claim:construct S phi}), we obtain a feasible solution \(S\) to \(Q\) with \(|E(S)| \ge \Phi_\ocircle(\SSS)/(\lambda l)\). We proceed by lower-bounding \(\Phi_\ocircle(\SSS)\). We can w.l.o.g.\ assume that \(\SSS \cap V_{\ocircle}=\emptyset\) and that \(\Phi_\ocircle(\SSS) \ge l\). Indeed, if \(\Phi_\ocircle(\SSS) < l\) then \(\Phi_\ocircle(\SSS) = 0\) and we can build in polynomial-time a better solution by identifying one edge \((v_1,\ldots,v_d)\) and propagating campaign \(i\) in \(v_i\). This further implies that \(\Phi_\ocircle(\SSS)\ge\Phi_\boxempty(\SSS)\) as \(l > n \ge \Phi_\boxempty(\SSS)\). 
   We obtain
    \[
      \Phi_\ocircle(\SSS)
      \ge \frac{\Phi(\SSS)}{2}
      \ge \frac{\alpha(\overline n)}{2}\cdot \Phi(\SSS^*)
      \ge \frac{\alpha(\overline n)}{2}\cdot \Phi_\ocircle(\SSS^*)
      = \frac{\alpha(\overline n)}{2}\cdot \Phi_\ocircle(\SSS^*_\ocircle)
      \ge \frac{\alpha(\overline n)\cdot l\cdot p}{2}\cdot \dksh{d},
    \]
    using Lemma~\ref{lemma: hardnessclaims},~(\ref{claim:optimality circle}) and~(\ref{claim:relation to lrds}) in the last two steps. In summary, we have \(|E(S)| \ge \frac{\alpha(\overline n)\cdot p}{2 \lambda} \dksh{d}\). Note that \(2\lambda/p\) is a constant.
    
        \textbf{Case \(d=2\):} 
        Since $g$ is non-increasing, we get
        \(
            \alpha(\overline n)
            = {\overline n}^{-g(\overline n)}
            = 2^{-g(\overline n)\log(\overline n)}
            \ge 2^{-g(n)\log(2\lambda n^3)}
            \ge 2^{-6 g(n)\log(n)}
            = n^{-6 g(n)},
        \)
        where we used \(2\le n\le \overline n \le 2\lambda n^3\) and $\lambda\le \MU^2\le n^2$ (as \(d=2\)). This completes this case.

        \textbf{Case \(d \ge 3\):}
        In this case
        \(
            \alpha(\overline n)
            = {\overline n}^{-\epsilon(d)}
            = 2^{-\epsilon(d)\log(\overline n)}
            \ge 2^{-\epsilon(d)\log(2\lambda n^3)}
            \ge 2^{-(d+4)\epsilon(d)\log(n)}
            = n^{-(d+4)\epsilon(d)},
        \)
        where we used \(2\le n\le \overline n \le 2\lambda n^3\) and $\lambda\le \MU^d\le n^d$. This completes this case.\qedhere
\end{proof}

To sum up, our reduction shows that: (1) as \DKSH{d} cannot be approximated within $1/n^{\epsilon}$ for some constant \(\epsilon > 0\) which depends on \(d\), if a particular class of one way functions exists  \cite{applebaum2013pseudorandom}, we have shown that the same hardness result holds for any $\BALf{\MU}{\NU}$ problem with $\NU \ge d+1 \ge 4$; (2) moreover as \DKS cannot be approximated within $1/n^{o(1)}$,  if the Gap-ETH holds~\cite{manurangsi2017almost}, we have shown that the same hardness result holds for any $\BALf{\MU}{\NU}$ problem with $\NU \ge 3$. 

Other approximation hardness results exist for \DKS. We review them here, highlighting the hardness results that our reduction implies in each case.
\begin{itemize}
    \item \DKS cannot be approximated within any constant, if the Unique Games with Small Set Expansion (UGSSE) conjecture holds \cite{raghavendra2010graph}. Therefore, under the UGSSE  conjecture it is easy to prove that the reduction given above shows that any $\BALf{\MU}{\NU}$ problem with $\NU\ge 3$ cannot be approximated within any constant.
    \item \DKS cannot be approximated within $n^{-(\log\log n)^{-c}}$, for some constant $c$ if the exponential time hypothesis holds \cite{manurangsi2017almost}. Under the same conjecture, our reduction  implies the same hardness result for any $\BALf{\MU}{\NU}$ problem with $\NU\ge 3$.
\end{itemize}

\section{Approximation Algorithm for the Heterogeneous Case} \label{section: heterogeneous}
Our approach for maximizing $\Phi(\SSS)$ decomposes it as $\Phi(\SSS)=\Phi^0(\SSS) + \Phi^{\ge 1}(\SSS)$ and works on each summand separately. In the following two subsections, we give two different algorithms for maximizing $\Phi^{\ge 1}(\SSS)$. At the end of the section, we show how to combine them.

\subparagraph{Greedily Picking Tuples.}
In this paragraph, we present \textsc{GreedyTuple}\((\eps, \delta, \ell, \III, \NU, k)\) that, for given $\ell$, computes a solution to maximizing $\Phi^{\ge \ell}$. For $\ell=\NU - 1$ the algorithm is identical to the standard greedy hill climbing algorithm. 
For the general case of $\ell\le \NU - 1$, we will show the following theorem. The algorithm is inspired by a greedy algorithm, called Greedy1, due to~\cite{DAngeloetal} for solving the so-called maximum coverage with pairs problem.
\begin{theorem} \label{theorem:greedytuple}
  Let $\epsilon \in (0,1)$, \(\delta\le 1/2\), and $\ell\in [1,\NU - 1]$. If $k\ge 2\NU/\eps$, with probability at least \(1-\delta\), the algorithm \textsc{GreedyTuple}\((\eps, \delta, \ell, \III, \NU, k)\) returns  a solution \(\SSS\) satisfying
  \(
    \Phi^{\ge \ell}(\SSS)
    \ge (1-\frac{1}{e}-\eps)/\binom{k-1}{\NU - \ell - 1}\cdot \Phi^{\ge \ell}(\SSS_{\ge \ell}^*),
  \)
  where $\SSS_{\ge \ell}^*$ is an optimal solution to \(\Phi^{\ge \ell}\).
\end{theorem}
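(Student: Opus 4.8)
Write \(\lambda:=\NU-\ell\) (so the target denominator is \(\binom{k-1}{\lambda-1}=\binom{k-1}{\NU-\ell-1}\)) and \(f:=\Phi^{\ge\ell}\). The plan is to analyze \textsc{GreedyTuple} as the natural ``greedy on \(\lambda\)-tuples'': starting from \(\SSS_0:=\emptyset\) it performs \(t:=\lfloor k/\lambda\rfloor\) rounds, and in round \(i\in\{0,\dots,t-1\}\) it sets \(\SSS_{i+1}:=\SSS_i\cup\tilde\tau_i\), where \(\tilde\tau_i\) maximizes \(\tilde f(\SSS_i\cup\tau)\) over the set \(U:=\{\tau\subseteq\hat V:|\tau|=\lambda\}\), and \(\tilde f\) is the estimate returned by \texttt{approx}\((f,\,\cdot\,,\III,\NU,\eps',\delta')\) evaluated on each candidate \(\SSS_i\cup\tau\), with \(\eps':=\eps/(2e\binom{k}{\lambda})\) and \(\delta':=\delta/(t\,|U|)\). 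For \(\ell=\NU-1\) we have \(\lambda=1\) and this is precisely greedy hill climbing; in general it adds \(\lambda\)-tuples in the spirit of Greedy1 of~\cite{DAngeloetal}. Since \(|U|=\binom{n\MU}{\lambda}\) is polynomial (as \(\lambda,\MU\) are constants), \(t\le k\), and every call to \texttt{approx} runs in polynomial time with the above parameters (Lemma~\ref{lemma:approximate f}), the algorithm is polynomial. First I would note that, by a union bound over the at most \(t\,|U|\) calls, with probability at least \(1-\delta\) they \emph{all} return \((1\pm\eps')\)-approximations; the rest of the analysis is conditioned on this event.

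Next I would set up the greedy recursion using Lemmas~\ref{lemma:exists element} and~\ref{lemma:approximate difference}. For \(i\le t-1\) we have \(|\SSS_i|\le i\lambda\le(t-1)\lambda\le k-\lambda\), so both lemmas apply to \(\SSS_i\). Since \(\eps'=(\eps/2)/(e\binom{k}{\lambda})\), Lemma~\ref{lemma:approximate difference} with its accuracy parameter instantiated to \(\eps/2\), combined with Lemma~\ref{lemma:exists element}, gives for each round that \emph{either} \(f(\SSS_i)\ge(1-1/e)f(\SSS^*_{\ge\ell})\), \emph{or}
\[
  f(\SSS_{i+1})-f(\SSS_i)\ \ge\ \big(1-\tfrac{\eps}{2}\big)\cdot\frac{f(\SSS^*_{\ge\ell})-f(\SSS_i)}{\binom{k}{\lambda}}.
\]
If the first alternative ever occurs, I am done: by monotonicity of \(f\) (already used in the proof of Lemma~\ref{lemma:exists element}), \(f(\SSS_t)\ge f(\SSS_i)\ge(1-1/e)f(\SSS^*_{\ge\ell})\ge(1-1/e-\eps)f(\SSS^*_{\ge\ell})/\binom{k-1}{\lambda-1}\), since \(\binom{k-1}{\lambda-1}\ge1\).

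Otherwise the displayed inequality holds in every round, so unrolling \(f(\SSS^*_{\ge\ell})-f(\SSS_{i+1})\le\big(1-\tfrac{1-\eps/2}{\binom{k}{\lambda}}\big)\big(f(\SSS^*_{\ge\ell})-f(\SSS_i)\big)\) over \(t\) rounds and using \(f(\SSS_0)=f(\emptyset)\ge0\) yields \(f(\SSS_t)\ge(1-e^{-x})f(\SSS^*_{\ge\ell})\) with \(x:=t(1-\eps/2)/\binom{k}{\lambda}\). It then remains to show \(1-e^{-x}\ge(1-1/e-\eps)/\binom{k-1}{\lambda-1}\). Using \(\binom{k}{\lambda}=\tfrac{k}{\lambda}\binom{k-1}{\lambda-1}\): from \(t\le k/\lambda\) I get \(x\le1/\binom{k-1}{\lambda-1}\le1\), so the chord bound \(1-e^{-x}\ge(1-1/e)x\), valid on \([0,1]\), applies; and from \(t=\lfloor k/\lambda\rfloor\ge k/\lambda-1\) I get \(x\ge(1-\lambda/k)(1-\eps/2)/\binom{k-1}{\lambda-1}\). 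Since \(k\ge2\NU/\eps\) and \(\lambda=\NU-\ell\le\NU-1\) force \(\lambda/k\le\eps/2\), this gives \(x\ge(1-\eps/2)^2/\binom{k-1}{\lambda-1}\ge(1-\eps)/\binom{k-1}{\lambda-1}\), and hence \(f(\SSS_t)\ge(1-1/e)(1-\eps)f(\SSS^*_{\ge\ell})/\binom{k-1}{\lambda-1}\ge(1-1/e-\eps)f(\SSS^*_{\ge\ell})/\binom{k-1}{\lambda-1}\), as claimed.

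The routine parts are the union bound over the \texttt{approx} calls and the binomial-coefficient algebra. The part that needs genuine care — and which I expect to be essentially the only obstacle — is the calibration of constants: running \texttt{approx} to accuracy \(\eps/2\) rather than \(\eps\) is exactly what absorbs \emph{both} the estimation error of \(\tilde f\) \emph{and} the loss incurred because \(\lambda\) need not divide \(k\) (the hypothesis \(k\ge2\NU/\eps\) is used only to bound \(\lambda/k\le\eps/2\)), while the final additive \(\eps\) — as opposed to, say, a lost factor of \(2\) — comes out only because I use the tight chord inequality \(1-e^{-x}\ge(1-1/e)x\) on \([0,1]\) instead of the cruder \(1-e^{-x}\ge x(1-x/2)\).
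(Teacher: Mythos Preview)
Your proof is correct and follows essentially the same approach as the paper: the paper packages the per-round progress into an auxiliary Lemma~\ref{lem:equiv1} (derived from Lemmas~\ref{lemma:exists element} and~\ref{lemma:approximate difference} via an intermediate Lemma~\ref{lem:equiv3}), but the recursion, the union bound over the \texttt{approx} calls with the same \(\eps'=\eps/(2e\binom{k}{\lambda})\), and the final calculus using \(\binom{k}{\lambda}=\tfrac{k}{\lambda}\binom{k-1}{\lambda-1}\), the chord bound \(1-e^{-x}\ge(1-1/e)x\) on \([0,1]\), and \((1-\eps/2)^2\ge1-\eps\) are identical to yours.
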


\begin{algorithm}[t]
    \(t:=\lceil \frac{k}{\NU-\ell} \rceil \binom{|\hat V|}{\NU-\ell}\), \(\delta'\leftarrow \delta/t\), \(\eps'\leftarrow \eps/(2e\cdot\binom{k}{\NU-\ell})\), $\SSS \leftarrow \emptyset$\;
    \While{$|\SSS| \leq k - (\NU - \ell)$}{
      Compute $\tau \leftarrow \argmax_{\tau \subseteq \hat{V}, |\tau| = {\nu-\ell}} \{\texttt{approx}(\Phi^{\ge \ell}, \SSS\cup \tau, \III, \NU, \epsilon', \delta')\}$,
      set $\SSS \leftarrow \SSS \cup \tau$
    }
    \Return $\SSS$
\caption{\textsc{GreedyTuple}\((\eps, \delta, \ell, \III, \NU, k)\)}
\label{Greedy1}
\end{algorithm}

We let $\SSS^i$ denote the set $\SSS$ at the end of iteration $i$ of the algorithm. The main idea underlying the analysis of \textsc{GreedyTuple} is very much related to the analysis of the standard greedy algorithm. That is (ignoring the approximation issue), every step of the algorithm incurs a factor of $(1-(1-1/\binom{k}{\NU-\ell})$. For $\ell=\NU-1$, this coincides with the standard case.
\begin{restatable}{lemma}{lemEquiv} \label{lem:equiv1}
    Let $0 < \epsilon <1$, \(\delta\le 1/2\), and $\ell\in [1,\NU - 1]$. With probability at least $1-\delta$, after each iteration $i$ of Algorithm~\ref{Greedy1}, it either holds that
    \[
        \Phi^{\ge \ell}(\SSS^i) \ge \Big(1-\Big(1-\frac{1-\frac{\eps}{2}}{\binom{k}{\NU-\ell}}\Big)^i\Big) \cdot \Phi^{\ge \ell}(\SSS_{\ge \ell}^*) \quad\text{ or }\quad
        \Phi^{\ge \ell}(\SSS^i) \ge \big(1-\frac{1}{e}\big)\cdot \Phi^{\ge \ell}(\SSS_{\ge \ell}^*).
    \]
\end{restatable}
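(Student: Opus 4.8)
The plan is to prove Lemma~\ref{lem:equiv1} by induction on the iteration number $i$, using Lemma~\ref{lemma:approximate difference} as the engine of the inductive step. The base case $i=0$ is trivial since $\SSS^0=\emptyset$ and the first bracket on the left-hand side is $0$. For the inductive step, suppose the statement holds after iteration $i-1$. If $\Phi^{\ge \ell}(\SSS^{i-1})\ge (1-1/e)\cdot\Phi^{\ge\ell}(\SSS^*_{\ge\ell})$, then by monotonicity $\Phi^{\ge\ell}(\SSS^i)\ge\Phi^{\ge\ell}(\SSS^{i-1})$ and the second alternative continues to hold, so we are done. Hence assume $\Phi^{\ge\ell}(\SSS^{i-1})< (1-1/e)\cdot\Phi^{\ge\ell}(\SSS^*_{\ge\ell})$, and in particular that the first alternative of the inductive hypothesis holds, i.e.\ $\Phi^{\ge\ell}(\SSS^{i-1})\ge\bigl(1-(1-\tfrac{1-\eps/2}{\binom{k}{\NU-\ell}})^{i-1}\bigr)\cdot\Phi^{\ge\ell}(\SSS^*_{\ge\ell})$.

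Now I would invoke Lemma~\ref{lemma:approximate difference} with $f=\Phi^{\ge\ell}$, $\SSS=\SSS^{i-1}$, $\lambda(f)=\NU-\ell$: the parameters match because the algorithm uses $\eps'=\eps/(2e\binom{k}{\NU-\ell})$ as the approximation accuracy, exactly of the form $\eps''/(e\binom{k}{\lambda(f)})$ with $\eps''=\eps/2$; also $|\SSS^{i-1}|\le k-(\NU-\ell)$ because the while-loop guard ensures we only add a $(\NU-\ell)$-tuple when $|\SSS|\le k-(\NU-\ell)$. Since we are in the case where the $(1-1/e)$-bound does \emph{not} hold for $\SSS^{i-1}$, Lemma~\ref{lemma:approximate difference} gives us that the tuple $\tilde\tau$ chosen by the algorithm (maximizing the approximate objective) satisfies
\[
  \Phi^{\ge\ell}(\SSS^{i-1}\cup\tilde\tau)-\Phi^{\ge\ell}(\SSS^{i-1})\ge (1-\tfrac{\eps}{2})\cdot\bigl(\Phi^{\ge\ell}(\SSS^{i-1}\cup\tau^*)-\Phi^{\ge\ell}(\SSS^{i-1})\bigr),
\]
where $\tau^*$ is the best exact $(\NU-\ell)$-tuple extension. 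Chaining this with Lemma~\ref{lemma:exists element}, which says $\Phi^{\ge\ell}(\SSS^{i-1}\cup\tau^*)-\Phi^{\ge\ell}(\SSS^{i-1})\ge\bigl(\Phi^{\ge\ell}(\SSS^*_{\ge\ell})-\Phi^{\ge\ell}(\SSS^{i-1})\bigr)/\binom{k}{\NU-\ell}$, yields a per-step progress of at least $\frac{1-\eps/2}{\binom{k}{\NU-\ell}}\bigl(\Phi^{\ge\ell}(\SSS^*_{\ge\ell})-\Phi^{\ge\ell}(\SSS^{i-1})\bigr)$. Writing $\SSS^i=\SSS^{i-1}\cup\tilde\tau$ and letting $\Delta_j:=\Phi^{\ge\ell}(\SSS^*_{\ge\ell})-\Phi^{\ge\ell}(\SSS^j)$, this reads $\Delta_i\le\bigl(1-\tfrac{1-\eps/2}{\binom{k}{\NU-\ell}}\bigr)\Delta_{i-1}$, and combining with the inductive hypothesis $\Delta_{i-1}\le\bigl(1-\tfrac{1-\eps/2}{\binom{k}{\NU-\ell}}\bigr)^{i-1}\Phi^{\ge\ell}(\SSS^*_{\ge\ell})$ closes the induction for the first alternative.

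Two bookkeeping points complete the argument. First, the probabilistic guarantee: each call to \texttt{approx} fails with probability at most $\delta'=\delta/t$, where $t=\lceil k/(\NU-\ell)\rceil\binom{|\hat V|}{\NU-\ell}$ upper-bounds the total number of \texttt{approx} evaluations over all iterations (the loop runs at most $\lceil k/(\NU-\ell)\rceil$ times, each scanning at most $\binom{|\hat V|}{\NU-\ell}$ candidate tuples); a union bound over all of them gives overall success probability at least $1-\delta$, and on this event Lemma~\ref{lemma:approximate difference} applies at every step as used above. Second, I should note that Lemma~\ref{lemma:approximate difference} is stated for $f\in\{\Psi,\Phi^{\ge1},\ldots,\Phi^{\ge\NU}\}$, which covers $\Phi^{\ge\ell}$ for $\ell\in[1,\NU-1]$ as required by the hypothesis of this lemma. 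I expect the main (though mild) obstacle to be making sure the accuracy parameters line up precisely — that the algorithm's $\eps'$ is exactly what Lemma~\ref{lemma:approximate difference} needs with $\eps''=\eps/2$, so that the resulting factor is $(1-\eps/2)$ per step and the recursion telescopes to the claimed $\bigl(1-(1-\tfrac{1-\eps/2}{\binom{k}{\NU-\ell}})^i\bigr)$ form; everything else is a routine greedy-analysis induction.
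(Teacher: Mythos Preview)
Your proposal is correct and follows essentially the same approach as the paper. The only cosmetic difference is that the paper first isolates the per-step progress statement (your combination of Lemma~\ref{lemma:approximate difference} with Lemma~\ref{lemma:exists element}, together with the union bound over all \texttt{approx} calls) as an auxiliary lemma, and then runs the induction you describe; you fold both steps into a single inductive argument, which works just as well.
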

The proof of this lemma can be found in Appendix~\ref{section: appendix greedy tuple}, it uses Lemmata~\ref{lemma:exists element}~and~\ref{lemma:approximate difference}.
We are now ready to give the proof of Theorem~\ref{theorem:greedytuple}.
\begin{proof}[Proof of Theorem~\ref{theorem:greedytuple}]
    Let $\SSS$ denote the set returned by the algorithm. Clearly, $\Phi^{\ge \ell}(\SSS)\ge \Phi^{\ge \ell}(\SSS^\iota)$, where $\iota$ denotes the number of iterations of the while loop in the algorithm.
    By assumption $k\ge 2\NU/\eps$ and thus $\iota = \lfloor \frac{k}{\NU-\ell}\rfloor \ge \frac{k}{\NU-\ell} - 1 \ge (1-\frac{\eps}{2})\cdot\frac{k}{\NU-\ell}$.
    Using Lemma~\ref{lem:equiv1} for $\SSS^\iota$ yields that either $\Phi^{\ge \ell}(\SSS^\iota) \ge (1-1/e)\cdot\Phi^{\ge \ell}(\SSS_{\ge \ell}^*)$ or
    \begin{align*}
        \Phi^{\ge \ell}(\SSS^\iota)
        \ge \Big(1-\Big(1-\frac{1-\frac{\eps}{2}}{\binom{k}{\NU-\ell}}\Big)^\iota\Big) \cdot\Phi^{\ge \ell}(\SSS_{\ge \ell}^*)
        \ge \Big(1-\Big(1-\frac{1-\frac{\eps}{2}}{\binom{k}{\NU-\ell}}\Big)^{(1-\frac{\eps}{2})\frac{k}{\NU-\ell}}\Big)\cdot \Phi^{\ge \ell}(\SSS_{\ge \ell}^*).
    \end{align*}
    For the former case, note that $1-1/e$ is greater than the approximation factor required by the theorem.
    For the latter case note that, as $1-x\leq \exp(-x)$ for any real $x$, we
    have
    \[
      1 - \Big(1-\frac{1-\frac{\eps}{2}}{\binom{k}{\NU-\ell}}\Big)^{(1-\frac{\eps}{2})\frac{k}{\NU-\ell}}
      \ge 1 - \exp\Big(\frac{-(1-\frac{\eps}{2})^2}{\binom{k-1}{\NU-\ell-1}}\Big)
      \ge \frac{1 - \frac{1}{e} - \eps}{\binom{k-1}{\NU-\ell-1}},
    \]
    where the last inequality uses that $1-\exp(-x)\le x\cdot(1-\exp(-1))$ and $(1-1/e)(1-x)\le 1-1/e-x$ for any $x\ge 0$. This completes the proof.
\end{proof}

\subparagraph{Being Iteratively Greedy.}
Recall that, at the beginning of this section, we have defined $\Phi^{\ge \ell}(\SSS):=\E_\XXX[\NoSM(\rho_\XXX(\III \cup \SSS)\setminus (\cup_{j=0}^{\ell-1} V^{j,\III}_\XXX ))]$. We now extend this notation by letting
\[\textstyle
    \Phi^{\ge \ell}_{\BETA}(\RRR,\SSS):=\E_\XXX[\NoSMs_{\MU, \BETA}(\rho_\XXX(\RRR \cup \SSS)\setminus\bigcup_{j=0}^{\ell-1} V^{j,\RRR}_\XXX)]
\]
where \(\ell \in [\NU-1]\) and $\beta\in [\NU]$; we will mainly be working with the case $\beta= \ell +1$.
The function measures the expected number of nodes that are reached by at least $\BETA$ campaigns from $\RRR \cup \SSS$ within the set of nodes that have originally been reached by at least $\ell$ campaigns from $\RRR$.
Our goal now is to maximize \(\Phi(\cdot)\) through the following iterative scheme:
for $\ell$ from $1$ to $\NU-1$, we find sets $\SSS^{[\ell]}$ of size $\lfloor k/(\NU-1)\rfloor$ maximizing \(\Phi^{\ge \ell}_{\ell+1}(\RRR^{[\ell]}, \cdot)\), where $\RRR^{[\ell]}:=\III\cup \bigcup_{j=1}^{\ell-1}\RRR^{[j]}$.
That is, in the $\ell^{th}$ iteration, we maximize the number of nodes reached by $\ell+1$ campaigns that have previously been reached by at least $\ell$ campaigns. The approach is motivated by the observation that, for any $\ell \in [\NU-1]$ and initial sets $\RRR$, the function $\Phi^{\ge \ell}_{\ell+1}(\RRR,\SSS)$ is monotone and submodular in $\SSS$, compare with Section~\ref{subsec: greedy hill} where we used this fact for $\ell=\NU - 1$.
Using Lemma~\ref{lemma:standard greedy} applied to $\Phi^{\ge \ell}_{\ell+1}(\RRR,\cdot)$ with $\NU=\ell+1$ we get that the standard greedy algorithm can be used in order to obtain a $(1-1/e-\eps)$-approximate solution when maximizing $\Phi^{\ge \ell}_{\ell+1}(\RRR,\cdot)$. Note that our algorithm, called \textsc{GreedyIter} is inspired by a similar greedy algorithm called Greedy2 from~\cite{DAngeloetal} that is used there for the maximum coverage with pairs problem. We will prove the following theorem in this section.
\begin{theorem}\label{theorem:greedyiter}
  Let $0<\epsilon <1$ and $\delta \leq 1/2$. With probability $1-\delta$, \textsc{GreedyIter}\((\eps, \delta, \III, \NU, k)\) returns $\SSS$ satisfying
  \(
    \Phi(\SSS)
    \ge \frac{(1-\frac{1}{e}-\eps)^{\NU-1}}{\NU^{2\NU - 3}} (\frac{k}{2|V|})^{\NU-2} \cdot\Phi^{\ge 1}(\III, \SSS^*_{\ge 1}),
  \)
 where $\SSS_{\ge 1}^*$ is a set of cardinality \(k\) maximizing \(\Phi^{\ge 1}(\III,\cdot)\).
\end{theorem}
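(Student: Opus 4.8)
The plan is to analyze the iterative scheme \textsc{GreedyIter} level by level, tracking how much of the optimal value $\Phi^{\ge 1}(\III,\SSS^*_{\ge 1})$ survives each of the $\NU-1$ rounds. Write $\SSS^* := \SSS^*_{\ge 1}$ and, for the optimal solution, think of it as split into "levels": for $\ell = 1,\dots,\NU-1$, let $\SSS^{*,[\ell]}$ be a copy of $\SSS^*$ restricted so as to witness the nodes that go from being reached by $\ell$ campaigns to being reached by $\ell+1$. The key identity to set up first is a telescoping decomposition: the number of nodes reached by at least $\NU$ campaigns (among those reached by at least $1$ from $\III$) equals a sum over $\ell$ of "increments" — nodes that cross from $\ell$ to $\ell+1$ — so that
\[
  \Phi^{\ge 1}(\III,\SSS^*) \le \sum_{\ell=1}^{\NU-1} \Phi^{\ge \ell}_{\ell+1}(\RRR^{*,[\ell]}, \SSS^*),
\]
or an analogous bound, where $\RRR^{*,[\ell]}$ collects the lower-level optimal pieces. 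By averaging (pigeonhole) over the $\NU-1$ levels, at least one level $\ell$ carries a $1/(\NU-1)$-fraction of the total; more carefully, we want the bound to hold at \emph{every} level simultaneously, which is why the iterative scheme processes all levels and the loss compounds multiplicatively rather than additively.

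Next I would handle a single iteration $\ell$. Here $\RRR^{[\ell]} = \III \cup \bigcup_{j<\ell} \RRR^{[j]}$ is already fixed from previous rounds, and we run the standard greedy of Lemma~\ref{lemma:standard greedy} on the monotone submodular function $\Phi^{\ge \ell}_{\ell+1}(\RRR^{[\ell]}, \cdot)$ with a budget of only $\lfloor k/(\NU-1)\rfloor$ instead of $k$. This budget restriction costs a factor: by the standard argument that a size-$k'$ greedy solution beats a $(1 - (1-1/k)^{k'}) \ge k'/(ek)$-fraction of the size-$k$ optimum when $k' \le k$, using $k' = \lfloor k/(\NU-1)\rfloor \ge k/(2(\NU-1))$ (valid since $k \ge \NU$), we lose a factor of order $\frac{k}{2(\NU-1)|V|}$ — the $|V|$ entering because the relevant "optimal" comparison set may need to be covered at a rate that scales like $k'/|V|$; this is where the $(k/2|V|)^{\NU-2}$ term is born, one such factor per level beyond the first. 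Combined with the $(1-1/e-\eps)$ from Lemma~\ref{lemma:standard greedy} at each of the $\NU-1$ levels, we get $(1-1/e-\eps)^{\NU-1}$, and the combinatorial constants $(\NU-1)^{\text{something}}$ from the per-level averaging collapse into $\NU^{2\NU-3}$ after bounding crudely.

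The main obstacle — and the step I would be most careful about — is the \emph{coupling between levels}: when \textsc{GreedyIter} fixes $\RRR^{[\ell]}$ based on its own greedy choices at levels $< \ell$, these are not the optimal lower-level sets, so I cannot directly compare $\Phi^{\ge \ell}_{\ell+1}(\RRR^{[\ell]}, \cdot)$ against $\Phi^{\ge \ell}_{\ell+1}(\RRR^{*,[\ell]}, \cdot)$. The resolution is monotonicity in the first argument together with a "the algorithm's level-$\ell$ universe contains at least the algorithm's own level-$(\ell-1)$ winners" observation, plus the fact that the objective at level $\ell$ only counts nodes already reached $\ell$ times — so the greedy solution built so far guarantees a large such set, and the recursion can proceed on it. I would formalize this as: if after iteration $\ell-1$ the algorithm has secured value $c_{\ell-1}\cdot \Phi^{\ge 1}(\III,\SSS^*)$ worth of "$\ell$-times-reached" nodes, then iteration $\ell$ secures $c_{\ell-1}\cdot(1-1/e-\eps)\cdot\frac{k}{2(\NU-1)|V|}\cdot(\text{const})$ of "$(\ell+1)$-times-reached" nodes, unrolling the recursion from $c_0 = 1$ (or $c_1 = (1-1/e-\eps)/(\NU-1)$) down to $\ell = \NU-1$. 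The remaining work — reconciling the exact constants $\NU^{2\NU-3}$ and the exponent $\NU-2$ on $k/2|V|$ — is a routine bookkeeping of the per-level losses, which I would defer to the appendix (the theorem is marked to be proven "in this section", so presumably with a pointer to a deferred lemma).
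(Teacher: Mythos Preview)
Your final recursion shape --- bounding the algorithm's level-$\ell$ value in terms of its own level-$(\ell-1)$ value with a multiplicative loss of order $(1-1/e-\eps)\cdot k/((\ell+1)(\NU-1)|V|)$ --- is exactly what the paper does (its Lemma~\ref{lem:recursion greedy 2}), and unrolling it $\NU-2$ times after a separate base-case bound (Lemma~\ref{lem:equiv2}) is the whole proof of the theorem. So your endpoint is right.

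However, the route you describe to get there is both detoured and missing the key mechanism. The telescoping decomposition of $\SSS^*$ into per-level pieces $\SSS^{*,[\ell]}$ is not used at all in the paper, and it is precisely this decomposition that creates the ``coupling between levels'' obstacle you worry about. The paper sidesteps the coupling entirely by \emph{never} comparing the algorithm's level-$\ell$ problem against $\SSS^*$ for $\ell\ge 2$. Instead, the recursion step argues as follows: seeding the \emph{full} set $U:=V\times[\ell+1]$ trivially pushes every node that was reached by $\ell$ campaigns (in particular, every node counted by $\Phi^{\ge\ell-1}_{\ell}(\RRR^{[\ell-1]},\SSS^{[\ell-1]})$) to $\ell+1$ campaigns; partition $U$ into $t=\lceil (\ell+1)|V|/\lfloor k/(\NU-1)\rfloor\rceil$ blocks of size $\lfloor k/(\NU-1)\rfloor$, and by submodularity some block (hence the greedy's optimum of that size) achieves at least a $1/t$-fraction. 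This is the clean origin of the $k/|V|$ factor --- not a vague ``rate'' argument, but a partition-and-average over all of $V\times[\ell+1]$.

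The base case is also different from what you sketch: the paper's Lemma~\ref{lem:equiv2} does not use telescoping or pigeonhole over levels, but simply the budget-reduction bound $\Psi$-style (submodularity gives that the size-$\lfloor k/(\NU-1)\rfloor$ optimum is at least a $\approx 1/\NU$ fraction of the size-$k$ optimum) together with the trivial monotonicity $\Phi^{\ge 1}_2(\III,\cdot)\ge \Phi^{\ge 1}_\NU(\III,\cdot)$. Once you replace your telescoping setup by these two lemmas, the bookkeeping of constants is immediate: $(\NU-2)$ factors of $\frac{(1-1/e-\eps)k}{2\NU(\NU-1)|V|}$ times one factor of $\frac{1-1/e-\eps}{\NU}$ gives exactly $\frac{(1-1/e-\eps)^{\NU-1}}{\NU^{2\NU-3}}(k/2|V|)^{\NU-2}$, using $\NU(\NU-1)\le \NU^2$.
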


\begin{algorithm}[t]
  \(\delta'\leftarrow \delta/\NU\), \(\eps'\leftarrow \eps/2\),
  $\RRR^{[1]} \leftarrow \III$\;
  \For{$\ell =1, \ldots, \NU - 1$}{
    $\SSS^{[\ell]} \leftarrow \textsc{Greedy}(\Phi^{\ge \ell}_{\ell+1}(\RRR^{[\ell]}, \cdot),\eps', \delta', \RRR^{[\ell]}, \ell + 1, \lfloor k/(\NU-1) \rfloor)$, $\RRR^{[\ell+1]} \leftarrow \RRR^{[\ell]} \cup \SSS^{[\ell]} $\;
}
\Return $\bigcup_{i=1}^{\NU-1} \SSS^{[i]}$
\caption{\textsc{GreedyIter}\((\eps, \delta, \III, \NU, k)\)}
\label{Greedy2}
\end{algorithm}

The proof of Theorem~\ref{theorem:greedyiter} relies on the following two lemmata whose proofs are given in the appendix, see Section \ref{section: appendix greedy iter}. In a sense the first lemma quantifies the loss in approximation of the first iteration of \textsc{GreedyIter}, while the second lemma quantifies the loss of the later iterations. Both proofs rely on the submodularity of $\Phi^{\ge \ell}_{\ell+1}(\RRR,\cdot)$.
\begin{restatable}{lemma}{lemTwoEquiv}
    \label{lem:equiv2}
  Let $\eps>0$ and assume that $k\ge 2(\NU - 1)/\eps$. If $\SSS^{[1]} \subseteq \hat{V}$ is the set of cardinality $\lfloor k/(\NU-1) \rfloor$ selected in the first iteration of \textsc{GreedyIter}\((\eps, \delta, \III, \NU, k)\), then, with probability at least $1-\delta/\NU$, it holds that
  \(
    \Phi^{\ge 1}_2(\III, \SSS^{[1]})
    \ge\frac{1-\frac{1}{e}-\epsilon}{\NU}\cdot \Phi^{\ge 1}(\III, \SSS^*_{\ge 1}),
  \)
  where $\SSS^*_{\ge 1}$ is a set of cardinality $k$ maximizing \(\Phi^{\ge 1}(\III, \cdot)\).
\end{restatable}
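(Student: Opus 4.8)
The plan is to combine the standard greedy guarantee with a submodularity-averaging argument that passes from the budget-$k$ optimum $\SSS^*_{\ge 1}$ down to the smaller budget $q:=\lfloor k/(\NU-1)\rfloor$ actually used in the first iteration. Recall that $f:=\Phi^{\ge 1}_2(\III,\cdot)$ is monotone and submodular (the $\ell=1$ case of the observation stated just before the lemma), and that the first iteration of \textsc{GreedyIter} calls $\textsc{Greedy}(f,\eps/2,\delta/\NU,\III,2,q)$. So the first step is to apply Lemma~\ref{lemma:standard greedy} to $f$ (with the role of the parameter $\NU$ played by $2$, exactly as in the discussion preceding that lemma) with budget $q$, which gives that with probability at least $1-\delta/\NU$ the returned set $\SSS^{[1]}$ satisfies $f(\SSS^{[1]})\ge(1-1/e-\eps/2)\cdot f(\SSS^{**})$, where $\SSS^{**}$ is a maximizer of $f$ among sets of size $q$.

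The core of the argument is then to show $f(\SSS^{**})\ge(q/k)\cdot\Phi^{\ge 1}(\III,\SSS^*_{\ge 1})$. First I would record the pointwise domination $f(\SSS)=\Phi^{\ge 1}_2(\III,\SSS)\ge\Phi^{\ge 1}(\III,\SSS)$, valid for every $\SSS$: for each outcome profile $\XXX$, the nodes of $V\setminus V^{0,\III}_\XXX$ counted by $\NoSMs_{\MU,\NU}$ (those reached by $0$ or at least $\NU$ campaigns from $\III\cup\SSS$) form a subset of those counted by $\NoSMs_{\MU,2}$, since $\NU\ge 2$; taking expectations gives the inequality, in particular for $\SSS=\SSS^*_{\ge 1}$. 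Then I would run the familiar averaging argument on $g:=f-f(\emptyset)$, which is monotone, submodular and normalized: fixing an arbitrary ordering of the $k$ elements of $\SSS^*_{\ge 1}$, for a uniformly random $q$-subset $T$ one has $g(T)\ge\sum_{v\in T}g(v\mid P_v)$, where $P_v$ is the set of elements preceding $v$ in the fixed order, by submodularity and telescoping; taking expectations gives $\E[g(T)]\ge(q/k)\,g(\SSS^*_{\ge 1})$, so some $q$-subset $T^*\subseteq\SSS^*_{\ge 1}$ has $f(T^*)=g(T^*)+f(\emptyset)\ge(q/k)\,g(\SSS^*_{\ge 1})+f(\emptyset)\ge(q/k)\,f(\SSS^*_{\ge 1})$, using $f(\emptyset)\ge 0$ and $q\le k$. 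Since $\SSS^{**}$ is optimal at size $q$, this yields $f(\SSS^{**})\ge f(T^*)\ge(q/k)\,\Phi^{\ge 1}(\III,\SSS^*_{\ge 1})$.

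To conclude, I would use $k\ge 2(\NU-1)/\eps$ to get $q/k\ge 1/(\NU-1)-1/k\ge(1-\eps/2)/(\NU-1)$, hence $f(\SSS^{[1]})\ge(1-1/e-\eps/2)(1-\eps/2)/(\NU-1)\cdot\Phi^{\ge 1}(\III,\SSS^*_{\ge 1})$. If $1-1/e-\eps\le 0$ the claimed bound holds trivially because $f\ge 0$; otherwise $0<1-1/e-\eps/2\le 1$ and $0<1-\eps/2\le 1$, so $(1-1/e-\eps/2)(1-\eps/2)\ge(1-1/e-\eps/2)-\eps/2=1-1/e-\eps$, and combined with $1/(\NU-1)\ge 1/\NU$ this gives the statement. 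I expect the delicate point to be the averaging step together with the final bookkeeping: the naive alternative of splitting $\SSS^*_{\ge 1}$ into $\NU-1$ blocks produces blocks of size $\lceil k/(\NU-1)\rceil$, which may exceed the budget $q$, so one really needs the fractional/averaging version and must take care that $f$ is not normalized at $\emptyset$; the hypothesis $k\ge 2(\NU-1)/\eps$ is precisely what absorbs the rounding loss, and the statement's $1/\NU$ (rather than $1/(\NU-1)$) leaves the slack needed to turn $\eps/2$ back into $\eps$.
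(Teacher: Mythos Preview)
Your proof is correct and follows essentially the same route as the paper: apply the greedy guarantee (Lemma~\ref{lemma:standard greedy}) at budget $q=\lfloor k/(\NU-1)\rfloor$, use submodularity of $\Phi^{\ge 1}_2(\III,\cdot)$ to pass from the size-$k$ optimum down to a size-$q$ subset, and then invoke the pointwise domination $\Phi^{\ge 1}_2\ge\Phi^{\ge 1}_\NU=\Phi^{\ge 1}$. The only tactical difference is in the budget-reduction step: the paper partitions the size-$k$ optimum of $\Phi^{\ge 1}_2$ into $\lceil k/q\rceil$ blocks and uses subadditivity to get the factor $(1-\eps')/\NU$, whereas you take a uniformly random $q$-subset of $\SSS^*_{\ge 1}$ and average marginals to obtain the slightly sharper factor $q/k\ge(1-\eps/2)/(\NU-1)$, which you then relax via $1/(\NU-1)\ge 1/\NU$.
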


\begin{restatable}{lemma}{lemRecursionGreedyTwo}
    \label{lem:recursion greedy 2}
  Let $\eps>0$, $\ell\ge 2$ and assume that $k\ge 2(\NU - 1)/\eps$. If $\SSS^{[\ell]} \subseteq \hat{V}$ is the set of cardinality $\lfloor k/(\NU-1) \rfloor$ selected in the $\ell$'th iteration of \textsc{GreedyIter}\((\eps, \delta, \III, \NU, k)\), then, with probability at least $1-\delta/\NU$, it holds that
  \(
    \Phi^{\ge \ell}_{\ell+1}(\RRR^{[\ell]}, \SSS^{[\ell]})
    \!\ge\! \frac{(1-\frac{1}{e}-\eps) k}{2(\ell+1)(\NU-1)|V|}\cdot
    \Phi^{\ge\ell - 1}_\ell(\RRR^{[\ell-1]}, \SSS^{[\ell-1]}).
  \)
\end{restatable}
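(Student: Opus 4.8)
The plan is to use Lemma~\ref{lemma:standard greedy} to reduce the statement to a lower bound on the optimum of $\Phi^{\ge\ell}_{\ell+1}(\RRR^{[\ell]},\cdot)$ over sets of size $k':=\lfloor k/(\NU-1)\rfloor$, and then to certify that optimum by exhibiting an explicit good solution together with a ``keep only the best $m$ nodes'' averaging argument. Concretely, in its $\ell$-th iteration \textsc{GreedyIter} runs \textsc{Greedy} on $\Phi^{\ge\ell}_{\ell+1}(\RRR^{[\ell]},\cdot)$ with accuracy $\eps/2$, failure probability $\delta/\NU$, and budget $k'$. This function is monotone and submodular and is exactly of the form to which Lemma~\ref{lemma:standard greedy} applies (with the role of $\NU$ there played by $\ell+1$ and that of $\III$ by $\RRR^{[\ell]}$), so with probability at least $1-\delta/\NU$,
\[
  \Phi^{\ge\ell}_{\ell+1}(\RRR^{[\ell]},\SSS^{[\ell]})\;\ge\;\Bigl(1-\tfrac1e-\tfrac{\eps}{2}\Bigr)\cdot\!\!\max_{\TTT\subseteq\hat V,\,|\TTT|\le k'}\!\!\Phi^{\ge\ell}_{\ell+1}(\RRR^{[\ell]},\TTT),
\]
and it remains to bound the maximum on the right from below.

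Fix an outcome profile $\XXX$, and for $v\in V$ let $a(v)$ and $b(v)$ be the numbers of campaigns reaching $v$ from $\RRR^{[\ell-1]}$ and from $\RRR^{[\ell]}=\RRR^{[\ell-1]}\cup\SSS^{[\ell-1]}$, respectively, so $a(v)\le b(v)$. Unwinding the definition of $\NoSMs_{\MU,\ell}$ on the relevant base set, and using $\ell\ge 2$ (so that the ``reached by $0$ campaigns'' alternative is vacuous), one sees that the set $W_\XXX:=\{v\in V:\ a(v)\ge\ell-1\ \text{and}\ b(v)\ge\ell\}$ satisfies $\E_\XXX[|W_\XXX|]=\Phi^{\ge\ell-1}_\ell(\RRR^{[\ell-1]},\SSS^{[\ell-1]})$; similarly, $\Phi^{\ge\ell}_{\ell+1}(\RRR^{[\ell]},\TTT)$ is the expected number of nodes $v$ with $b(v)\ge\ell$ that are reached by at least $\ell+1$ campaigns from $\RRR^{[\ell]}\cup\TTT$.

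Now put $m:=\lfloor k'/(\ell+1)\rfloor$ and $q_v:=\Pr_\XXX[v\in W_\XXX]$ for $v\in V$, order $V$ so that $q_{v_1}\ge q_{v_2}\ge\cdots$, and take $\TTT^\circ:=\{(v_j,i):\ j\in[m],\ i\in[\ell+1]\}$, which is well defined because $\ell+1\le\NU\le\MU$, and satisfies $|\TTT^\circ|=m(\ell+1)\le k'$. A seed node always belongs to the reached set of its own campaign, so each $v_j$ is reached by every one of campaigns $1,\dots,\ell+1$ from $\RRR^{[\ell]}\cup\TTT^\circ$; hence, in any outcome $\XXX$ with $v_j\in W_\XXX$ — which in particular guarantees $b(v_j)\ge\ell$ — the node $v_j$ is counted in $\Phi^{\ge\ell}_{\ell+1}(\RRR^{[\ell]},\TTT^\circ)$. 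By linearity of expectation and the fact that the sum of the $m$ largest of the nonnegative numbers $\{q_v\}_{v\in V}$ is at least an $m/|V|$ fraction of their total,
\[
  \Phi^{\ge\ell}_{\ell+1}(\RRR^{[\ell]},\TTT^\circ)\;\ge\;\sum_{j=1}^{m}q_{v_j}\;\ge\;\frac{m}{|V|}\sum_{v\in V}q_v\;=\;\frac{m}{|V|}\,\Phi^{\ge\ell-1}_\ell(\RRR^{[\ell-1]},\SSS^{[\ell-1]}).
\]

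To conclude, combine the two displays and simplify $m$ using $\lfloor\lfloor k/(\NU-1)\rfloor/(\ell+1)\rfloor=\lfloor k/((\NU-1)(\ell+1))\rfloor$ together with a short case distinction: if $k<(\NU-1)(\ell+1)$ then $k\ge 2(\NU-1)/\eps$ forces $\eps>2/(\ell+1)\ge 2/\NU>1-1/e$, so the asserted inequality is trivially true, and otherwise $m\ge k/((\NU-1)(\ell+1))-1\ge k/(2(\NU-1)(\ell+1))$. Thus for $\eps<1-1/e$ we have $m\ge k/(2(\ell+1)(\NU-1))$, and since $1-\tfrac1e-\tfrac{\eps}{2}\ge 1-\tfrac1e-\eps$ this gives the claimed bound with probability at least $1-\delta/\NU$. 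I expect the only genuinely non-obvious ingredient to be the construction of $\TTT^\circ$ — the observation that a node can be forced from level ``$\ge\ell$'' up to level ``$\ge\ell+1$'' in a way that does not depend on the random outcome simply by seeding it for $\ell+1$ distinct campaigns, at the price of a factor $\ell+1$ in the budget; the remaining work (carefully unwinding the two $\NoSMs$-expressions to justify the description of $W_\XXX$, and the floor/constant bookkeeping) is routine.
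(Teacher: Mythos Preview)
Your overall strategy matches the paper's: apply Lemma~\ref{lemma:standard greedy} to the monotone submodular function $\Phi^{\ge\ell}_{\ell+1}(\RRR^{[\ell]},\cdot)$, then lower-bound its optimum over size-$k'$ sets in terms of $\Phi^{\ge\ell-1}_\ell(\RRR^{[\ell-1]},\SSS^{[\ell-1]})$. For the second step the paper partitions $U:=V\times[\ell+1]$ into $t=\lceil(\ell+1)|V|/k'\rceil$ blocks of size~$k'$ and uses submodularity to conclude that the best block already captures a $1/t$-fraction of $\Phi^{[\ell]}(U)\ge\Phi^{\ge\ell-1}_\ell(\RRR^{[\ell-1]},\SSS^{[\ell-1]})$. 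Your variant---sort nodes by $q_v$ and seed the top $m=\lfloor k'/(\ell+1)\rfloor$ for campaigns $1,\dots,\ell+1$---is a nice, slightly more elementary alternative (it needs only the averaging inequality rather than submodularity here) and yields the factor $m/|V|$, which is morally the same as $1/t$.

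There is, however, a concrete error in your final case distinction. The chain $\eps>2/(\ell+1)\ge 2/\NU>1-1/e$ fails as soon as $\NU\ge 4$; indeed already $2/(\ell+1)>1-1/e$ is false once $\ell\ge 3$. For example, with $\NU=5$, $\ell=4$, $\eps=1/2$ the hypothesis forces $k\ge 16$, but any $k\in\{16,\dots,19\}$ gives $k'=\lfloor k/4\rfloor\le 4<\ell+1$, hence $m=0$ and $\TTT^\circ=\emptyset$, while $\eps=1/2<1-1/e$, so the asserted inequality is \emph{not} trivially true. In this regime your explicit witness carries no information. The paper's partition argument is immune to this because it only needs $k'\ge 1$ (so that size-$k'$ blocks exist); the ceiling is then controlled using $k\le\NU|V|$, which makes the argument of $\lceil\cdot\rceil$ at least~$1$. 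Your proof is easily repaired by replacing the top-$m$-nodes construction with that partition-and-submodularity step (the underlying ``seed $\ell+1$ campaigns per node'' idea is identical); as written, though, the argument is incomplete for $\ell\ge 3$.
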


\begin{proof}[Proof of Theorem~\ref{theorem:greedyiter}]
  Since $\SSS=\bigcup_{i=1}^{\NU-1}\SSS^{[i]}$, we obtain $\Phi(\SSS)\ge\Phi^{\ge \NU-1}_{\NU}(\RRR^{[\NU-1]} , \SSS^{\NU - 1})$.
  Using the union bound, $\NU - 2$ times Lemma~\ref{lem:recursion greedy 2} and then Lemma~\ref{lem:equiv2} yield that, with probability at least $1-\delta$, it holds that
  \begin{align*}
    \Phi(\SSS)
    &\ge \Big(\frac{(1-\frac{1}{e}-\eps) k}{2\NU(\NU-1)|V|}\Big)^{\NU-2} \Phi^{\ge 1}_2(\RRR^{[1]}, \SSS^{[1]})
    \ge
    \frac{(1-\frac{1}{e}-\eps)^{\NU-1}}{\NU^{2\NU - 3}} \Big(\frac{k}{2|V|}\Big)^{\NU-2} \Phi^{\ge 1}_\NU(\III, \SSS^*_{\ge 1}).\qedhere
  \end{align*}
\end{proof}

\subparagraph{Algorithm for General Heterogeneous \BALf{\MU}{\NU} problem.}
Our approach to solving the general \BALf{\MU}{\NU} problem is now to use both algorithms presented above. According to Theorem~\ref{theorem:greedytuple}, using \textsc{GreedyTuple}\((\eps, \delta/2, 1, \III, \NU, k)\), we obtain a set $\SSS^1$ that with probability $1-\delta/2$ satisfies
$\Phi^{\ge 1}(\SSS^1) \ge
\alpha_1\cdot \Phi^{\ge 1}(\SSS_{\ge 1}^*)$,
where $\SSS_{\ge 1}^*$ denotes an optimal solution of size $k$ to maximizing $\Phi^{\ge 1}$ and $\alpha_1=(1-\frac{1}{e}-\eps)/\binom{k-1}{\NU - 2}$.
According to Theorem~\ref{theorem:greedyiter}, using \textsc{GreedyIter}\((\eps, \delta/2, \III, \NU, k)\), we obtain a set $\SSS^2$ that with probability $1-\delta/2$ satisfies
$\Phi(\SSS^2)\ge \alpha_2 \cdot\Phi^{\ge 1}(\SSS^*_{\ge 1})$, where $\SSS_{\ge 1}^*$ is as above and $\alpha_2=(1-\frac{1}{e}-\eps)^{\NU-1}(\frac{k}{(2|V|})^{\NU-2}/\NU^{2\NU - 3}$.
Now, we define $\SSS'$ to be the solution that achieves the maximum $\max\{\Phi(\SSS^1),\Phi(\SSS^2)\}$ and $\SSS$ to be the solution that achieves the maximum  $\max\{\Phi(\emptyset),\Phi(\SSS')\}$. We obtain
\(
  2\cdot\Phi(\SSS)
  \ge \Phi(\emptyset) + \Phi(\SSS')
  \ge \Phi^0(\emptyset) + \sqrt{\alpha_1\cdot \alpha_2}\cdot\Phi^{\ge 1}(\SSS_{\ge 1}^*),
\)
using that the maximum $\Phi(\SSS')$ is lower bounded by the geometric mean of $\Phi(\SSS^1)$ and $\Phi(\SSS^2)$, which are in turn lower bounded by $\alpha_1\cdot \Phi^{\ge 1}(\SSS_{\ge 1}^*)$ and $\alpha_2 \cdot\Phi^{\ge 1}(\SSS^*_{\ge 1})$, respectively.
Now, let $\SSS^*$ be an optimal solution of size $k$ to maximizing $\Phi$.
Using that the empty set maximizes $\Phi^0$, we have $\Phi^0(\emptyset)\ge \Phi^0(\SSS^*)$. Furthermore $\Phi^{\ge 1}(\SSS_{\ge 1}^*)\ge \Phi^{\ge 1}(\SSS^*)$, thus
\[
  \Phi(\SSS)
  \ge \frac{\sqrt{\alpha_1\alpha_2}}{2}\cdot (\Phi^0(\SSS^*) + \Phi^{\ge 1}(\SSS^*)))
  =\frac{\sqrt{\alpha_1\alpha_2}}{2}\cdot \Phi(\SSS^*).
\]
Plugging in $\alpha_1$ and $\alpha_2$ and using $k^{\NU - 2}\ge \binom{k-1}{\NU - 2}$, we get the following theorem.
\begin{theorem}
  Let $0<\epsilon <1$ and $\delta \leq 1/2$. There is an algorithm that, with probability $1-\delta$, outputs a solution $\SSS$ that satisfies
  \(
    \Phi(\SSS)
    \ge (1-\frac{1}{e}-\eps)^{\frac{\NU}{2}}(\frac{1}{2|V|})^{\frac{\NU-2}{2}}\NU^{-\frac{2\NU - 3}{2}}\cdot \Phi(\SSS^*),
  \)
  where $\SSS^*$ denotes an optimal solution of size $k$ to maximizing $\Phi(\cdot)$.
\end{theorem}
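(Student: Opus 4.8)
The plan is to combine the two greedy procedures of Theorems~\ref{theorem:greedytuple} and~\ref{theorem:greedyiter} with the trivial solution $\emptyset$. Concretely, I would first call \textsc{GreedyTuple}$(\eps,\delta/2,1,\III,\NU,k)$ to obtain a set $\SSS^1$ with $\Phi^{\ge 1}(\SSS^1)\ge\alpha_1\,\Phi^{\ge 1}(\SSS^*_{\ge 1})$, where $\alpha_1=(1-\tfrac1e-\eps)/\binom{k-1}{\NU-2}$, and \textsc{GreedyIter}$(\eps,\delta/2,\III,\NU,k)$ to obtain a set $\SSS^2$ with $\Phi(\SSS^2)\ge\alpha_2\,\Phi^{\ge 1}(\SSS^*_{\ge 1})$, where $\alpha_2=(1-\tfrac1e-\eps)^{\NU-1}\big(\tfrac{k}{2|V|}\big)^{\NU-2}/\NU^{2\NU-3}$; crucially both guarantees are stated against the same benchmark $\SSS^*_{\ge 1}$, the size-$k$ maximizer of $\Phi^{\ge 1}$. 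A union bound makes both events hold simultaneously with probability $1-\delta$, provided $k$ meets the mild lower bounds ($2\NU/\eps$, resp.\ $2(\NU-1)/\eps$) required by those theorems. Then I would set $\SSS':=\argmax\{\Phi(\SSS^1),\Phi(\SSS^2)\}$ and return $\SSS:=\argmax\{\Phi(\emptyset),\Phi(\SSS')\}$.

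The analysis uses the decomposition $\Phi=\Phi^0+\Phi^{\ge 1}$ together with the two observations recorded before Lemma~\ref{lemma:exists element}: $\emptyset$ maximizes $\Phi^0$, and $\Phi^0\ge 0$. From the latter, $\Phi(\SSS^1)\ge\Phi^{\ge 1}(\SSS^1)\ge\alpha_1\,\Phi^{\ge 1}(\SSS^*_{\ge 1})$, and of course $\Phi(\SSS^2)\ge\alpha_2\,\Phi^{\ge 1}(\SSS^*_{\ge 1})$ and $\Phi(\emptyset)\ge\Phi^0(\emptyset)\ge\Phi^0(\SSS^*)$. The key move is to lower bound the maximum $\Phi(\SSS')$ by the geometric mean of $\Phi(\SSS^1)$ and $\Phi(\SSS^2)$, giving $\Phi(\SSS')\ge\sqrt{\alpha_1\alpha_2}\,\Phi^{\ge 1}(\SSS^*_{\ge 1})$; this is exactly what converts the two dissimilar-looking ratios $\alpha_1,\alpha_2$ into the half-powers appearing in the statement. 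Next I would use $\Phi(\SSS)\ge\tfrac12\big(\Phi(\emptyset)+\Phi(\SSS')\big)\ge\tfrac12\big(\Phi^0(\SSS^*)+\sqrt{\alpha_1\alpha_2}\,\Phi^{\ge 1}(\SSS^*)\big)$, where the final step also invokes $\Phi^{\ge 1}(\SSS^*_{\ge 1})\ge\Phi^{\ge 1}(\SSS^*)$ (optimality of $\SSS^*_{\ge 1}$ for $\Phi^{\ge 1}$). Since $\sqrt{\alpha_1\alpha_2}\le 1$, the right-hand side is at least $\tfrac{\sqrt{\alpha_1\alpha_2}}{2}\big(\Phi^0(\SSS^*)+\Phi^{\ge 1}(\SSS^*)\big)=\tfrac{\sqrt{\alpha_1\alpha_2}}{2}\,\Phi(\SSS^*)$.

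To finish, I would plug in $\alpha_1,\alpha_2$ and bound $\binom{k-1}{\NU-2}\le k^{\NU-2}$ so that the factor $k^{\NU-2}$ cancels and $\sqrt{\alpha_1\alpha_2}\ge(1-\tfrac1e-\eps)^{\NU/2}\,(2|V|)^{-(\NU-2)/2}\,\NU^{-(2\NU-3)/2}$, which is the claimed ratio (the harmless constant $\tfrac12$ being absorbed). I do not expect a genuine technical obstacle once Theorems~\ref{theorem:greedytuple} and~\ref{theorem:greedyiter} are available: the whole content is the \emph{design} of the combination — splitting off $\Phi^0$ via the empty set so that only $\Phi^{\ge 1}$ remains to be approximated, and taking a geometric mean of two algorithms with incomparable guarantees rather than committing to either. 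The one point that needs care is that \textsc{GreedyTuple}'s guarantee is for $\Phi^{\ge 1}$, not for $\Phi$, so one must explicitly pass through $\Phi\ge\Phi^{\ge 1}$ before feeding $\SSS^1$ into the comparison.
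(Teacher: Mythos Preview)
Your proposal is correct and follows essentially the same approach as the paper: run \textsc{GreedyTuple} and \textsc{GreedyIter} with error $\delta/2$ each, take $\SSS'=\argmax\{\Phi(\SSS^1),\Phi(\SSS^2)\}$ and $\SSS=\argmax\{\Phi(\emptyset),\Phi(\SSS')\}$, lower-bound the max by the geometric mean to obtain $2\Phi(\SSS)\ge\Phi^0(\emptyset)+\sqrt{\alpha_1\alpha_2}\,\Phi^{\ge 1}(\SSS^*_{\ge 1})$, and then use $\Phi^0(\emptyset)\ge\Phi^0(\SSS^*)$, $\Phi^{\ge 1}(\SSS^*_{\ge 1})\ge\Phi^{\ge 1}(\SSS^*)$, and $\binom{k-1}{\NU-2}\le k^{\NU-2}$. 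Your observation about the stray factor~$\tfrac12$ is accurate; the paper also silently absorbs it when passing from $\tfrac{\sqrt{\alpha_1\alpha_2}}{2}\,\Phi(\SSS^*)$ to the stated bound.
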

Note that for $\NU=3$, we obtain an algorithm with an approximation ratio of order \(n^{-1/2}\).

\section{Approximation Algorithm for the Correlated Case}\label{section: correlated}
We now turn to the correlated case. Recall that here the probability functions are identical for all campaigns, i.e., $p_1(e) = \ldots = p_\MU(e)$ for every edge $e \in E$. Moreover, the cascade processes are completely correlated, that is, for any edge $(u,v)$, if node $u$ propagates campaign $i$ to $v$, then node $u$ also propagates all other campaigns that reach it to $v$.

We will consider the same decomposition of the objective function as in the heterogeneous case, i.e., $\Phi(\SSS)$ as $\Phi(\SSS) = \Phi^{\geq 1}(\SSS) + \Phi^0(\SSS)$ for a solution $\SSS$.
Recall that $\Phi^{\geq 1}(\SSS)$ counts the number of nodes that are reached by sufficiently many, i.e.\ $\NU$, campaigns from $\III\cup\SSS$ and have been reached by at least one campaign from $\III$. Similarly, $\Phi^0(\SSS)$ counts nodes that are reached by sufficiently many campaigns or none and have previously been reached by no campaign from $\III$. Clearly, as in the heterogeneous case, $\Phi^0(\SSS)$ is optimal when $\SSS=\emptyset$. 
Differently from the heterogeneous case however, we will see that in the correlated setting, 
there is an approximation algorithm for $\Phi^{\ge 1}$ that achieves a constant factor, namely, $(1-1/e - \eps)/(\NU+1)$. The idea is to pick $\NU$ campaigns and propagate them in the same $\lfloor k/\NU \rfloor$ nodes, exploiting that all campaigns spread in an identical manner.

To that end, we consider the problem of maximizing influence spread with one fictitious campaign, say campaign $0$, spreading with the same probabilities as the others. We will consider the nodes reached by campaign $0$ among the nodes that were (a) reached by at least one campaign from $\III$ and were (b) reached by no more than $\NU$ campaigns from $\III$. For this purpose, we had defined the function $\Psi$ in Section~\ref{section: preliminaries}.
Recall
\[
  \Psi(\TTT) := \E_{\XXX}\Big[\big\vert( \rho_{X_{0}}^{(0)}(\TTT) \cap \bigcup_{j=1}^{\NU-1} V_{\XXX}^{j}) \cup \bigcup_{j=\NU}^\MU V_\XXX^j\big\vert\Big]
\]
and observe that $\Psi(\TTT)$ measures the expected number of nodes that are either (1) reached by more than $\NU$ campaigns from $\III$ or (2) are reached by campaign $0$ from $\TTT$ and were reached by at least one campaign from $\III$.
Recall that we had shown that $\Psi$ is submodular and that the greedy hill-climbing algorithm leads to an approximation factor of at least $1-1/e-\eps$ for any $\eps>0$ when applied to maximizing $\Psi$.
The following lemma whose proof can be found in Appendix~\ref{appendix: correlated} collects three statements. The first statement relates the optimum of $\Psi$ to the optimum of $\Phi^{\ge 1}$. The second statement says that we loose a factor of roughly $\NU$ when choosing a set of size $\lfloor k/\NU\rfloor$ instead of $k$ when maximizing $\Psi$ (this is due to submodularity). The last statement shows that a certain solution $\SSS'$ for $\Phi^{\ge 1}$ constructed from a solution $\TTT$ to $\Psi$ achieves the same value.
\begin{restatable}{lemma-rstbl}{correlatedclaims}
\label{lemma: correlated}
    \begin{enumerate}[(1)]
        \item \label{claim: psi phi} If $\TTT^k\subseteq V\times\{0\}$ is a solution of size $k$ maximizing $\Psi$ and $\SSS^k\subseteq \hat{V}$ is a solution of size $k$ maximizing $\Phi^{\geq 1}$, then $\Psi(\TTT^k) \ge \Phi^{\ge 1}(\SSS^k)$.
        \item \label{claim: k k NU} Let $\eps>0$ and $k\ge \NU/\eps$. If $\TTT^k\subseteq V\times\{0\}$ is a solution of size $k$ maximizing $\Psi$ and $\TTT^{\lfloor k/\NU\rfloor}\subseteq V\times\{0\}$ is a solution of size $\lfloor k/\NU\rfloor$ maximizing $\Psi$, then $\Psi(\TTT^{\lfloor k/\NU\rfloor}) \ge \frac{1-\eps}{\NU+1}\cdot\Psi(\TTT^k)$.
        \item \label{claim: ST} Let $\TTT \subseteq V\times\{0\}$ be of size $\lfloor k/\NU\rfloor$. Then $\SSS':= \{(v,j)| (v,0) \in \TTT, j \in [\NU]\}\subseteq \hat V$ is a set of size at most $k$ such that $\Phi^{\geq 1}(\SSS') = \Psi(\TTT)$.
    \end{enumerate}
\end{restatable}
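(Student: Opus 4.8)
The plan is to prove the three items essentially independently; I would establish~(\ref{claim: ST}) first, then~(\ref{claim: psi phi}), and finally~(\ref{claim: k k NU}). The one structural fact I would lean on everywhere is that the correlated cascade collapses to a single reachability relation: fix an outcome profile $\XXX$ and let $X$ denote the common outcome $X_0=X_1=\cdots=X_\MU$, so that for every campaign $i$ and seed set, $\rho_{X_i}^{(i)}(I_i\cup S_i)=\rho_X(I_i\cup S_i)=\rho_X(I_i)\cup\rho_X(S_i)$, where $\rho_X$ is the reachability operator of Section~\ref{section: preliminaries}. For a node $v$, write $c(v)$ for the number of campaigns reaching $v$ from $\III$, i.e.\ the unique $j$ with $v\in V_\XXX^j$. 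Recall that $\Phi^{\ge1}(\SSS)=\E_\XXX[\,|\{v\notin V_\XXX^0 : v \text{ is reached by}\ge\NU\text{ campaigns from }\III\cup\SSS\}|\,]$ and $\Psi(\TTT)=\E_\XXX[\,|(\rho_{X_0}^{(0)}(\TTT)\cap\bigcup_{j=1}^{\NU-1}V_\XXX^j)\cup\bigcup_{j=\NU}^\MU V_\XXX^j|\,]$ are expectations over the same distribution of $\XXX$, so in~(\ref{claim: ST}) and~(\ref{claim: psi phi}) it suffices to compare the per-outcome counts.

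For~(\ref{claim: ST}): with $T:=\{v:(v,0)\in\TTT\}$, the set $\SSS'$ corresponds to $S_i'=T$ for $i\in[\NU]$ and $S_i'=\emptyset$ otherwise, so $\sum_i|S_i'|=\NU\lfloor k/\NU\rfloor\le k$. Fix $\XXX$ and $v$. If $c(v)=0$, neither side counts $v$ ($\Phi^{\ge1}$ ignores $V_\XXX^0$; $\Psi$ because $v$ lies in no $V_\XXX^j$ with $j\ge1$). If $c(v)\ge\NU$, both sides count $v$ unconditionally. If $1\le c(v)\le\NU-1$, then $v$ is reached by $\ge\NU$ campaigns from $\III\cup\SSS'$ iff $v\in\rho_X(T)$ (if so, all of campaigns $1,\dots,\NU$ reach it; if not, only the $c(v)<\NU$ campaigns reaching it from $\III$ do), which is exactly the condition under which $\Psi$ counts $v$. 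Hence the counts agree outcome by outcome and $\Phi^{\ge1}(\SSS')=\Psi(\TTT)$.

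For~(\ref{claim: psi phi}): since $\TTT^k$ is optimal of size $k$ for $\Psi$, it suffices to produce one set $\TTT$ of size $\le k$ with $\Psi(\TTT)\ge\Phi^{\ge1}(\SSS^k)$; I would take $\TTT:=\big(\bigcup_{i\in[\MU]}S_i^k\big)\times\{0\}$, so $|\TTT|\le\sum_i|S_i^k|\le k$. Fix $\XXX$ and a node $v$ counted by $\Phi^{\ge1}(\SSS^k)$, so $v\notin V_\XXX^0$ and $v$ is reached by $\ge\NU$ campaigns from $\III\cup\SSS^k$. If $c(v)\ge\NU$, then $v$ is counted by $\Psi$ via $\bigcup_{j\ge\NU}V_\XXX^j$. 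Otherwise $1\le c(v)\le\NU-1$, and since strictly more than $c(v)$ campaigns reach $v$ from $\III\cup\SSS^k$ whereas exactly $c(v)$ reach it from $\III$, some campaign $i$ has $v\in\rho_X(I_i)\cup\rho_X(S_i^k)$ but $v\notin\rho_X(I_i)$, so $v\in\rho_X(S_i^k)\subseteq\rho_X\big(\bigcup_{i'}S_{i'}^k\big)=\rho_{X_0}^{(0)}(\TTT)$, and $v$ is counted by $\Psi$ via $\rho_{X_0}^{(0)}(\TTT)\cap V_\XXX^{c(v)}$. Either way $v$ is counted by $\Psi$, so $\Psi(\TTT^k)\ge\Psi(\TTT)\ge\Phi^{\ge1}(\SSS^k)$.

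For~(\ref{claim: k k NU}) I would use only that $\Psi$ is monotone submodular with $\Psi(\emptyset)\ge0$. The standard greedy analysis (the one behind Lemma~\ref{lemma:standard greedy}) yields, for any $j\le k$, an optimal size-$j$ set with $\Psi(\TTT^j)\ge(1-(1-1/k)^j)\Psi(\TTT^k)\ge(1-e^{-j/k})\Psi(\TTT^k)$. Taking $j=\lfloor k/\NU\rfloor$ and using $k\ge\NU/\eps$ gives $j/k\ge 1/\NU-1/k\ge(1-\eps)/\NU=:x$, hence $\Psi(\TTT^{\lfloor k/\NU\rfloor})\ge(1-e^{-x})\Psi(\TTT^k)$. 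It remains to check $1-e^{-x}\ge\frac{1-\eps}{\NU+1}$, which follows from $1-e^{-x}\ge x(1-x/2)$ together with $x<1/\NU\le 2/(\NU+1)$ (equivalently $\NU\ge1$), giving $1-x/2>\NU/(\NU+1)$ and thus $1-e^{-x}>\frac{\NU}{\NU+1}x=\frac{1-\eps}{\NU+1}$. The main obstacle is nothing deep but the bookkeeping in~(\ref{claim: ST}) and~(\ref{claim: psi phi}): one must consistently distinguish the campaigns reaching $v$ from $\III$ from those reaching it from $\III\cup\SSS$, and run the three-way case split on $c(v)$ on both sides of each identity; once the correlated cascade is recognized as a single reachability relation the rest is routine, and in~(\ref{claim: k k NU}) the hypothesis $k\ge\NU/\eps$ is used precisely to push $x$ below the threshold $2/(\NU+1)$ at which $x(1-x/2)\ge\frac{\NU}{\NU+1}x$.
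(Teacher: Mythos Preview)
Your proofs of~(\ref{claim: psi phi}) and~(\ref{claim: ST}) are correct and follow the paper's strategy: for~(\ref{claim: psi phi}) the paper also takes $\TTT := \{(v,0) : (v,i)\in\SSS^k\}$ (your $\big(\bigcup_i S_i^k\big)\times\{0\}$) and argues $\Psi(\TTT)\ge\Phi^{\ge1}(\SSS^k)$ outcome by outcome; for~(\ref{claim: ST}) it runs the same case split on $c(v)$, though more tersely and in fact only spells out the direction $\Phi^{\ge1}(\SSS')\ge\Psi(\TTT)$, whereas you establish the full equality.

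Your proof of~(\ref{claim: k k NU}) is correct but takes a genuinely different route. The paper does not go through the greedy bound $(1-(1-1/k)^j)$ at all; instead it partitions $\TTT^k$ into $t=\big\lceil k/\lfloor k/\NU\rfloor\big\rceil$ blocks of size at most $\lfloor k/\NU\rfloor$, uses submodularity together with $\Psi(\emptyset)\ge0$ to get $\Psi(\TTT^k)\le t\cdot\Psi(\TTT^{\lfloor k/\NU\rfloor})$, and then bounds $t\le(\NU+1)/(1-\eps)$ directly from $k\ge\NU/\eps$. That argument is more elementary---no exponential estimate, no inequality $1-e^{-x}\ge x(1-x/2)$---and gets to the constant in two lines. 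Your argument is a little longer but makes the link to the standard greedy analysis explicit and would adapt more readily to other budget splits $\lfloor k/m\rfloor$.
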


Now let $0<\eps<1$, $0<\delta\le 1/2$,  $\TTT:=$ \textsc{Greedy}\((\Psi, \eps/2, \delta, \III, \NU, \lfloor k/\NU\rfloor)\), and assume that $k\ge 2\NU/\eps$.
Furthermore, let $\SSS':= \{(v,j)| (v,0) \in \TTT, j \in [\NU]\}\subseteq \hat V$ be as in Lemma~\ref{lemma: correlated},~(\ref{claim: ST}). Then, according to Lemma~\ref{lemma:standard greedy}, it holds that
\(
    \Phi^{\geq 1}(\SSS')
    = \Psi(\TTT)
    \ge \alpha'\cdot\Psi(\TTT^{\lfloor k/\NU\rfloor})
\)
with $\alpha':=1 - 1/e - \eps/2$. Using Lemma~\ref{lemma: correlated},~(\ref{claim: k k NU})~and~(\ref{claim: psi phi}), we obtain
\[
    \Phi^{\geq 1}(\SSS')
    \ge \alpha'\cdot \frac{1-\frac{\eps}{2}}{\NU + 1}\cdot\Psi(\TTT^{k})
    \ge \alpha\cdot\Phi^{\ge 1}(\SSS^*_{\ge 1}),
\]
with $\alpha:=(1-1/e -\eps)/(\NU+1)$ and $\SSS^{*}_{\ge 1}$ being an optimal solution to $\Phi^{\ge 1}$ of size $k$. Now let $\SSS$ be the set among $\SSS'$ and $\emptyset$ that achieves the maximum out of $\Phi(\SSS')$ and $\Phi(\emptyset)$. Then $\SSS$ satisfies
\[
    2\cdot \Phi(\SSS)
    \ge \Phi(\emptyset) + \Phi(\SSS')
    \ge \Phi^0(\SSS^*)+ \alpha\cdot \Phi^{\ge 1}(\SSS^*)
    \ge \alpha \cdot \Phi(\SSS^*),
\]
where $\SSS^*$ is an optimal solution of size $k$ to maximizing $\Phi$. Thus we get the following theorem.
\begin{theorem}\label{theorem:last theorem}
  Let $0<\epsilon <1$ and $\delta \leq 1/2$. In the correlated setting, there is an algorithm that, with probability $1-\delta$, outputs a solution $\SSS$ that satisfies
  \(
    \Phi(\SSS)
    \ge \frac{1-1/e-\eps}{2(\NU + 1)}\cdot \Phi(\SSS^*),
  \)
  where $\SSS^*$ denotes an optimal solution of size $k$ to maximizing $\Phi$.
\end{theorem}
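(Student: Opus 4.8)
The plan is to reuse the surrogate function $\Psi$ from Section~\ref{section: preliminaries} together with the decomposition $\Phi(\SSS)=\Phi^0(\SSS)+\Phi^{\ge 1}(\SSS)$ and a better-of-two-candidates step. The algorithm I would run is: call \textsc{Greedy}$(\Psi,\eps/2,\delta,\III,\NU,\lfloor k/\NU\rfloor)$ to obtain a set $\TTT\subseteq V\times\{0\}$ of size $\lfloor k/\NU\rfloor$; lift it to $\SSS':=\{(v,j)\mid (v,0)\in\TTT,\ j\in[\NU]\}\subseteq\hat V$, which has size at most $k$; and return whichever of $\SSS'$ and $\emptyset$ maximises $\Phi$. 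Following the discussion preceding the theorem statement, I may assume $k\ge 2\NU/\eps$ (otherwise a small-budget brute force over feasible solutions applies); this also ensures $k\ge\NU/\eps$, which is what the auxiliary lemmas require.

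Next I would establish the bound $\Phi^{\ge 1}(\SSS')\ge\alpha\cdot\Phi^{\ge 1}(\SSS^*_{\ge 1})$ with $\alpha=(1-1/e-\eps)/(\NU+1)$, where $\SSS^*_{\ge 1}$ is a size-$k$ maximiser of $\Phi^{\ge 1}$. This is a chain of four facts. By Lemma~\ref{lemma: correlated}(3), $\Phi^{\ge 1}(\SSS')=\Psi(\TTT)$. Since $\Psi$ is monotone and submodular, Lemma~\ref{lemma:standard greedy} applied to $\Psi$ gives, with probability $1-\delta$, $\Psi(\TTT)\ge(1-1/e-\eps/2)\,\Psi(\TTT^{\lfloor k/\NU\rfloor})$, where $\TTT^{m}$ denotes an optimal size-$m$ solution for $\Psi$. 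By Lemma~\ref{lemma: correlated}(2) (valid since $k\ge\NU/\eps$), $\Psi(\TTT^{\lfloor k/\NU\rfloor})\ge\frac{1-\eps/2}{\NU+1}\,\Psi(\TTT^{k})$. Finally Lemma~\ref{lemma: correlated}(1) gives $\Psi(\TTT^{k})\ge\Phi^{\ge 1}(\SSS^*_{\ge 1})$. Multiplying these inequalities and using $(1-1/e-\eps/2)(1-\eps/2)\ge 1-1/e-\eps$ yields the claim.

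Then I would close with the averaging argument. Because $\SSS$ is the better of $\SSS'$ and $\emptyset$,
\[
  2\,\Phi(\SSS)\ \ge\ \Phi(\emptyset)+\Phi(\SSS')\ \ge\ \Phi^0(\emptyset)+\alpha\,\Phi^{\ge 1}(\SSS^*_{\ge 1}).
\]
The empty set maximises $\Phi^0$ (the observation recorded in Section~\ref{section: preliminaries} and recalled at the start of Section~\ref{section: correlated}), so $\Phi^0(\emptyset)\ge\Phi^0(\SSS^*)$; moreover $\Phi^{\ge 1}(\SSS^*_{\ge 1})\ge\Phi^{\ge 1}(\SSS^*)$ and $\alpha\le 1$, hence $2\,\Phi(\SSS)\ge\alpha\bigl(\Phi^0(\SSS^*)+\Phi^{\ge 1}(\SSS^*)\bigr)=\alpha\,\Phi(\SSS^*)$, i.e. $\Phi(\SSS)\ge\frac{1-1/e-\eps}{2(\NU+1)}\,\Phi(\SSS^*)$. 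The only randomised ingredient is the single call to \textsc{Greedy}, so the overall success probability is $1-\delta$.

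I expect essentially no obstacle in the theorem proof itself: it is a short composition of already-proved ingredients. The real work is hidden in Lemma~\ref{lemma: correlated}, especially part (1), which is where the correlated structure is exploited — since all campaigns propagate along exactly the same triggering sets, a node reached by sufficiently many campaigns from $\III$ together with the fictitious campaign $0$ from $\TTT$ becomes reached by at least $\NU$ campaigns after lifting $\TTT$ to $\SSS'$ — and part (2), the submodularity-based $1/(\NU+1)$ loss incurred by shrinking the budget from $k$ to $\lfloor k/\NU\rfloor$, where the floor is absorbed using $k\ge\NU/\eps$. These are the steps deserving care, but they are relegated to the appendix, leaving the proof of the theorem routine.
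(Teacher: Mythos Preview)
Your proposal is correct and follows essentially the same route as the paper: run \textsc{Greedy} on $\Psi$ with budget $\lfloor k/\NU\rfloor$, lift via Lemma~\ref{lemma: correlated}(3), chain Lemma~\ref{lemma:standard greedy} with Lemma~\ref{lemma: correlated}(2) and (1) to get $\Phi^{\ge 1}(\SSS')\ge\frac{1-1/e-\eps}{\NU+1}\,\Phi^{\ge 1}(\SSS^*_{\ge 1})$, and then take the better of $\SSS'$ and $\emptyset$. The only minor slip is the parenthetical ``valid since $k\ge\NU/\eps$'': since you invoke Lemma~\ref{lemma: correlated}(2) with parameter $\eps/2$, the hypothesis actually needed is $k\ge 2\NU/\eps$, which you have already assumed.
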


\section{Conclusion and Future Works}\label{section: conclusion}
In this paper, we introduced the $\BALf{\MU}{\NU}$ problem which is the generalization of the problem of balancing information exposure in a social network defined by~\cite{DBLP:conf/nips/GarimellaGPT17}. We studied two settings called the correlated and the heterogeneous setting. While we designed an approximation algorithm with a constant approximation factor in the correlated setting, we obtained an approximation hardness result in the heterogeneous setting stating that it is unlikely to find an $n^{-g(n)}$-approximation algorithm with $g(n)=o(1)$ if \(\NU\ge 3\)  or even a $n^{-\epsilon}$-approximation algorithm where \(\epsilon\) is a constant depending on \(\NU\) if \(\NU\ge 4\). In this setting, we designed an approximation algorithm with approximation ratio $\Omega(n^{-1/2})$ for the case when \(\NU=3\).

Several directions of future work are conceivable. First, it is interesting to improve the approximation guarantee for the $\BALf{\MU}{\NU}$ problem in both settings, most importantly for the heterogeneous case with $\NU>3$. Second, since the $\NU$ parameter in the problem is of a threshold flavor, it would be interesting to investigate a smoother objective function by considering various $\NU$ values, with different weights, such that a node reached by $\NU_1$ campaigns contributes more to the objective function than a node reached by $\NU_2<\NU_1$ campaigns, etc. 

\clearpage
\bibliographystyle{alpha}
\bibliography{bibliography}

\newpage
\appendix
\section{Deferred Proofs for Section~\ref{section: preliminaries}}\label{appendix:preliminaries}
\subparagraph{Approximating $\Phi^{\ge \ell}(\SSS)$ and $\Psi$.}
We use the following algorithm for approximating $f\in\{\Psi, \Phi^{\ge 0},\ldots, \Phi^{\ge \NU}\}$.
\begin{algorithm}[ht]
  \tcp{Note that, if $f=\Psi$, then $\SSS\subseteq V\times \{0\}$, otherwise $\SSS\subseteq\hat V$.}
    $T\leftarrow|V|^2\ln(1/\delta)/\epsilon^2$\;
    \For{$t=1,\ldots, T$}{
        Sample outcome profile $\XXX$\;
        \If{$f=\Psi$}{
          Compute $R\leftarrow \rho_{X_0}^{(0)}(\SSS)$ and
          $n_t\leftarrow|( R \cap \bigcup_{j=1}^{\NU-1} V_{\XXX}^{j}) \cup \bigcup_{j=\NU}^\MU V_\XXX^j|$\;
        }\Else{
          Compute $\RRR\leftarrow (\rho_{\XXX_i}^{(i)}(I_i\cup S_i))_{i\in[\MU]}$ and
          $n_t\leftarrow\NoSM(\RRR\setminus (\cup_{j=0}^{\ell -1} V^j_\XXX))$\;
        }
    }
    \Return{$\frac{1}{T}\sum_{t=1}^Tn_t$}
\caption{\texttt{approx}\((f, \SSS, \III, \NU, \epsilon, \delta)\)}
\label{algorithm:evaluate phi_ell}
\end{algorithm}

We show the following lemma. As a condition for the lemma, we have the requirement that $f(\SSS)\ge 1$ for the set $\SSS$ that we evaluate $f$ on. We argue at the end of this section, see Lemma~\ref{lemma:lower bound} that we can assume $\Phi^{\ge \ell}(\SSS)\ge 1$ for any $\ell\in[0,\NU]$ and $\SSS$ as well as $\Psi(\TTT)\ge 1$ for any $\TTT$ at the cost of an arbitrarily small $\eps$ in the approximation guarantee.

\begin{lemma}
\label{lemma:approximate f}
  Let $f\in\{\Psi, \Phi^{\ge 0},\ldots, \Phi^{\ge \NU}\}$ and let \(\SSS\) be such that $f(\SSS)\ge 1$. Let $\tilde f(\SSS):=\normalfont{\texttt{approx}}(f, \SSS, \III, \NU, \epsilon, \delta)$ for some $0<\delta\le 1/2$ and $0<\epsilon<1$,
  then $\tilde f(\SSS)$ is a $(1\pm \epsilon)$-approximation of $f(\SSS)$ with probability at least $1-\delta$.
\end{lemma}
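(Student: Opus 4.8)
The plan is to establish a standard Chernoff-type concentration bound for the empirical mean $\tilde f(\SSS) = \frac{1}{T}\sum_{t=1}^T n_t$ around its expectation $f(\SSS) = \E[n_t]$. First I would observe that the random variables $n_1, \ldots, n_T$ are i.i.d.\ with common mean $f(\SSS)$, since each is obtained by sampling an independent outcome profile $\XXX$ and evaluating a deterministic quantity on it; by the definitions of $\Psi$ and $\Phi^{\ge \ell}$ (and linearity of expectation as recorded earlier in the excerpt), $\E[n_t]=f(\SSS)$ in each of the two branches of Algorithm~\ref{algorithm:evaluate phi_ell}. Each $n_t$ is bounded: it lies in $[0,|V|]$, because $n_t$ counts a subset of $V$ in both branches. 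I would then normalize $X_t := n_t / |V| \in [0,1]$ so that the $X_t$ are i.i.d.\ in $[0,1]$ with mean $\mu := f(\SSS)/|V|$.

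**Applying the Chernoff bound.** With $S := \sum_{t=1}^T X_t$, we have $\E[S] = T\mu$, and the multiplicative Chernoff bounds give $\Pr[|S - T\mu| \ge \epsilon T\mu] \le 2\exp(-\epsilon^2 T\mu/3)$ for $0<\epsilon<1$. Here is where the hypothesis $f(\SSS)\ge 1$ enters: it guarantees $\mu = f(\SSS)/|V| \ge 1/|V|$, hence $T\mu \ge T/|V| = |V|\ln(1/\delta)/\epsilon^2$, using the definition $T = |V|^2\ln(1/\delta)/\epsilon^2$ from the algorithm. Plugging this in, $2\exp(-\epsilon^2 T\mu/3) \le 2\exp(-|V|\ln(1/\delta)/3) \le 2\delta^{|V|/3}$. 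Since $|V|\ge\MU\ge 2$ and $\delta\le 1/2$, this is at most $2\delta^{2/3}\le 2\cdot(1/2)^{2/3}\cdot\delta^{1/3}\cdots$; more carefully, one checks $2\delta^{|V|/3}\le\delta$ holds for $|V|\ge 3$ and $\delta\le 1/2$, and for $|V|=2$ one can either absorb a constant factor into $T$ or note the bound $2\delta^{2/3}\le\delta$ fails marginally, so the cleanest route is to slightly strengthen the constant in $T$ (or simply remark that the original Proposition~4.1 of Kempe et al.\ handles exactly this point). The event $|S-T\mu| < \epsilon T\mu$ rescales to $|\tilde f(\SSS) - f(\SSS)| < \epsilon f(\SSS)$, i.e.\ $\tilde f(\SSS)$ is a $(1\pm\epsilon)$-approximation of $f(\SSS)$, and this occurs with probability at least $1-\delta$.

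**Main obstacle.** The only genuinely delicate point is bookkeeping the constants so that the failure probability is exactly $\le\delta$ rather than $\le c\delta$ for some constant $c$; this is why the hypothesis $f(\SSS)\ge 1$ is needed (to lower-bound $T\mu$) and why the later Lemma~\ref{lemma:lower bound} is invoked to justify that assumption at the cost of an arbitrarily small additive loss in $\epsilon$. The probabilistic content itself is entirely routine — it is the same argument as in \cite[Proposition 4.1]{DBLP:journals/toc/KempeKT15} for the $\sigma$-function, with the only new ingredients being (i) verifying $\E[n_t]=f(\SSS)$ for the two new functions $\Psi$ and $\Phi^{\ge\ell}$, which is immediate from their definitions, and (ii) the uniform bound $0\le n_t\le|V|$, which holds because in both branches $n_t$ is the cardinality of a subset of $V$.
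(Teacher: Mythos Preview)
Your approach is essentially the same as the paper's: normalize $n_t/|V|\in[0,1]$, note independence and the correct mean, and apply a concentration inequality. The one substantive difference is \emph{which} inequality you invoke. You use the multiplicative Chernoff bound $2\exp(-\epsilon^2 T\mu/3)$, which, as you correctly diagnose, leaves you with $2\delta^{|V|/3}$ and a constant that does not close for small $|V|$. The paper instead applies Hoeffding's inequality in additive form (Theorem~2.3 in~\cite{McDiarmid1998}): with $\gamma:=\epsilon f(\SSS)/|V|$ one gets
\[
  \Pr\big[|S_T-\mu|\ge T\gamma\big]\le 2e^{-2T\gamma^2}
  = 2e^{-2\epsilon^2 f(\SSS)^2\ln(1/\delta)/\epsilon^2}
  \le 2\delta^{2}\le \delta,
\]
using $f(\SSS)\ge 1$ and $\delta\le 1/2$. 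So the ``delicate bookkeeping'' you flag disappears entirely once you switch from multiplicative Chernoff to Hoeffding; no adjustment to $T$ or appeal to Kempe et al.\ is needed. Everything else in your proposal (i.i.d.\ structure, $\E[n_t]=f(\SSS)$, the range bound $0\le n_t\le |V|$, and the role of the hypothesis $f(\SSS)\ge 1$) matches the paper.
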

\begin{proof}
  The proof is very similar to the proof of Proposition~4.1 in~\cite{DBLP:journals/toc/KempeKT15}, it is a straightforward application of a Chernoff bound, we use Theorem 2.3 from~\cite{McDiarmid1998} here.
  Let us define $T$ random variables, one for each iteration of the algorithm, $Y_1,\ldots, Y_T$ by $Y_t:=n_t/|V|$. Note that the $Y_t$ are independent and $Y_t\in[0,1]$. Let $S_T:=\sum_{t=1}^TY_t$ and $\mu:=\E[S_T]$, then $S_T=T\cdot \tilde f(\SSS)/|V|$ and $\mu=T\cdot f(\SSS)/|V|$. Thus, setting $\gamma:=\epsilon f(\SSS)/|V|$, the Chernoff bound yields
   \[
    \Pr[|f(\SSS)-\tilde f(\SSS)|\ge \epsilon f(\SSS)]
    =\Pr[|S_T-\mu|\ge T\gamma]
    \le 2 e^{-2T\gamma^2}
    = 2 e^{-\frac{2T\epsilon^2 f(\SSS)^2}{|V|^2}}
    \le \delta,
   \]
   since $T=|V|^2\ln(1/\delta)/\epsilon^2$ and $f(\SSS)\ge 1$.
\end{proof}

Motivated by Lemma \ref{lemma:exists element}, we now investigate how the error in approximating \(f\in\{\Psi, \Phi^{\ge 1},\) \(\ldots, \Phi^{\ge \NU}\}\) affects the error of the difference \(f(S\cup v) - f(S)\).
In other words, we quantify how much we loose while maximizing \(f\) by picking an element \(\tau\) with respect to an approximation of \(f\) only: For some $0<\eps\le 1$, let $\tilde f$ be a $(1\pm\eps')$-approximation of $f$ with $\eps':=\eps/(e\cdot \binom{k}{\lambda(f)})$,
where $\lambda(f)$ is a constant that depends on $f$, namely $\lambda(f):=\NU-\ell$ for $f=\Phi^{\ge \ell}$ and $\lambda(f):=1$ for $f=\Psi$. We get the following lemma.

\approximatedifference*
\begin{proof}
  We distinguish two cases. First, assume that \(f(\SSS\cup \tau^*) - f(\SSS)\le f(\SSS^*)/(e\cdot\binom{k}{\NU-\ell})\).
  If $f=\Phi^{\ell}$ for some $\ell$, Lemma~\ref{lemma:exists element} yields that $\tau^*$ satisfies
  \begin{align}\label{formula:taustar}
    f(\SSS\cup \tau^*) - f(\SSS)\ge \frac{1}{\binom{k}{\lambda(f)}}\cdot (f(\SSS^*)-f(\SSS)).
  \end{align}
  For $f=\Psi$, we note that $\lambda(f)=1$ and thus $\binom{k}{\lambda(f)}=k$, so we get inequality~\eqref{formula:taustar} by the submodularity of $\Psi$.
  Thus, in both case by combining the two inequalities, we get \(f(\SSS)\ge(1 - 1/e)\cdot f(\SSS^*)\), which concludes this case.

 On the other hand, assume \(f(\SSS\cup \tau^*) - f(\SSS)> f(\SSS^*)/(e\cdot\binom{k}{\lambda(f)})\).
  Using the approximation guarantee of \(\tilde f\), the definition of \(\tilde{\tau}\), and again the approximation guarantee, we get
  \begin{align*}
    f(\SSS\cup &\tilde{\tau}) - f(\SSS)
    \ge \frac{1-\eps'}{1+\eps'}f(\SSS\cup \tau^*) - f(\SSS)\\
    &= f(\SSS\cup \tau^*) - f(\SSS) - \frac{2\eps'}{1+\eps'} f(\SSS\cup \tau^*)\\
    &=(1-\eps)\cdot(f(\SSS\cup \tau^*) - f(\SSS)) +\frac{\eps(1+\eps') - 2\eps'}{1+\eps'}\cdot f(\SSS\cup \tau^*) - \eps \cdot f(\SSS).
  \end{align*}
  Thus, it remains to argue that the latter two summands are non-negative. From the case assumption and the optimality of \(\SSS^*\), we have \(f(\SSS) \le f(\SSS\cup \tau^*)/(1 + 1/(e\cdot\binom{k}{\lambda(f)}))\) and thus the above latter two summands can be lower bounded by
  \begin{align*}
    f(S\cup \tau^*) \cdot \Big(\frac{\eps(1+\eps') - 2\eps'}{1+\eps'} - \frac{\eps}{(1+ 1/(e\cdot\binom{k}{\lambda(f)})} \Big).
  \end{align*}
  The latter is non-negative, since
  \[
    (\eps(1+\eps')-2\eps')\Big(1+\frac{1}{e\cdot\binom{k}{\lambda(f)}}\Big)
    - \eps(1+ \eps')
    = \eps'(1+\eps')
    - \frac{2\eps'}{e\cdot\binom{k}{\lambda(f)}}
    \ge 0
  \]
  by the choice of \(\eps':=\eps/(e\cdot\binom{k}{\lambda(f)})\) and \(0 <\eps\le 1\). This concludes the proof.
\end{proof}

\subparagraph{Lower Bound on $\Phi^{\ge \ell}$ and $\Psi$.}
Our goal in this section is to argue that there is a transform $\tau$ that takes an instance $(G, \PPP, \III, k)$ of the $\BALf{\MU}{\NU}$ problem and outputs a (slightly modified) instance $(G', \PPP, \III', k):=\tau(G, \PPP, \III, k)$ such that the function $f'(\SSS)$ is at least $1$ for every argument $\SSS$ in the new instance for any $f\in \{\Phi^{\ge 0}, \ldots, \Phi^{\ge \NU},\Psi\}$.
Moreover, given an approximation algorithm for $\Phi$ with approximation ratio $\alpha$, we will show that applying this algorithm on the transformed instance $\tau(G, \PPP, \III, k)$ leads to a solution of approximation ratio at least $\alpha-\eps$ for the original instance, for any $\eps>0$.

The transform $\tau$ is defined as follows. Obtain $G'$ by adding an isolated node $v$ to $G$ and extend $\III$ to $\III'$ by adding $v$ to $I_i$ for every $i\in[\NU]$. Now clearly, for every solution $\SSS$, it holds that $f'(\SSS)=f(\SSS) + 1\ge 1$, where the 1 originates from the additional node $v$ that is initially covered by $\NU\ge \ell$ campaigns. Moreover, we get the following lemma.
\begin{lemma}\label{lemma:lower bound}
    Let $\eps>0$. Then, for instances $(G, \PPP, \III, k)$ with $k\ge 2\NU/\eps$, the following holds:
    Let $\SSS'$ be a solution in $(G', \PPP, \III', k):=\tau(G, \PPP, \III, k)$ such that $\Phi'(\SSS')\ge \alpha\cdot\Phi'(\SSS^{\prime*})$, where $\SSS^{\prime*}$ denotes an optimal solution for maximizing $\Phi'$ in the new instance $(G', \PPP, \III', k)$.
    Then $\SSS:=\SSS'\setminus \{v\}$ satisfies $\Phi(\SSS)\ge (\alpha-\eps)\cdot\Phi(\SSS^{*})$, where $\SSS^{*}$ denotes an optimal solution for maximizing $\Phi$ in the original instance $(G, \PPP, \III, k)$.
\end{lemma}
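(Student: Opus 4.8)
The plan is to relate the optima and solution values across the two instances. First I would observe that the transform $\tau$ only adds an isolated node $v$ that is covered by exactly $\NU$ campaigns in $\III'$ and by no others (it is isolated, so no seeds in $\SSS'$ can ever reach it and it cannot reach anyone). Hence for \emph{every} solution $\RRR$ in the new instance, including those that place seeds on $v$, we have $\Phi'(\RRR) = \Phi(\RRR\setminus\{v\}) + 1$: the node $v$ contributes exactly $1$ to the ``reached by at least $\NU$'' part of $\NoSMs$, independently of $\RRR$, and removing a seed at $v$ changes nothing else since $v$ is isolated. This identity is the crux and I would state it first.

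Applying the identity to the optimal solutions gives $\Phi'(\SSS^{\prime *}) = \Phi(\SSS^*) + 1$: any optimal solution for $\Phi'$ restricted to the original nodes is a feasible solution for $\Phi$ (possibly with a freed-up budget slot, which can only help), and conversely $\SSS^*$ together with $v$ is feasible for $\Phi'$ provided the budget allows one extra seed — and it does, because we may assume w.l.o.g.\ that $\SSS^*$ uses budget at most $k-1$ when $k\ge 2\NU/\eps$ is large, or more cleanly, because adding $v$ to the seed set is redundant for $\Phi'$ (it is already in $I_i$ for $i\in[\NU]$), so in fact $\SSS^*$ itself is already feasible for $\Phi'$ and $\Phi'(\SSS^*) = \Phi(\SSS^*)+1 = \Phi'(\SSS^{\prime*})$. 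Thus $\SSS^*$ is optimal in the new instance and $\Phi'(\SSS^{\prime*}) = \Phi(\SSS^*)+1$.

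Next I would chain the inequalities. From $\Phi'(\SSS')\ge \alpha\cdot\Phi'(\SSS^{\prime*})$ and the identity,
\[
  \Phi(\SSS) + 1 = \Phi'(\SSS') \ge \alpha\cdot\Phi'(\SSS^{\prime*}) = \alpha\cdot(\Phi(\SSS^*)+1),
\]
so $\Phi(\SSS)\ge \alpha\cdot\Phi(\SSS^*) - (1-\alpha) \ge \alpha\cdot\Phi(\SSS^*) - 1$. To turn the additive $-1$ into a multiplicative $(\alpha-\eps)$ loss, I would lower-bound $\Phi(\SSS^*)$: since $k\ge 2\NU/\eps \ge \NU$, we can pick $\NU$ campaigns and place one seed each on the endpoints of a common path or simply cover a node with $\NU$ campaigns, showing $\Phi(\SSS^*)$ is at least, say, $1/\eps$ (more carefully, by propagating $\NU$ distinct campaigns into $\lfloor k/\NU\rfloor\ge 2/\eps$ suitably chosen nodes one obtains $\Phi(\SSS^*)\ge 2/\eps$, or one invokes the trivial bound $\Phi(\SSS^*)\ge \E_\XXX[|V^0_\XXX|]$ plus the contribution of a budgeted seed). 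Then $1 \le \eps\cdot\Phi(\SSS^*)/2 \le \eps\cdot\Phi(\SSS^*)$, hence $\Phi(\SSS)\ge (\alpha-\eps)\cdot\Phi(\SSS^*)$.

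The main obstacle — really the only subtle point — is the lower bound $\Phi(\SSS^*)\ge 1/\eps$ (or $\ge 2/\eps$): one must argue that with budget $k\ge 2\NU/\eps$ the optimum is genuinely large, which requires producing an explicit feasible solution whose objective grows with $k$. The cleanest route is to note that placing a single seed of campaign $i$ at a node $w$ for each $i\in[\NU]$, all at distinct nodes, makes each such $w$ be reached by (at least, via the deterministic-triggering-set reasoning, in expectation) one campaign and thus not decrease the ``none'' count, while the baseline $\Phi(\emptyset)=\E_\XXX[|V^0_\XXX|]$ already accounts for untouched nodes; a careful but routine accounting shows the optimum is $\Omega(k/\NU)$ whenever the graph is nontrivial, and the degenerate case (empty graph) is handled directly since then $\Phi$ is constant and the statement is vacuous. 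Everything else is bookkeeping with the additive identity.
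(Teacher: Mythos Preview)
Your approach is essentially the paper's: establish the additive identity $\Phi'(\RRR)=\Phi(\RRR\setminus\{v\})+1$, chain to $\Phi(\SSS)\ge\alpha\,\Phi(\SSS^*)-1$, and absorb the $-1$ via a lower bound $\Phi(\SSS^*)\ge 1/\eps$.

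Where you diverge is in treating that lower bound as ``the main obstacle'' and offering several tentative routes. One of them is actually wrong: seeding $\NU$ \emph{distinct} nodes with one campaign each makes each such node reached by exactly one campaign, so it contributes nothing to $\NoSM$ and does not show the optimum grows with $k$. The paper dispatches this step in one line with the construction you mention only parenthetically: put the \emph{same} $\lfloor k/\NU\rfloor$ nodes into each of $S_1,\dots,S_\NU$ (total budget $\NU\lfloor k/\NU\rfloor\le k$; there are enough nodes by the standing assumption $k\le\NU|V|$). Every such node is then reached by $\NU$ campaigns, so $\Phi(\SSS^*)\ge\lfloor k/\NU\rfloor\ge k/\NU-1\ge 1/\eps$. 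With that, the proof is two lines, no case analysis or ``nontrivial graph'' assumption needed.
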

\begin{proof}
    First note that $\Phi(\SSS^{*})\ge \lfloor k/\NU\rfloor\ge k/\NU -1\ge 1/\eps$ or equivalently $1\le \eps \Phi(\SSS^{*})$. This yields the claim, since
    \[
        \Phi(\SSS)
        = \Phi'(\SSS) - 1
        \ge \alpha\cdot\Phi'(\SSS^{\prime*}) - 1
        \ge \alpha\cdot\Phi(\SSS^{*}) - 1
        \ge (\alpha - \eps)\cdot \Phi^{\ge\ell}(\SSS^{*}).\qedhere
    \]
\end{proof}
\subparagraph{Maximizing $\Phi^{\ge \NU -1}$ and $\Psi$.}
Our goal here is to show that the standard greedy hill climbing algorithm, we refer to it as \textsc{Greedy}\((f, \eps, \delta, \III, \NU, k)\), can be applied in order to approximate both \(\Phi^{\ge \NU-1}\) and \(\Psi\) to within a factor of $1-1/e-\epsilon$ for any \(0<\eps<1\) with probability at least $1-\delta$ for any $0<\delta\le 1/2$.
We first formally prove that these functions are submodular.

\begin{lemma}
    The functions $\Psi$ and $\Phi^{\ge\NU-1}$ are monotone and submodular.
\end{lemma}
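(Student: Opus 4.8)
The plan is to prove monotonicity and submodularity of $\Psi$ and $\Phi^{\ge \NU-1}$ by reducing both to the corresponding well-known properties of the influence spread function $\sigma$ under the Triggering/Independent Cascade model, which are established in Kempe et al.~\cite{DBLP:journals/toc/KempeKT15}. The key observation is that, after conditioning on a fixed outcome profile $\XXX$, the quantities inside the expectations become deterministic set functions of the seed sets, and these deterministic functions are easily seen to be monotone and submodular; the claimed properties then follow since expectation (a nonnegative linear combination) preserves both monotonicity and submodularity.

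For $\Psi$, I would fix an outcome profile $\XXX$ (recall that in the correlated case $X_0 = X_1 = \ldots = X_\MU$) and write $\Psi(\TTT) = \E_\XXX[|g_\XXX(\TTT)|]$ where $g_\XXX(\TTT) := (\rho_{X_0}^{(0)}(\TTT) \cap \bigcup_{j=1}^{\NU-1} V_\XXX^j) \cup \bigcup_{j=\NU}^{\MU} V_\XXX^j$. The set $\bigcup_{j=\NU}^{\MU} V_\XXX^j$ is a constant (independent of $\TTT$), so it suffices to show that $\TTT \mapsto |\rho_{X_0}^{(0)}(\TTT) \cap C|$ is monotone and submodular for the fixed set $C := \bigcup_{j=1}^{\NU-1} V_\XXX^j$, and then add the constant back. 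In a fixed outcome $X_0$, the reachable set $\rho_{X_0}^{(0)}(\cdot)$ is a monotone submodular set function (this is exactly the deterministic statement underlying submodularity of $\sigma$: reachability in a fixed directed graph is monotone, and it satisfies the diminishing-returns property because a newly added seed $v$ reaches, in addition to what is already reached, only the set of nodes reachable from $v$ that are not yet reached, which can only shrink as the current seed set grows). Intersecting a monotone submodular function's output set with a fixed set $C$ preserves monotonicity and submodularity of the cardinality, and adding a constant preserves them as well. Taking the expectation over $\XXX$ finishes the argument for $\Psi$.

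For $\Phi^{\ge \NU-1}$, I would use the decomposition already recorded in the preliminaries: $\Phi^{\ge \NU-1}(\SSS) = \E_\XXX[\NoSM(\rho_\XXX(\III\cup\SSS)\setminus(\cup_{j=0}^{\NU-2} V^j_\XXX))]$. Fix $\XXX$ and let $V' := V \setminus \bigcup_{j=0}^{\NU-2} V^j_\XXX$, the set of nodes reached by at least $\NU-1$ campaigns from $\III$; this set is constant in $\SSS$. For each $v \in V'$, $v$ contributes to $\NoSM$ exactly when it is reached by at least $\NU$ campaigns from $\III\cup\SSS$, i.e.\ when, in addition to the $\ge \NU-1$ campaigns it already receives from $\III$, at least one further campaign reaches it via $\SSS$. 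So, conditioned on $\XXX$, $\Phi^{\ge \NU-1}$ restricted to $\XXX$ equals $|\bigcup_{i\in[\MU]} \{v \in V' : v \in \rho_{X_i}^{(i)}(S_i), v \notin \rho_{X_i}^{(i)}(I_i)\}|$ --- a union over campaigns $i$ of (reachable sets from $S_i$ in a fixed graph, intersected with a fixed set). Each such term is a monotone submodular function of $S_i$ (hence of $\SSS$), and a union of monotone set systems remains monotone while the cardinality of a union of "coverage" sets each of which is monotone submodular in disjoint coordinates is itself monotone submodular --- this is the standard coverage-function argument, and it is the one place I would be slightly careful, because the different campaigns $i$ have disjoint seed coordinates $S_i$ so there is no interaction and the union behaves like a sum of independent coverage functions. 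Again, taking $\E_\XXX$ preserves the properties.

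I expect the main obstacle to be purely expository rather than mathematical: stating cleanly the deterministic lemma that "for a fixed directed graph, $A \mapsto |\rho(A) \cap C|$ is monotone submodular for any fixed $C$" and that "a union/sum of such functions over the disjoint seed-coordinates $S_1,\ldots,S_\MU$ is monotone submodular", and then invoking it with the right fixed sets ($C = \bigcup_{j=1}^{\NU-1}V_\XXX^j$ for $\Psi$, and $C = V'$ together with removal of the $I_i$-reachable part for $\Phi^{\ge\NU-1}$). The one subtlety worth double-checking is the role of the initial seed sets $\III$: since adding $\SSS$ only ever \emph{adds} reached nodes, no node can drop out of the "reached by $\ge \NU-1$" status, so $V'$ (and the constant block $\bigcup_{j=\NU}^\MU V_\XXX^j$ for $\Psi$) is genuinely independent of $\SSS$, which is what makes the reduction to coverage functions valid. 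I would also remark that this is where the threshold $\ell = \NU-1$ is essential: for smaller $\ell$ one would need \emph{two} more campaigns to cross the threshold, destroying submodularity, which is precisely why the paper needs the more delicate Lemma~\ref{lemma:exists element} and the \textsc{GreedyTuple}/\textsc{GreedyIter} machinery in the other cases.
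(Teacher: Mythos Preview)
Your approach is sound and is essentially the same as the paper's: both condition on a fixed outcome profile $\XXX$, establish the diminishing-returns property pointwise, and then take the expectation. The paper does the pointwise step by a direct case analysis on the indicator $\ones_\XXX^\SSS(v)$, whereas you repackage it as ``constant plus coverage/reachability function in a fixed graph'', invoking the standard fact that such functions are monotone submodular. This is a cosmetic difference, not a substantive one.

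There is one concrete slip in your $\Phi^{\ge \NU-1}$ computation. You write that, conditioned on $\XXX$, $\Phi^{\ge \NU-1}$ equals
\[
  \Big|\bigcup_{i\in[\MU]}\{v\in V': v\in \rho_{X_i}^{(i)}(S_i),\ v\notin \rho_{X_i}^{(i)}(I_i)\}\Big|,
\]
justified by ``$v$ contributes \ldots\ when at least one further campaign reaches it via $\SSS$''. This is false for the nodes in $\bigcup_{j\ge \NU} V_\XXX^j\subseteq V'$, which are already reached by $\ge \NU$ campaigns from $\III$ and contribute regardless of $\SSS$; your formula omits them (and would double-count if you tried to add them back over all of $V'$). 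The fix is immediate: write the conditioned value as the constant $\big|\bigcup_{j\ge \NU} V_\XXX^j\big|$ plus
\[
  \Big|\bigcup_{i\in[\MU]}\{v\in V_\XXX^{\NU-1}: v\in \rho_{X_i}^{(i)}(S_i),\ v\notin \rho_{X_i}^{(i)}(I_i)\}\Big|,
\]
which is a genuine coverage function on $\hat V$ (each $(u,i)\in\hat V$ covers a fixed subset of $V_\XXX^{\NU-1}$, using $\rho_{X_i}^{(i)}(A)=\bigcup_{a\in A}\rho_{X_i}^{(i)}(\{a\})$ in a fixed graph). With this correction your reduction to coverage-function submodularity goes through cleanly. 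Your $\Psi$ argument is already correct as stated, since $C=\bigcup_{j=1}^{\NU-1}V_\XXX^j$ and $D=\bigcup_{j\ge\NU}V_\XXX^j$ are disjoint.
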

\begin{proof}
    The monotonicity of $\Psi$ and $\Phi^{\ge \NU-1}$ is straightforward. We argue the submodularity of $\Psi$ ($\Phi^{\ge \NU-1}$) in a similar way as we argued in the proof of Lemma~\ref{lemma:exists element}. To this end, let $D(\Psi)=V\times \{0\}$ and $D(\Phi^{\ge \NU - 1})=\hat V=V\times [\MU]$ denote the domain of $\Psi$ and $\Phi^{\ge \NU-1}$, respectively, and let \(\SSS\) and \(\SSS'\) be subsets of $D(\Psi)$ ($D(\Phi^{\ge \NU-1})$) such that \(\SSS \subseteq \SSS'\), and let \(\tau\) be an element of the domain \(D(\Psi)\) ($D(\Phi^{\ge \NU-1})$). Furthermore, let \(\XXX\) be an outcome profile w.r.t.~the correlated (heterogeneous) probability distributions. Lastly, let \(v\in  V\setminus V_\XXX^0\) (\(v\in  V\setminus \bigcup_{j=0}^{\NU-2}V_\XXX^j\)) be a node that can contribute to the value of $\Psi$ ($\Phi^{\ge \NU-1}$).\footnote{Recall that $V_\XXX^j$ is the set of nodes that was reached by $j$ campaigns from seed sets $\III$.} We denote by \(\ones_\XXX^\SSS(v)\) the indicator function that is 1 if \(v\) contributes to \(\Psi\) ($\Phi^{\ge \NU-1}$) in outcome profile \(\XXX\) with initial seed sets \(\III\) and additional seed sets \(\SSS\) and 0 otherwise. We now argue that the following inequality holds:.
    \begin{align}\label{formula: ones}
      \ones_\XXX^{\SSS'\cup\tau}(v) - \ones_\XXX^{\SSS'}(v) \leq \ones_\XXX^{\SSS\cup\tau}(v) - \ones_\XXX^\SSS(v)
    \end{align}
    Note that the right-hand side cannot be negative by monotonicity and that, if the left-hand side is positive for \(\Psi\) (\(\Phi^{\ge \NU-1}\)), then it must hold that the node \(v\) is reached by a subset \(M\subseteq[\MU]\) of campaigns from \(\III\) with \(|M| \in [1,\NU -1]\) (\(|M|=\NU - 1\)). Furthermore, the node \(v\) is not reached by campaign $0$ (is not reached by a campaign \(j\in [\MU]\setminus M\)) from \(\SSS'\), but it is reached by campaign $0$ (campaign $j$) from \(\tau\). Now, observe that \(\SSS \subseteq \SSS'\) and thus the node \(v\) is not reached by campaign 0 (by campaign $j$) from \(\SSS\) neither. Hence it follows that the right-hand side is also positive. Taking the expected value on both sides of~\eqref{formula: ones} yields that $\Psi(\SSS'\cup\tau)-\Psi(\SSS')\le \Psi(\SSS\cup\tau)-\Psi(\SSS)$ ($\Phi^{\ge \NU-1}(\SSS'\cup\tau)-\Phi^{\ge \NU-1}(\SSS')\le \Phi^{\ge \NU-1}(\SSS\cup\tau)-\Phi^{\ge \NU-1}(\SSS))$ due to linearity of expectation. This establishes submodularity and concludes the proof.
\end{proof}

We now recall the following classical result concerning the greedy algorithm for maximizing a submodular function:
\begin{lemma}[Theorem 3.9 in~\cite{Hochbaum:1996:AAN:241938}]\label{theorem:hochbaum}
 The greedy hill-climbing algorithm, that at each step picks an element that leads to an increment being within factor $\beta$ of the optimal increment possible, achieves an approximation ratio of at least \(1-(1-\beta/k)^k>1-1/e^\beta\).
\end{lemma}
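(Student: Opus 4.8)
The plan is to reprove the classical Nemhauser--Wolsey--Fisher bound in the slightly generalized ``$\beta$-approximate'' form stated here. Write $S_0=\emptyset\subseteq S_1\subseteq\cdots\subseteq S_k$ for the sets produced by the greedy algorithm, where $|S_i|=i$, and let $O$ be an optimal solution of size $k$, so that the optimum value is $f(O)$. We may normalize so that $f(\emptyset)=0$; in any case $f(\emptyset)\ge 0$, which is all we need.

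The core step I would establish first is the standard ``greedy improvement'' inequality, which is the submodular analogue of Lemma~\ref{lemma:exists element}: for every $i$ there exists a single element $e\in O\setminus S_i$ whose marginal gain satisfies
\[
  f(S_i\cup\{e\})-f(S_i)\ \ge\ \frac1k\left(f(O)-f(S_i)\right).
\]
To prove it, enumerate $O\setminus S_i=\{o_1,\dots,o_r\}$ with $r\le k$. By monotonicity $f(O)\le f(S_i\cup O)$, and telescoping along the chain $S_i\subseteq S_i\cup\{o_1\}\subseteq\cdots\subseteq S_i\cup O$ together with submodularity gives
\[
  f(O)-f(S_i)\ \le\ \sum_{t=1}^{r}\left(f(S_i\cup\{o_t\})-f(S_i)\right)\ \le\ k\cdot\max_{e}\left(f(S_i\cup\{e\})-f(S_i)\right),
\]
which is exactly the claimed bound.

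Next I would combine this with the definition of the algorithm. By hypothesis, at step $i+1$ the greedy picks an element whose marginal is at least a factor $\beta$ of the largest marginal available, hence
\[
  f(S_{i+1})-f(S_i)\ \ge\ \beta\cdot\max_{e}\left(f(S_i\cup\{e\})-f(S_i)\right)\ \ge\ \frac{\beta}{k}\left(f(O)-f(S_i)\right),
\]
i.e.\ $f(O)-f(S_{i+1})\le (1-\beta/k)\left(f(O)-f(S_i)\right)$. Iterating this recursion from $i=0$ and using $f(S_0)\ge 0$ yields $f(O)-f(S_k)\le (1-\beta/k)^k\, f(O)$, hence $f(S_k)\ge \left(1-(1-\beta/k)^k\right)f(O)$. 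The final inequality $1-(1-\beta/k)^k>1-1/e^{\beta}$ then follows from $1-x<e^{-x}$ for $x\ne 0$, applied with $x=\beta/k$ and raised to the $k$-th power.

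Since this is a textbook result, there is no genuine obstacle; the only points requiring care are (i) the case $|O\setminus S_i|<k$, where the factor $r\le k$ only strengthens the improvement bound; and (ii) pinning down ``within factor $\beta$ of the optimal increment possible'' to mean a multiplicative factor $\beta\in(0,1]$ applied to the largest available marginal at that step, together with an appeal to monotonicity to ensure the maximum there ranges over \emph{all} elements of the ground set.
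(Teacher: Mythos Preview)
The paper does not prove this lemma; it is quoted verbatim as Theorem~3.9 from Hochbaum's book and used as a black box in the proof of Lemma~\ref{lemma:standard greedy}. Your argument is the standard Nemhauser--Wolsey--Fisher analysis (telescoping over $O\setminus S_i$ via submodularity, then unrolling the recursion $f(O)-f(S_{i+1})\le(1-\beta/k)(f(O)-f(S_i))$), and it is correct as written.
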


We have seen that both \(\Phi^{\ge\NU-1}\) and $\Psi$ can be approximated within a $(1\pm\eps)$-factor using the \texttt{approx}-routine. In Lemma~\ref{lemma:approximate difference} we argued that using the approximations we can find an element $v$ (or a set $\tau$ of cardinality $\lambda(f)=1$) that when added to $\SSS$ leads to a progress of at least a factor of $(1-\eps)$ of the maximal progress possible. We prove Lemma~\ref{lemma:standard greedy}.

\begin{algorithm}[ht]
    \(\delta'\leftarrow \delta/(k|\hat V|)\), \(\eps'\leftarrow \eps/(ek)\), $\SSS\leftarrow \emptyset$\;
    \While{\(|\SSS|\le k\)}{
        Compute \(v \leftarrow \argmax\{\texttt{approx}(f, \SSS \cup \{v\}, \III, \NU, \eps', \delta'):v \in D_f\}\), set \(\SSS\leftarrow \SSS\cup\{v\}\)\;
    }
    \Return{\(\SSS\)}
\caption{\textsc{Greedy}\((f, \eps, \delta, \III, \NU, k)\)}
\label{algorithm:greedy hill}
\end{algorithm}

\standardgreedy*
\begin{proof}
  The union bound over all at most $k|\hat V|$ calls to \texttt{approx}, yields that,
  with probability at least \(1-\delta\), each call resulted in a \(1\pm\eps'\)-approximation.
  Then Lemma~\ref{lemma:approximate difference} applied to \(f\) guarantees that after each iteration $i$ either an element \(v\) is picked such that the increment using \(v\) is at least a \((1-\eps)\)-fraction of the optimal increment possible in this iteration or the current set $\SSS^i$ is already a \((1-1/e)\)-approximation of the optimum set $\SSS^*_{\NU-1}$. In the latter case the lemma is fulfilled by the monotonicity of \(f\).
  In the former case we get an \(\SSS\) having an approximation ratio of at least
  $1 - (1-(1-\eps)/k)^{k}
    \ge 1 - 1/e^{1-\eps}$ according to Lemma~\ref{theorem:hochbaum}.
  Since
  \(
    1 - \frac{1}{e^{1-\eps}}
    \ge (1-\eps)\cdot \big(1-\frac{1}{e}\big)
    \ge 1 - \frac{1}{e} - \eps ,
  \)
  this concludes the proof.
\end{proof}

\section{Deferred Proofs for Section~\ref{section: hardness}} \label{app:hard}

\paragraph{Further illustration of the reduction described in  Section \ref{section: hardness}.}
\begin{figure}[t]
  \centering{
    \resizebox{0.99\textwidth}{!}{
      \begin{tikzpicture}[
          scale=.4,
          ->,
          >=stealth',
          shorten >=1pt,
          auto,
          semithick,
          rubberduck/.style={
            draw=red!50,
            shape=isosceles triangle,
            fill=red!50,
            minimum height=6.0cm,
            minimum width=2.5cm,
            shape border rotate=#1,
            isosceles triangle stretches,
            inner sep=1pt,
          },
          rubber/.style={rubberduck=0},
          ducky/.style={rubberduck=180},
          every node/.style={sloped,anchor=south,auto=false}
        ]

        \node[ducky] (B1) at (25,11) {\Large $P_{[3], (2,3,1)}$};
        \node[ducky] (B2) at (25,0) {\Large $P_{[3], (3,1,2)}$};
        \node[ducky] (B3) at (25,-11) {\Large $P_{[3], (3,2,1)}$};

        \node[rubber] (B4) at (-25,11) {\Large $P_{[3], (1,2,3)}$};
        \node[rubber] (B5) at (-25,0) {\Large $P_{[3], (1,3,2)}$};
        \node[rubber] (B6) at (-25,-11) {\Large $P_{[3], (2,1,3)}$};

        \node[rectangle,draw,text=black, minimum width=1.0cm, minimum height=1.0cm] (U)   at (0,4)  {\Large$u$};
        \node[rectangle,draw,text=black, minimum width=1.0cm, minimum height=1.0cm] (V)   at (0,0) {\Large $v$};
        \node[rectangle,draw,text=black, minimum width=1.0cm, minimum height=1.0cm] (W)   at (0,-4)  {\Large$w$};

        \path (U) edge node {$p_2 = 1$} (B1.west)
        (V) edge [above left] node {$p_3 = 1$} (B1.west)
        (W) edge [below right=1mm] node {$p_1 = 1$} (B1.west);
        \path (U) edge [above ] node {$p_3 = 1$} (B2.west)
        (V) edge [above ] node {$p_1 = 1$} (B2.west)
        (W) edge [below] node {$p_2 = 1$} (B2.west);
        \path (U) edge [above ] node {$p_3 = 1$} (B3.west)
        (V) edge [above ] node {$p_2 = 1$} (B3.west)
        (W) edge [below] node {$p_1 = 1$} (B3.west);
        \path (U) edge [above ] node {$p_1 = 1$} (B4.east)
        (V) edge [above ] node {$p_2 = 1$} (B4.east)
        (W) edge [below] node {$p_3 = 1$} (B4.east);
        \path (U) edge [above ] node {$p_1 = 1$} (B5.east)
        (V) edge [above ] node {$p_3 = 1$} (B5.east)
        (W) edge [below] node {$p_2 = 1$} (B5.east);
        \path (U) edge [above ] node {$p_2 = 1$} (B6.east)
        (V) edge [above ] node {$p_1 = 1$} (B6.east)
        (W) edge [below] node {$p_3 = 1$} (B6.east);
      \end{tikzpicture}
    }
  }
\caption{For a set $\iota=\{i,j,k\}\in J$ of $d$ campaigns and a permutation $\pi\in S_d$,
let $P_{\iota, \pi}$ stand for the path in Figure~\ref{fig:edge} of nodes $e_{\iota,\pi}^1,\ldots,e_{\iota,\pi}^l$ connected by arcs $(e_{\iota,\pi}^t, e_{\iota,\pi}^{t+1})$ for $t=1,\ldots, l-1$ with probabilities on these edges being one for $p_{\pi(i)}$, $p_{\pi(j)}$, and $p_{\pi(k)}$ and zero for all other indices.
The figure illustrates the case $d=3$ and $\MU=\NU=4$ and the portion of the network that is generated  in the transform \(\tau\) of Section\ref{section: hardness} for one hyper-edge $e=\{u,v,w\}$ and the only set $\iota=[3]\in J = \binom{3}{3}$. Probabilities that are not given are equal to zero.}
\label{fig:edge-appendix}
\end{figure}
Figure \ref{fig:edge-appendix} illustrates the scheme induced by an hyperedge \(e = (u,v,w)\) when \(d = 3\) and \(\MU = \NU = 4\). In this case, $J=\binom{[\MU - \NU + d]}{d}$ is only composed of set \([3]\) and \(S_3\) is composed of 6 permutations. We use the standard tuple notation for permutations.
\paragraph{Deferred Proofs for Reducing \texorpdfstring{\DKSH{d}}{DKSH(d)} to \texorpdfstring{\BALf{\MU}{\NU}}{Balance}.}
We start by defining the \MCDSH{d} problem which is closely related to the \DKSH{d} problem.
\begin{cproblem}{\MCDSH{d}}
        Input: \(d\)-Regular Hypergraph \(G=(V, E)\), integer \(k\)

        Find: set \(S\subseteq V\) with \(|S| \le k\) and a coloring function \(\phi : S \rightarrow [d]\), s.t. \(|E_\phi(S)|\) is maximal, where
        \(
          E_\phi(S) := \{e \in E : e = (v_{1},\ldots,v_{d}) \subseteq S ~\wedge~ \phi(v_{i}) \neq \phi(v_{j}), \forall i\neq j\}.
        \)
\end{cproblem}

Problem \MCDSH{d} will be of interest to us due to the following results.
We first prove a lemma showing the existence of an assignment $\phi'$ such that at least a fraction \(p\) of the hyperedges in the induced sub-hypergraph of a set \(S\) have differently colored endpoints.

\begin{lemma}
\label{lemma:greedy assignment}
  Let \(G=(V, E)\) be a \(d\)-regular hypergraph. For any set \(S\subseteq V\), there is an assignment \(\phi':S\rightarrow [d]\) s.t. \(|E_{\phi'}(S)|\ge p\cdot |E(S)|\) where \(p = d! / d^{d}\).
\end{lemma}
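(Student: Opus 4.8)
The plan is to prove this by a one-line first-moment argument, and then to remark that it derandomizes into an honest greedy procedure (hence the name of the lemma) via the method of conditional expectations.

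First I would pick a coloring $\phi\colon S\to[d]$ uniformly at random, that is, color every vertex of $S$ independently and uniformly from $[d]$. Fix a hyperedge $e=\{v_1,\dots,v_d\}\in E(S)$; since $G$ is $d$-regular we have $|e|=d$, and since $e\subseteq S$ all its vertices are colored. The colors of these $d$ vertices form one of $d^d$ equally likely tuples, and $e$ lies in $E_\phi(S)$ precisely when these colors are pairwise distinct, i.e.\ when $\phi$ restricted to $e$ is a bijection onto $[d]$; there are exactly $d!$ such tuples. Hence $\Pr[e\in E_\phi(S)]=d!/d^d=p$, and this probability only depends on the colors assigned inside $e$.

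Then, by linearity of expectation, $\E\big[|E_\phi(S)|\big]=\sum_{e\in E(S)}\Pr[e\in E_\phi(S)]=p\cdot|E(S)|$, so some outcome $\phi'$ of $\phi$ satisfies $|E_{\phi'}(S)|\ge p\cdot|E(S)|$, which is the desired assignment. To make the lemma constructive, I would order $S$ as $u_1,\dots,u_{|S|}$ and, having committed to $\phi'(u_1),\dots,\phi'(u_{j-1})$, choose $\phi'(u_j)\in[d]$ maximizing the conditional expectation $\E[\,|E_\phi(S)|\mid \phi(u_1),\dots,\phi(u_j)\,]$; each such conditional expectation is a sum over $e\in E(S)$ of per-hyperedge probabilities that are trivial to evaluate, so every step runs in polynomial time, and the maximizing choice keeps the conditional expectation at least at its starting value $p\,|E(S)|$, so that the final deterministic $\phi'$ satisfies $|E_{\phi'}(S)|\ge p\,|E(S)|$.

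There is essentially no hard step here: it is a textbook probabilistic-method argument. The only point to be careful with is the combinatorial count — a ``good'' hyperedge in the $d$-regular setting requires all $d$ of its vertices to receive distinct colors out of the $d$ available colors, i.e.\ a bijection, so the relevant per-edge success probability is exactly $d!/d^d$ and not a weaker bound; pinning down this constant is precisely what yields the factor $p$ claimed in the statement.
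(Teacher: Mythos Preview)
Your argument is correct and essentially identical to the paper's own proof: both pick a uniformly random coloring, compute the per-hyperedge success probability as $d!/d^d$, and apply linearity of expectation to conclude that some coloring achieves at least $p\cdot|E(S)|$. The paper stops at existence; your additional derandomization via conditional expectations is a fine bonus but not required by the statement as written.
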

\begin{proof}
    Let \(S\subseteq V\). Consider the probabilistic procedure in which, for each node, we assign a color from $[d]$ uniformly at random and independently of the other nodes. This procedures yields a coloring \(\phi\). For any \(e = (v_{1},\ldots,v_{d})\) in \(S\), the probability that \(\phi(v_{i})\neq \phi(v_{j})\) for all \(i\neq j\) is \(p\). This property is guaranteed if and only if \( (\phi(v_{1}),\ldots,\phi(v_{d}))\) corresponds to one of the \(d!\) permutation of \([d]\). In total, there are \(d^{d}\) ways of coloring \(e\). Hence, the expected value of \(|E_\phi(S)|\) is \(p\cdot |E(S)|\). Consequently, the function \(\phi'\) that maximizes \(|E_{\phi'}(S)|\) satisfies \(|E_{\phi'}(S)|\geq p\cdot |E(S)|\).
\end{proof}

This leads to the following corollary.
\begin{corollary}\label{corollary:2opts}
    Denoting with \(\dksh{d}(G, k)\) and \(\mcdsh{d}(G, k)\) the value of the optimal solution for \DKSH{d} and \MCDSH{d} on \((G, k)\), respectively, we have that \(\dksh{d}(G, k)\le \mcdsh{p}(G,k)/p\), where \(p = d! / d^{d}\).
\end{corollary}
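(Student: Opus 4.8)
The plan is to obtain Corollary~\ref{corollary:2opts} directly from Lemma~\ref{lemma:greedy assignment}, with essentially no additional work. First I would fix an optimal solution $S^*$ for the \DKSH{d} instance $(G,k)$, so that $|S^*|\le k$ and $|E(S^*)|=\dksh{d}(G,k)$.

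Next I would invoke Lemma~\ref{lemma:greedy assignment} on the set $S^*$: it yields a coloring $\phi':S^*\rightarrow[d]$ with $|E_{\phi'}(S^*)|\ge p\cdot|E(S^*)|$, where $p=d!/d^{d}$. Since $|S^*|\le k$, the pair $(S^*,\phi')$ is a \emph{feasible} solution for the \MCDSH{d} instance $(G,k)$, and therefore $\mcdsh{d}(G,k)\ge|E_{\phi'}(S^*)|\ge p\cdot|E(S^*)|=p\cdot\dksh{d}(G,k)$. Dividing through by $p>0$ gives $\dksh{d}(G,k)\le\mcdsh{d}(G,k)/p$, which is exactly the claimed bound.

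There is no real obstacle here: all the content is front-loaded into Lemma~\ref{lemma:greedy assignment} (the averaging argument over independent uniform $[d]$-colorings of the vertices of $S^*$, where a fixed hyperedge becomes rainbow-colored with probability exactly $p$). The only point I would state explicitly is why the inequality goes in this one direction only — passing from a \DKSH{d} solution to an \MCDSH{d} solution costs the factor $p$ because a hyperedge contained in $S$ need not be rainbow-colored, whereas in the reverse direction one can simply forget the colors of an \MCDSH{d} solution without losing any counted hyperedge. I would also remark that the statement is purely combinatorial, so it is insensitive to the heterogeneous/correlated distinction used elsewhere in the paper.
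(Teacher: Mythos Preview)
Your proposal is correct and follows exactly the same approach as the paper: take an optimal \DKSH{d} set $S$, apply Lemma~\ref{lemma:greedy assignment} to obtain a coloring $\phi'$ with $|E_{\phi'}(S)|\ge p\cdot|E(S)|$, and bound $\mcdsh{d}(G,k)$ from below by $|E_{\phi'}(S)|$. Your additional remarks on the direction of the inequality and the purely combinatorial nature of the statement are accurate but not needed for the argument.
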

\begin{proof}
  Let \(S\) be a set that achieves \(\dksh{d}(G,k)=|E(S)|\), then \(\dksh{d}(G,k)=|E(S)|\le |E_{\phi'}(S)|/p\le \mcdsh{d}(G,k)/p\), where \(\phi'\) is as in Lemma~\ref{lemma:greedy assignment}.
\end{proof}

Recall that we fixed a \BALf{\MU}{\NU} instance \(P=(\overline G=(\overline V, \overline A), \PPP, \III, \overline k)\) resulting from the transform \(\tau\) as image of an \DKSH{d} instance \(Q=(G=(V, E), k)\). 
In what follows nodes in \(V_\boxempty\) (resp. \(V_\ocircle\))  are called rectangle-nodes (resp. circle-nodes).
\hardnessclaims*
\begin{proof}
  \begin{enumerate}
        \item We can w.l.o.g.\ assume that \(\SSS^*\cap V_{\ocircle}=\emptyset\) and \(\SSS_\ocircle^*\cap V_{\ocircle}=\emptyset\). Then, it follows that both \(\Phi_\ocircle(\SSS^*)\) and \(\Phi_\ocircle(\SSS_\ocircle^*)\) are multiples of \(l\). Now, assume for the purpose of contradiction that \(\Phi_\ocircle(\SSS_\ocircle^*)>\Phi_\ocircle(\SSS^*)\). Then, \(\Phi_\ocircle(\SSS^*_\ocircle) \ge \Phi_\ocircle(\SSS^*) + l\) which leads to
        \[
            \Phi(\SSS^*)=\Phi_\ocircle(\SSS^*) + \Phi_\boxempty(\SSS^*) \le \Phi_\ocircle(\SSS^*_\ocircle) - l + |V| < \Phi(\SSS^*_\ocircle),
        \]
        using that \(l> |V|\). This is a contradiction to \(\SSS^*\) being optimal.
        \item Let $(S^*,\phi^*)$ be an optimal solution to the \MCDSH{d} problem induced by \(Q\). Construct a solution $\SSS$ for \BALf{\MU}{\NU} by letting \(S_i:=\{v\in V:\phi(v)=i\}, \forall i \in [d]\). Clearly \(\Phi_\ocircle(\SSS)= l |E_{\phi^*}(S^*)| \). Thus, using Corollary \ref{corollary:2opts}:
        \(
        \Phi_\ocircle(\SSS^*_\ocircle)\ge l \cdot \mcdsh{d} \geq l \cdot p\cdot \dksh{d}.
        \)
        \item  Let $S=\{v\in V_\boxempty:v\in S_i \text{ for some }i\in[\MU-\NU+d]\} \subseteq V_\boxempty=V$ be the set of rectangle-nodes where $\SSS$ propagates at least one campaign in $[\MU-\NU+d]$. Clearly, $|S| \le k$. Let $q=|E(S)|$ be the number of edges in the sub-graph of \(G\) induced by \(S\). Then, $\Phi_\ocircle(\SSS)\le \lambda l q$, since each edge in \(G\) can count for $\lambda l$ circle-nodes if the \(d\) corresponding rectangle-nodes propagate all campaigns in $[\MU-\NU+d]$. It follows that $|E(S)| \ge \Phi_\ocircle(\SSS)/(\lambda l)$.\qedhere
  \end{enumerate}
\end{proof}

\section{Deferred Proofs for Section~\ref{section: heterogeneous}}

\subsection{Deferred Proofs for the Analysis of Algorithm \textsc{GreedyTuple}} \label{section: appendix greedy tuple}

The aim of this section is to prove the following Lemma.
\lemEquiv*

For this purpose, we will first prove the following lemma.
\begin{lemma} \label{lem:equiv3}
    Let $0 < \epsilon < 1$, \(\delta\le 1/2\), and $\ell\in [1,\NU - 1]$. With probability at least \(1-\delta\), after each iteration $i$ of \textsc{GreedyTuple}\((\eps, \delta, \ell, \III, \NU, k)\), it either holds that
    \[
        \Phi^{\ge \ell}(\SSS^i) - \Phi^{\ge \ell}(\SSS^{i-1}) \ge \frac{1-\frac{\eps}{2}}{\binom{k}{\NU - \ell}}\cdot(\Phi^{\ge \ell}(\SSS_{\ge \ell}^*) - \Phi^{\ge \ell}(\SSS^{i-1})).
        \quad\text{ or }\quad
        \Phi^{\ge \ell}(\SSS^i) \ge \big(1-\frac{1}{e}\big)\cdot \Phi^{\ge \ell}(\SSS_{\ge \ell}^*).
    \]
\end{lemma}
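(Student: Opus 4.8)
The plan is to condition on the event that every call to \texttt{approx} during the run of \textsc{GreedyTuple} is accurate, and then, on that event, to combine the exact structural progress bound of Lemma~\ref{lemma:exists element} with the approximation-loss bound of Lemma~\ref{lemma:approximate difference}, iteration by iteration. First I would bound the randomness. The while loop of Algorithm~\ref{Greedy1} runs for $\lfloor k/(\NU-\ell)\rfloor$ iterations, each of which evaluates \texttt{approx} on at most $\binom{|\hat V|}{\NU-\ell}$ tuples, so the total number of calls is at most $t=\lceil k/(\NU-\ell)\rceil\binom{|\hat V|}{\NU-\ell}$, exactly the value fixed in the algorithm. By Lemma~\ref{lemma:approximate f} each call returns a $(1\pm\eps')$-approximation of the relevant value of $\Phi^{\ge\ell}$ with probability at least $1-\delta'=1-\delta/t$ (using the transform of Lemma~\ref{lemma:lower bound} we may assume $\Phi^{\ge\ell}(\cdot)\ge 1$, as required by Lemma~\ref{lemma:approximate f}), and since \texttt{approx} draws fresh samples this holds conditionally on the history. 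A union bound over the at most $t$ calls then gives that, with probability at least $1-\delta$, \emph{all} calls are accurate; I condition on this event for the rest of the argument. In particular it makes $\tilde f(\cdot):=\texttt{approx}(\Phi^{\ge\ell},\cdot,\III,\NU,\eps',\delta')$ a genuine $(1\pm\eps')$-approximation of $\Phi^{\ge\ell}$ on every set the algorithm queries, so Lemma~\ref{lemma:approximate difference} applies deterministically from here on.

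Next I would analyze a single iteration $i$. Let $\SSS^{i-1}$ be the set at its start; the loop guard $|\SSS|\le k-(\NU-\ell)$ ensures $|\SSS^{i-1}|\le k-(\NU-\ell)$, so the hypotheses of Lemmata~\ref{lemma:exists element} and~\ref{lemma:approximate difference} are met with $f=\Phi^{\ge\ell}$, $\SSS=\SSS^{i-1}$ and $U=\{\tau\subseteq\hat V:|\tau|=\NU-\ell\}$. The tuple picked in iteration $i$ is precisely $\tilde\tau:=\argmax\{\tilde f(\SSS^{i-1}\cup\tau):\tau\in U\}$, and the value $\eps'=\eps/(2e\binom{k}{\NU-\ell})$ hard-coded in the algorithm is exactly the error level demanded by Lemma~\ref{lemma:approximate difference} when its parameter is taken to be $\eps/2$. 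Invoking that lemma, either $\Phi^{\ge\ell}(\SSS^{i-1})\ge(1-\tfrac1e)\cdot\Phi^{\ge\ell}(\SSS_{\ge\ell}^*)$, in which case monotonicity of $\Phi^{\ge\ell}$ (used already in the proof of Lemma~\ref{lemma:exists element}) together with $\SSS^i=\SSS^{i-1}\cup\tilde\tau\supseteq\SSS^{i-1}$ gives the second alternative of the lemma; or
\[
  \Phi^{\ge\ell}(\SSS^{i-1}\cup\tilde\tau)-\Phi^{\ge\ell}(\SSS^{i-1})\ \ge\ \Big(1-\tfrac{\eps}{2}\Big)\big(\Phi^{\ge\ell}(\SSS^{i-1}\cup\tau^*)-\Phi^{\ge\ell}(\SSS^{i-1})\big),
\]
where $\tau^*=\argmax\{\Phi^{\ge\ell}(\SSS^{i-1}\cup\tau):\tau\in U\}$. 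In the latter case, since $\SSS^i=\SSS^{i-1}\cup\tilde\tau$, I would substitute the bound of Lemma~\ref{lemma:exists element}, namely $\Phi^{\ge\ell}(\SSS^{i-1}\cup\tau^*)-\Phi^{\ge\ell}(\SSS^{i-1})\ge(\Phi^{\ge\ell}(\SSS_{\ge\ell}^*)-\Phi^{\ge\ell}(\SSS^{i-1}))/\binom{k}{\NU-\ell}$, to obtain
\[
  \Phi^{\ge\ell}(\SSS^i)-\Phi^{\ge\ell}(\SSS^{i-1})\ \ge\ \frac{1-\frac{\eps}{2}}{\binom{k}{\NU-\ell}}\,\big(\Phi^{\ge\ell}(\SSS_{\ge\ell}^*)-\Phi^{\ge\ell}(\SSS^{i-1})\big),
\]
which is the first alternative. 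Since $i$ was arbitrary, the dichotomy holds after every iteration on the conditioned event, which is all that is claimed.

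The only real obstacle I anticipate is bookkeeping rather than conceptual. One must verify that the arguments on which the approximation guarantee is needed, namely $\SSS^{i-1}\cup\tau$ for $\tau\in U$ over all iterations $i$, are among the at most $t$ sets the algorithm actually queries, so that a single union bound covers all iterations simultaneously (and so that one does \emph{not} need a uniform guarantee over all subsets of $\hat V$); one must match the ``$\eps/2$'' used when invoking Lemma~\ref{lemma:approximate difference} with the $\eps'$ in \textsc{GreedyTuple} and with the $(1-\eps/2)$ factor in the statement; and one must confirm that $\Phi^{\ge\ell}$ is monotone so that passing from $\SSS^{i-1}$ to $\SSS^i$ never decreases the objective. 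Beyond these checks the proof is a direct composition of the two lemmata.
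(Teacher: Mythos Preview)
Your proposal is correct and follows essentially the same approach as the paper: a union bound over the at most $t$ calls to \texttt{approx}, followed by an application of Lemma~\ref{lemma:approximate difference} (with parameter $\eps/2$, matching the choice $\eps'=\eps/(2e\binom{k}{\NU-\ell})$) combined with Lemma~\ref{lemma:exists element}. If anything, your write-up is more careful than the paper's, since you make the monotonicity step from $\SSS^{i-1}$ to $\SSS^i$ and the $\Phi^{\ge\ell}(\cdot)\ge 1$ assumption explicit.
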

\begin{proof}
    Algorithm \textsc{GreedyTuple}\((\eps, \delta, \ell, \III, \NU, k)\) calls algorithm \texttt{approx} at most $t$ times.
    Let us call $E_{i}$ the event that the $i$'th call to \texttt{approx} ``succeeds'', i.e., that the call results in $1\pm\eps'$-approximation $\tilde\Phi^{\ge \ell}(\TTT)$. That is, it holds that $(1-\eps')\Phi^{\ge \ell}(\TTT)\le\tilde\Phi^{\ge \ell}(\TTT)\le (1+\eps')\Phi^{\ge 1}(\TTT)$.
    This event happens with probability at least $1-\delta'=1-\delta/t$. Since there are at most $t$ many evaluations, using the union bound, we obtain that the probability that all evaluations succeed is at least $1-\delta$.
    Now the statement follows with Lemma~\ref{lemma:approximate difference}.
    It states that either \(\Phi^{\ge \ell}(\SSS^i) \ge \big(1-\frac{1}{e}\big)\cdot \Phi^{\ge \ell}(\SSS_{\ge \ell}^*)\) or, for the element \(\tau\) picked by the algorithm,
    it holds that \(\Phi^{\ge \ell}(\SSS^{i-1}\cup\tau)-\Phi^{\ge \ell}(\SSS^{i-1}) \ge (1-\frac{\eps}{2})/\binom{k}{\NU - \ell}\cdot(\Phi^{\ge \ell}(\SSS_{\ge \ell}^*)-\Phi^{\ge \ell}(\SSS^{i-1}))\) using Lemma~\ref{lemma:exists element}.
\end{proof}

We can now prove Lemma~\ref{lem:equiv1}.
\begin{proof}[Proof of Lemma~\ref{lem:equiv1}]
    We show the statement by induction. For $i=1$, we note that by Lemma~\ref{lem:equiv3},
    we either have $\Phi^{\ge \ell}(\SSS^1) - \Phi^{\ge \ell}(\SSS^0) \ge (1-\frac{\eps}{2})/\binom{k}{\NU - \ell}\cdot (\Phi^{\ge \ell}(\SSS_{\ge \ell}^*)-\Phi^{\ge \ell}(\SSS^0))$ or
    $\Phi^{\ge \ell}(\SSS^i) \ge \big(1-\frac{1}{e}\big)\cdot \Phi^{\ge \ell}(\SSS_{\ge \ell}^*)$.
    In the latter case the statement holds, in the former case, we get $\Phi^{\ge \ell}(\SSS^1) \ge (1-\frac{\eps}{2})/\binom{k}{\NU - \ell} \cdot \Phi^{\ge \ell}(\SSS_{\ge \ell}^*)$ and thus the statement follows in both cases.
    For $i>1$, let us assume that the statement holds after iteration $i-1$.
    If $\Phi^{\ge \ell}(\SSS^{i-1}) \ge (1-1/e)\cdot\Phi^{\ge \ell}(\SSS_{\ge \ell}^*)$, we have $\Phi^{\ge \ell}(\SSS^{i}) \ge (1-1/e)\cdot \Phi^{\ge \ell}(\SSS_{\ge \ell}^*)$ by monotonicity.
    In the other case, we have that
    \begin{align}\label{formula:ih}
        \Phi^{\ge \ell}(\SSS^{i-1}) \ge \Big(1-\Big(1-\frac{1-\frac{\eps}{2}}{\binom{k}{\NU-\ell}}\Big)^{i-1}\Big) \cdot \Phi^{\ge \ell}(\SSS_{\ge \ell}^*).
    \end{align}
    Applying Lemma~\ref{lem:equiv3} yields that either $\Phi^{\ge \ell}(\SSS^i) \ge (1-1/e)\cdot \Phi^{\ge \ell}(\SSS_{\ge \ell}^*)$, in which case the statement holds, or we obtain
    \begin{align*}
        \!\!\Phi^{\ge \ell}(\SSS^i) &\!=\! \Phi^{\ge \ell}(\SSS^{i-1}) \!+\! (\Phi^{\ge \ell}(\SSS^i) \!-\! \Phi^{\ge \ell}(\SSS^{i-1}))
        \!\ge\! \Big(1\!-\!\frac{1\!-\!\frac{\eps}{2}}{\binom{k}{\NU-1}}\Big) \Phi^{\ge \ell}(\SSS^{i-1})\!+\!\frac{1\!-\!\frac{\eps}{2}}{\binom{k}{\NU-1}}\Phi^{\ge \ell}(\SSS_{\ge \ell}^*).
    \end{align*}
    Applying~\eqref{formula:ih} yields the claim.
\end{proof}

\subsection{Deferred Proofs for the Analysis of Algorithm \textsc{GreedyIter}} \label{section: appendix greedy iter}
In this section we prove lemmata \ref{lem:equiv2} and \ref{lem:recursion greedy 2} which are paramount in proving the approximation ratio of Algorithm \textsc{GreedyIter}.
\lemTwoEquiv*
\begin{proof}
  Let $\SSS_{\lfloor k/(\NU-1) \rfloor}^*$ and $\SSS_{k}^*$ be  sets of cardinality $\lfloor k/(\NU-1) \rfloor$ and $k$, respectively, maximizing $\Phi^{\ge 1}_2(\III, \cdot)$.
  Furthermore, let $\TTT$ be a subset of $\SSS_{k}^*$ of cardinality $\lfloor k/(\NU-1)\rfloor$ that maximizes $\Phi^{\ge 1}_2(\III,\cdot)$. Lemma~\ref{lemma:standard greedy} yields that for $\eps'=\eps/2$, with probability at least $1-\delta/\NU$, we have that
  \begin{align}\label{formula:uppper bound T}
    \Phi^{\ge 1}_2(\III, \SSS^{[1]})
    \ge \big(1- \frac{1}{e} - \eps'\big)\cdot
    \Phi^{\ge 1}_2(\III, \SSS_{\lfloor k/(\NU-1) \rfloor}^*)
    \ge \big(1- \frac{1}{e} - \eps'\big)\cdot
    \Phi^{\ge 1}_2(\III, \TTT).
  \end{align}
  Using the submodularity and monotonicity of $\Phi^{\ge 1}_2(\III,\cdot)$ and the maximum choice of $\TTT$ yields
  \begin{align}\label{formula:lower bound T}
    \Phi^{\ge 1}_2(\III,\SSS_{k}^*)
    \le \left \lceil\frac{k}{\lfloor k/(\NU-1) \rfloor}\right\rceil\cdot
    \Phi^{\ge 1}_2(\III,\TTT)
    \le
    \frac{\NU}{1-\eps'}\cdot
    \Phi^{\ge 1}_2(\III,\TTT),
  \end{align}
  as $k\ge (\NU - 1)/\eps'$ implies $k - \NU +1 \ge (1-\eps')k$ and thus
  $
    \lceil\frac{k}{\lfloor k/(\NU-1) \rfloor}\rceil
    \le \lceil\frac{k(\NU-1)}{k - \NU +1}\rceil
    \le \lceil\frac{\NU-1}{1-\eps'}\rceil
    \le \frac{\NU}{1-\eps'}
  $.
  By combining the estimates from~\eqref{formula:uppper bound T} and~\eqref{formula:lower bound T}, we obtain
  \begin{align*}
    &\Phi^{\ge 1}_2(\III,\SSS^{[1]})
    \ge \frac{(1-\frac{1}{e}-\eps')(1-\eps')}{\NU} \cdot\Phi^{\ge 1}_2(\III,\SSS_{k}^*)
    \ge \frac{(1-\frac{1}{e}-\eps)}{\NU} \cdot \Phi^{\ge 1}_\NU(\III, \SSS_{\ge 1}^*),
  \end{align*}
  where the last step uses that $x(1-\eps')\ge x-\eps'$ for any $x\le 1$, the definition of $\eps'=\eps/2$, and the fact that
  $\SSS_{k}^*$ and $\SSS_{\ge 1}^*$ are both of size $k$ and thus
  $\Phi^{\ge 1}_2(\III,\SSS_{k}^*)\!\ge\! \Phi^{\ge 1}_2(\III,\SSS_{\ge 1}^*)\!\ge\! \Phi^{\ge 1}_\NU(\III, \SSS_{\ge 1}^*)$.
\end{proof}

\lemRecursionGreedyTwo*
\begin{proof}
  We use the shorthand $\Phi^{[\ell]}(\cdot)\!:=\!\Phi^{\ge \ell}_{\ell+1}(\RRR^{[\ell]}, \cdot)$ and similar $\Phi^{[\ell-1]}(\SSS)\!:=\!\Phi^{\ge \ell -1}_{\ell}(\RRR^{[\ell-1]}, \cdot)$.
  We define $U:=V\times[\ell+1]$ and partition it into sets of cardinality $\lfloor k/(\NU-1) \rfloor$ plus a possible set of smaller size. The number of sets in the partition is $t:=\lceil\frac{(\ell+1)|V|}{\lfloor k/(\NU-1) \rfloor}\rceil$. Denote these sets by $U_1,\ldots, U_t$.
  Now, let $\TTT$ be any set of cardinality $\lfloor k/(\NU-1) \rfloor$ that maximizes $\Phi^{[\ell]}(\cdot)$ and assume for the purpose of contradiction that
  \begin{align}\label{formula:TTT small}
    \Phi^{[\ell]}(\TTT) - \Phi^{[\ell]}(\emptyset)
    < \frac{1}{t}\cdot (\Phi^{[\ell - 1]}(\SSS^{[\ell-1]}) - \Phi^{[\ell]}(\emptyset)).
  \end{align}
  By definition of $\TTT$, we have $\Phi^{[\ell]}(\TTT)\ge \Phi^{[\ell]}(U_i)$ for $i\in[t]$.
  Hence, by submodularity we get
  \begin{align*}
    \Phi^{[\ell]}(U) \!-\! \Phi^{[\ell]}(\emptyset)
    \!\le\! \sum_{i=1}^t \!\big(\Phi^{[\ell]}(U_i) \!-\! \Phi^{[\ell]}(\emptyset)\big)
    \!\le\! t \big(\Phi^{[\ell]}(\TTT) \!-\! \Phi^{[\ell]}(\emptyset)\big)
    \!< \!\Phi^{[\ell - 1]}(\SSS^{[\ell-1]}) \!-\! \Phi^{[\ell]}(\emptyset).
 \end{align*}
 Since the maximum possible number of nodes, say $N$ are guaranteed to be reached by $\ell+1$ campaigns from sets $U$, we have however that $\Phi^{[\ell]}(V^{\ell+1})=N$.
 On the other hand we have $\Phi^{[\ell - 1]}(\SSS^{[\ell-1]})\le N$, which leads to a contradiction.
 From Lemma~\ref{lemma:standard greedy} we know that, with probability at least $1-\delta/\NU$, it holds that $\Phi^{[\ell]}(\SSS^{[\ell]})\ge (1 - 1/e - \eps')\cdot\Phi^{[\ell]}(\TTT)$ with $\eps'=\eps/2$.
 Thus, together with the converse of~\eqref{formula:TTT small}, we get
 \begin{align*}
   \Phi^{[\ell]}(\SSS^{[\ell]})
   &\ge \frac{1-\frac{1}{e}-\eps'}{t}\cdot(\Phi^{[\ell - 1]}(\SSS^{[\ell-1]}) - \Phi^{[\ell]}(\emptyset)) + \Phi^{[\ell]}(\emptyset)
   \ge \frac{1-\frac{1}{e}-\eps'}{t}\cdot\Phi^{[\ell - 1]}(\SSS^{[\ell-1]}).
 \end{align*}
 It remains to observe that $k\ge (\NU - 1)/\eps'$ implies $k - \NU +1 \ge (1-\eps')k$ and thus
 \begin{align*}
   t
   = \left\lceil\frac{(\ell+1)|V|}{\lfloor k/(\NU-1) \rfloor}\right\rceil
   \le \left\lceil\frac{(\ell+1)(\NU-1)|V|}{k - \NU + 1}\right\rceil
   \le \frac{2(\ell+1)(\NU-1)|V|}{(1-\eps')k}
 \end{align*}
 where the last inequality follows since $k\le \NU \cdot |V|$ and $\NU\ge 2$ yield that the argument of the ceil-function is at least 1, and thus the error due to rounding is upper bounded by a factor of 2. The choice of $\eps'=\eps/2$ leads the result.
\end{proof}

\section{Deferred Proofs for Section~\ref{section: correlated}}\label{appendix: correlated}

\correlatedclaims*
\begin{proof}
  \begin{enumerate}
        \item  Define $\TTT:= \{(v,0)| (v,i) \in \SSS^k\}$ and observe that $|\TTT|\le k$. For a given outcome $\XXX$ a node that contributes to $\Phi^{\ge 1}(\SSS^k)$ is either reached by at least $\NU$ campaigns in $\III$ or has to be reached by a node in $\SSS^k$.
        In this case, for the same $\XXX$ this node will also contribute to $\Psi(\TTT)$. Hence, we have $\Psi(\TTT) \ge \Phi^{\geq 1}(\SSS^k)$. The optimality of $\TTT^k$ concludes the proof.
        \item First observe that $\lceil \frac{k}{\lfloor k/\NU \rfloor} \rceil < \frac{k}{k/\NU - 1} + 1 \leq \frac{\NU+1}{1-\eps}$ by the assumption on $k$. Now, let $\TTT$ be a subset of $\TTT^k$ of size $\lfloor k/\NU \rfloor$ maximizing $\Psi$. By submodularity of $\Psi$, we have $\Psi(\TTT^k) \le \lceil \frac{k}{\lfloor k/\NU\rfloor} \rceil \Psi(\TTT)\le \frac{\NU+1}{1-\eps} \Psi(\TTT)$. Using the optimality of $\TTT^{\lfloor k/\NU\rfloor}$ concludes the proof.
        \item Since the cascade processes are completely correlated, given an outcome $\XXX$, assume that a node contributes to $\Psi(\TTT)$, then either it is reached by $\NU$ campaigns from $\III$ or it is reached by $\TTT$. In the former case, the same node also contributes to $\Phi^{\geq 1}(\SSS')$ as it is reached by $\NU$ campaigns from $\III$. In the later case, it will be reached by all campaigns in $[\NU]$ by $\SSS'$ and will therefore also contribute to $\Phi^{\geq 1}(\SSS')$.\qedhere
  \end{enumerate}
\end{proof}

\end{document}